\documentclass[graphicx,12pt]{article}
\usepackage[dvipdfm]{geometry}
\usepackage{caption,natbib}
\usepackage{epsfig,subfigure}
\usepackage[center]{titlesec}
\usepackage{booktabs}
\usepackage{epstopdf}
\usepackage{algorithm}
\usepackage{amsmath,amsthm,amssymb}
\usepackage{algorithmic}
\usepackage{color}
\usepackage[perpage,symbol*]{footmisc}
\pagestyle{myheadings} \markright{ } \oddsidemargin -0.0in
\evensidemargin -0.0in \marginparwidth 40pt \topmargin -0.1in
\marginparsep 10pt
\footskip  1.0in
\textheight 8.5in
\textwidth 6.5in
\newtheorem{theorem}{Theorem}
\newtheorem{lemma}{Lemma}
\newtheorem{condition}{Condition}

\newcommand{\tauo}{\tau_0}

\theoremstyle{definition}

\renewcommand{\baselinestretch}{1.6}

\parskip5pt

\begin{document}

\def\spacingset#1{\renewcommand{\baselinestretch}%
{#1}\small\normalsize} \spacingset{1.5}

\begin{center}
{\Large\bf \textsf{Linear spline index regression model: Interpretability, nonlinearity and dimension reduction}}

\vspace*{0.1in}

Lianqiang Qu$^1$, Long Lv$^1$, Meiling Hao$^2$ and Liuquan Sun$^{3}$\\
{\sl $^1$School of Mathematics and Statistics,
Central China Normal University,\\
Wuhan, Hubei, 430079, China}\\
{\sl $^2$ School of Statistics,
University of International Business and Economics,\\
Beijing, 100029, China}\\ 
{\sl $^3$ Academy of Mathematics and Systems Science, Chinese Academy of Sciences, \\
and School of Mathematical Sciences, University of Chinese Academy of Sciences,\\
Beijing 100190, P.R.China, slq@amt.ac.cn}
\end{center}

\noindent {\bf Abstract.}
Inspired by the complexity of certain real-world datasets,
this article introduces a novel flexible linear spline index regression model.
The model posits piecewise linear effects of an index on the response, with continuous changes occurring at knots.
Significantly, it possesses the interpretability of linear models, captures nonlinear effects similar to nonparametric models,
and achieves dimension reduction like single-index models.
In addition, the locations and number of knots remain unknown,
which further enhances the adaptability of the model in practical applications.
Combining the penalized approach and convolution techniques,
we propose a new method to simultaneously estimate the unknown parameters and the number of knots.
The proposed method allows the number of knots to diverge with the sample size.
We demonstrate that the proposed estimators can identify the number of knots with a probability approaching one
and estimate the coefficients as efficiently as if the number of knots is known in advance.
We also introduce a procedure to test the presence of knots.
Simulation studies and two real datasets are employed to assess the finite sample performance of the proposed method.


\vspace*{0.05in}

\noindent {\it  Keywords:} Convolution smoothing; Knot detection; Nonlinear effects; Piecewise linear model.

\spacingset{2.0}

\section{Introduction}
In the era of information explosion, vast amounts of data have been accumulated across various fields, including economics, medicine, and sociology.
In this context, we have access to
a real estate valuation dataset that includes two districts in Taipei City and two districts in New Taipei City.
The dataset consists of 414 records of real estate transactions,
with the variable of interest being the price per unit area of residential housing.
In line with the analysis conducted by \cite{yeh2018building}, we consider the following covariates:
the distance to the nearest MRT station (Meter), house age (Year), transaction date (Date),
and the number of convenience stores within walking distance in the vicinity (Number).

The aim is to investigate the effects of various covariates on the housing prices.
Figure \ref{Figure:real data}  provides insights into the relationship between the housing prices and an index
derived from a linear combination of Meter, Year, and Date.
The details of what  are included in the index are provided in Section \ref{section:real:data}.
The left panel of Figure \ref{Figure:real data} displays the estimated curve derived from the partial single-index regression model \citep{wang2010estimation}.
Meanwhile, the right panel showcases the fitted curve using the proposed method outlined in Section \ref{section:Estimate}.
The results indicate that the effects of the index on housing prices are nonlinear.
Specifically, the effects exhibit a significant change but remain continuous up to a certain point (knot).
More precisely, housing prices tend to remain stable when the index values are {less than} the knot.
However, once the index values surpass this point, the housing prices experience a rapid increase.
Identifying of this knot is crucial, as it facilitates a comprehensive analysis of the determinants
that influence the stability or rapid escalation in residential property prices.

In summary, for effective analysis of such data, a suitable model should be capable of capturing nonlinear effects of multidimensional covariates,
providing a clear interpretation of these effects, mitigating the curse of dimensionality, and accurately determining the locations and number of knots.
However, we have not yet identified a model that can simultaneously meet all the desired features.
Therefore, developing such a model remains to be an urgent but challenging task.

\begin{figure}
\centering
\subfigure{\includegraphics[width=0.45\textwidth]{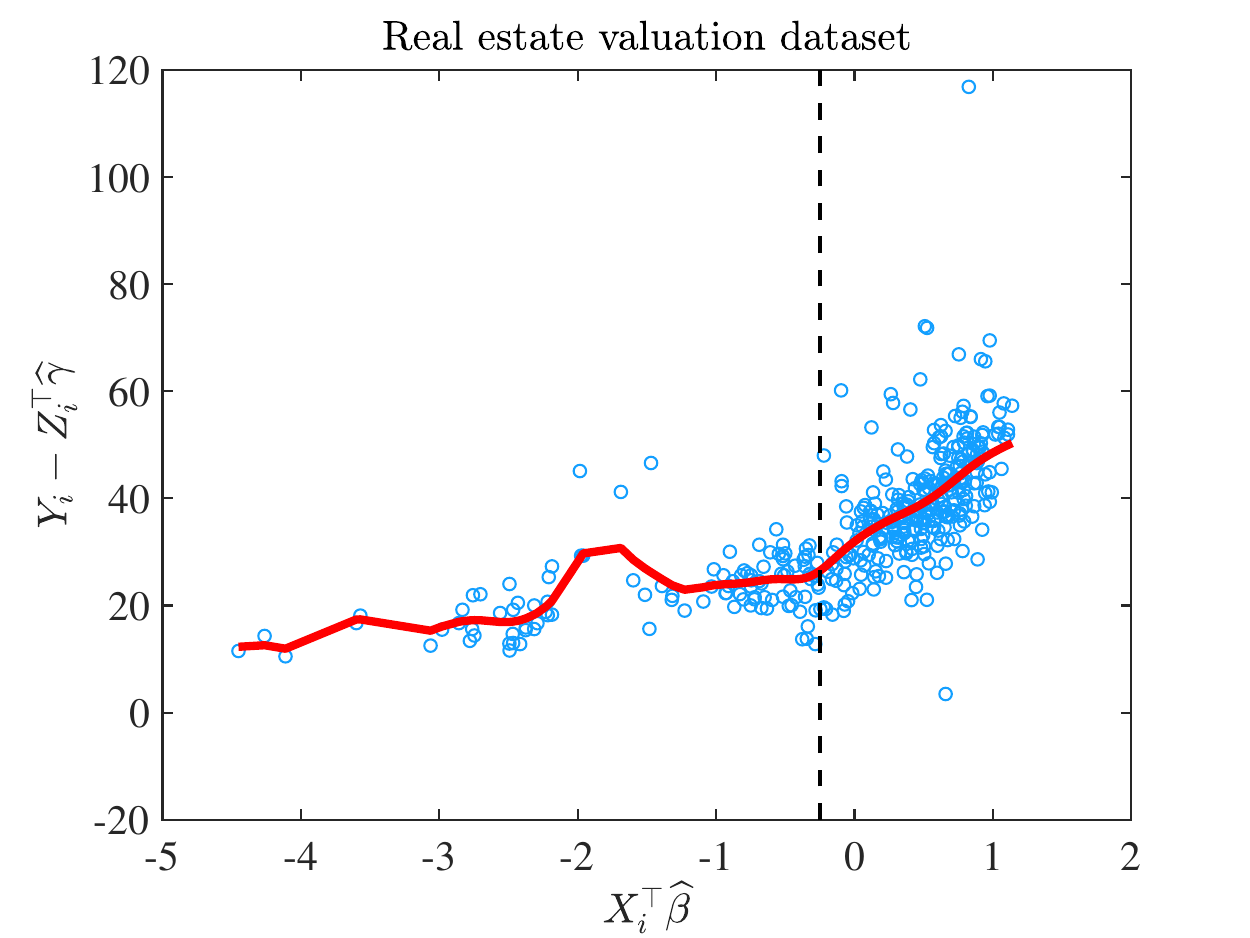}}
\subfigure{\includegraphics[width=0.45\textwidth]{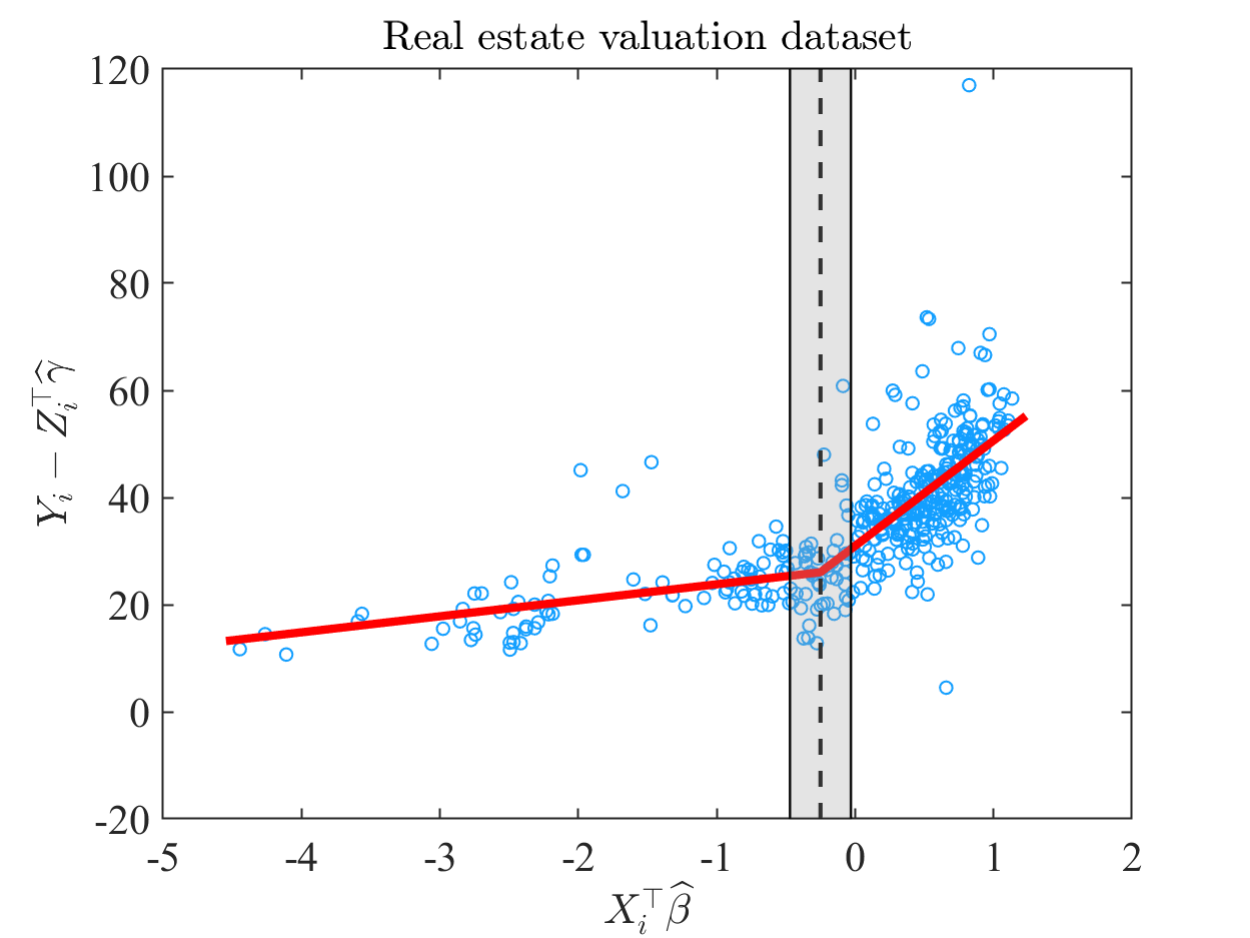}}
\caption{{\small The analysis of the real estate valuation dataset.
The red lines in the left and right panels are the estimated curves via the single-index regression model \citep{wang2010estimation}
and model \eqref{SIR-model}, respectively.
The grey region represents the $95\%$ confidence intervals of the estimates of the knot locations.}}\label{Figure:real data}
\end{figure}

In this article, we propose a novel linear spline index regression (LSIR) model for data analysis:
\begin{align}\label{SIR-model}
Y_i=\gamma_0+\alpha_0X_i^\top\beta+\sum_{m=1}^{M_n^*}\alpha_m f(X_i^\top\beta,\tau_m)+Z_i^\top\gamma+\epsilon_i,
\end{align}
where $Y_i$ denotes the response variable, $X_i=(X_{i1},\ldots,X_{id_1})^\top$
and $Z_i=(Z_{i1},\ldots,Z_{id_2})^\top$ denote the vectors of covariates, $\epsilon_i$ is the random error
and $f(x,\tau)=(x-\tau)I(x\ge \tau)$ with $I(\cdot)$ being the indicator function.
Here and in what follows, we call $X_i^\top\beta$ an index that aggregates the dimension of $X_i$.
The model allows for changes in slopes at unknown knot locations
$-\infty<\tau_1<\tau_2<\dots<\tau_{M_n^*}<\infty$,
and $M_n^*$ denotes the number of the knots.
The slope with respect to the index $X_i^\top\beta$ in the $m$th segment is denoted by $\mu_m=\sum_{k=0}^m\alpha_k$.
The nonzero coefficients $\alpha_m\neq 0~(m=0,1,\dots,M_n^*)$ capture differences in slopes between adjacent $m$th and $(m+1)$th segments.
The vectors $\beta=(\beta_1,\dots,\beta_{d_1})^\top\in\mathbb{R}^{d_1}$
and $\gamma=(\gamma_1,\dots,\gamma_{d_2})^\top\in\mathbb{R}^{d_2}$
represent the unknown coefficients for the covariates  $X_i$ and $Z_i,$ respectively,
and $\gamma_0$ denotes the intercept.

The proposed model is appealing since it melds the chief advantages of linear regression models, nonparametric models and single-index models.
Specifically, model \eqref{SIR-model} maintains interpretability similar to linear regression models,
where nonzero components of $\beta$ indicate significant predictors of the response variable.
The coefficients $\mu_m\beta_j$ of $X_{ij}$ in the $m$th segment can be interpreted as the rate of change in the response $Y_i$ associated with a unit increase in $X_{ij}$,
holding all other covariates fixed.
In addition, the locations and number of knots are not predetermined but can be data-driven estimated.
This feature enhances the flexibility and adaptability of model \eqref{SIR-model},
allowing it to effectively capture the underlying nonlinear effects of multidimensional covariates.
Furthermore, model \eqref{SIR-model} achieves dimension reduction by collapsing the influence of the covariates $X_i$ into a single-index $X_i^\top \beta$,
thus avoiding the curse of dimensionality.
In summary, the LSIR model incorporates all the aforementioned desired features.

The proposed method is closely related to two important models: (partial-linear) single-index models and linear spline regression models.
Single-index models have been widely used to analyze the nonlinear effects of an index on the response over the last few decades
because of their convenient in dimension reduction;
see e.g., \cite{powell1989semiparametric}, \cite{hardle1993optimal}, \cite{carroll1997generalized}, \cite{hristache2001direct}, \cite{wang2010estimation},
\cite{ma2013doubly} and the references therein.
However, single-index models assume that the slopes on the adjacent segments change continuously,
therefore they cannot be used for estimating and inferring the locations and number of knots.
This distinction sets our model apart from single-index models.
In addition, single-index models may lack interpretability within each segment due to the presence of an unknown link function; see the left panel of Figure \ref{Figure:real data}.
Linear spline models are alternatively known as kink regression models (\citealp{card2012nonlinear,hansen2017regression}),
bent-line models (\citealp{li2011bent}) and broken-line/stick models (\citealp{muggeo2003estimating}).
It is a piecewise linear regression model,
where the regression function is continuous but the slope exhibits discontinuities at various knots. 
This model, with a known number of knots, has been extensively studied in the existing literature
(e.g., \citealp{tishler1981new, muggeo2003estimating,li2011bent,card2012nonlinear,das2016fast,hansen2017regression,zhang2017panel,yang2023estimation}).
Studies on estimating the number of knots are still limited.
\cite{muggeo2011efficient} proposed utilizing the segmented and Lars algorithms to estimate the number of knots,
while \cite{zhong2022estimation} developed estimation and inference methods for multi-kink quantile regression
and introduced a BIC-type procedure for determining the number of knots.
Recently, the multi-kink quantile regression method has been developed in various contexts
\citep{wan2023multikink,SUN2024105304}.
Nevertheless, current methods primarily focus on incorporating the evolving effects of a single continuous variable,
such as time or age, on the response variable, rather than multidimensional covariates.
Furthermore, the aforementioned methods assume a fixed number of knots,
and are not suitable for situations where the number of knots increases with the sample size.
This is exactly the case we are considering in our work.

Hence, the estimation and inference of unknown parameters in the LSIR model present significant challenges that necessitate the development of new techniques,
particularly when the number of knots is unknown and can scale with the sample size.
The main contributions of our work are outlined as follows.
\begin{itemize}
\item We provide a general framework for analyzing the knot effects of multidimensional covariates.
The major technical challenges in fitting model \eqref{SIR-model} include the fact that the objective function
is nondifferentiable with respect to $\tau_m$ and $\beta$, and that the number of knots is unknown.
In addition, the presence of an unknown parameter $\beta$ in the index leads to the scale unknown,
which further complicates the identification of the  knot locations compared to kink regression models.
To tackle these challenges, we develop a new method that combines penalized approaches
and convolution techniques to simultaneously estimate the unknown parameters, as well as the locations and number of knots.
Here, the convolution is a local smoothing procedure,
which shares the similar spirit of knot estimation in kink regression models
(e.g., \citealp{tishler1981new,das2016fast}).
However, our method differs from the existing methods in several aspects.
Firstly, \cite{das2016fast} approximated the non-smoothing function with a quadratic function,
which can be regarded as a specific instance of the convolution smoothing approach.
In addition, we need a shrinkage procedure for the estimation of the number of knots.
This requires a proper decay of the smoothing parameter (bandwidth) in the penalized procedure.
Large bandwidths cannot achieve the optimal statistical rate, while small bandwidths are ineffective in the smoothing approximation stage.
In contrast to the existing literature,
our key observation is that the bandwidth should adapt not only to the sample size
but also to the number of knots to strike a balance between the bias and smoothness.

\item We establish the consistency and the asymptotic normality for the proposed estimators.
Specifically, we thoroughly analyze the size of the number of knots, allowing it to diverge with the sample size.
This distinguishes our theoretical analysis from previous studies,
such as \cite{zhong2022estimation}, \cite{yang2023estimation} and \cite{das2016fast},
which are restricted to finite-knot scenarios.
However, it poses challenges in determining the vanishing rate of the bandwidth.
To address this issue, we derive a approximate inequality between the smoothed and unsmoothed functions
to enhance our theoretical analysis.
Combining the penalized method with the approximate inequality,
we establish that our estimators can identify $M_n^*$ with probability tending to one,
and the unknown parameters can be estimated as efficiently as when $M_n^*$ is known.

\item We develop a test procedure to verify the existence of knots in the LSIR model,
and establish its asymptotic distributions under the null and local alternative hypotheses.
Our findings demonstrate that the power of the test is affected by both the signal's magnitude and the number of knots.
In particular, the power converges to one if there is at least one
$\alpha_m~(1\le m\le M_n^*)$ that diverges with the sample size $n,$ or if the number of knots tends to infinity as $n$ approaches infinity.
 This advancement in our results surpasses the previous works of \cite{hansen2017regression} and \cite{zhong2022estimation},
where the power is only determined by the magnitude of the signal.
\end{itemize}

The paper is structured as follows. Section \ref{section:Estimate} presents the LSIR model and estimation approach.
In Section \ref{section:theorem}, we derive the oracle properties of the proposed estimators.
Section \ref{section:testing} introduces a procedure for testing the existence of knots.
In Section \ref{section:simulations}, we report the results of simulation studies conducted to evaluate the performance of the proposed method.
Section \ref{section:real:data} illustrates the application of our method on two real datasets.
Section \ref{section:discussion} gives concluding remarks. The technical details are included in the online Supplementary Material.

\section{Estimation methods}\label{section:Estimate}

In this section, we develop an estimation procedure for the parameters presented in model \eqref{SIR-model}.
To ensure the identification of $\alpha_m, \tau_m$ and $\beta$,
we assume that the components of ${X_i}$ are continuously distributed random variables.
Furthermore, we assume that the support of ${X_i}$ does not lie within any proper linear subspace of $\mathbb{R}^{d_1}$.
In addition, we set the first component $\beta_1$ of ${\beta}$ to be one.
Such identification conditions  are commonly required under single-index models (\citealp{ichimura1993semiparametric}).
Then, we write ${\beta}=(1,{\beta}_{(-1)}^\top)^\top$ and ${X_i}=(X_{i1},{\widetilde X_{i}}^\top)^\top$,
where ${\beta}_{(-1)}=(\beta_2,\ldots,\beta_{d_1})^\top$ and ${\widetilde X}_i=(X_{i2},\ldots,X_{id_1})^\top$.
Let $\widetilde Z_i=(1,Z_i^\top)^\top$, $\eta=(\gamma_0,\gamma^\top)^\top$,
 ${\alpha}(M)=(\alpha_0, {\alpha}_{(-0)}(M)^\top)^\top$ with ${\alpha}_{(-0)}(M)=(\alpha_1,\dots,\alpha_{M})^\top$
and ${\tau}(M)=(\tau_1,\dots,\tau_{M})^\top.$
Define ${\theta}(M)=({\alpha}_{(-0)}(M)^\top,{\tau}(M)^\top,\alpha_0,{\beta}_{(-1)}^\top,\eta^\top)^\top$.
Let the superscript ``$*$" denote the true parameter values under which the data are generated and subscript ``$o$"  denote the oracle case,
namely the number of knots is known. Define  $M_n^*$ as the true number of knots,
and write ${\theta}_o={\theta}(M_n^*)$,
${\alpha}_{o}^*={\alpha}^*(M_n^*),$ ${\tau}_{o}^*={\tau}^*(M_n^*)$ and ${\theta}_{o}^*={\theta}^*(M_n^*).$

The observations consist of $n$ independent and identically distributed samples from model \eqref{SIR-model},
denoted by $\{(X_i, Z_i, Y_i): 1\le i\le n\}$.
When $M_n^*$ is known, we can obtain an estimate of ${\theta}_o^*$ using the least squares method, that is,
\begin{align}\label{Oracle:est}
{\widehat\theta}_o=\text{arg}\min_{{\theta}_o}\frac{1}{2}\sum_{i=1}^n\Big\{Y_i-\widetilde{Z}_i^\top{\eta}-\alpha_0{X}_i^\top{\beta}-
\sum_{m=1}^{M_n^*}\alpha_mf({X}_i^\top{\beta},\tau_m)\Big\}^2.
\end{align}
We refer to ${\widehat\theta}_o$ as the oracle estimator because it is obtained when $M_n^*$ is known.
However, there are practical challenges in obtaining ${\widehat\theta}_o$.
First, it requires $M_n^*$ known in advance,  which is often impractical.
Second, the function $f(x,\tau)$ is not differentiable at point $\tau,$
 making the minimization of \eqref{Oracle:est} time-consuming.

To overcome the aforementioned challenges, we develop a novel method to simultaneously estimate ${\theta}_o^*$ and $M_n^*$,
which integrates penalized methods with convolution techniques.
Specifically, we approximate $f(x,\tau)$ by
\begin{align*}
q_n(x,\tau)=\int_{-\infty}^{+\infty}f(v,\tau)\mathcal{K}_{\delta_n}(v-x)dv,~~~x\in\mathbb{R},
\end{align*}
where $\mathcal{K}_{\delta_n}(x)=\delta_n^{-1}\mathcal{K}(x/\delta_n)$ with
$\mathcal{K}(x)$ being a kernel function and $\delta_n$ being the smoothing parameter.
Convolution plays a role of random smoothing
in the sense that $q_n(x,\tau)=\mathbb{E}[f(x+Q\delta_n,\tau)]$, where $Q$ denotes a random variable with density function $\mathcal{K}(x)$.
To better understand this smoothing mechanism, we compute the smoothed function $q_n(x,\tau)$
explicitly for several widely used kernel functions.
In what follows, we write $q_n(x)=q_n(x,0)$ for short.
Please see Figure \ref{figure:different-qn} for a visualization of convolution smoothing methods.

\begin{figure}
\centering
\subfigure{\includegraphics[width=0.45\textwidth]{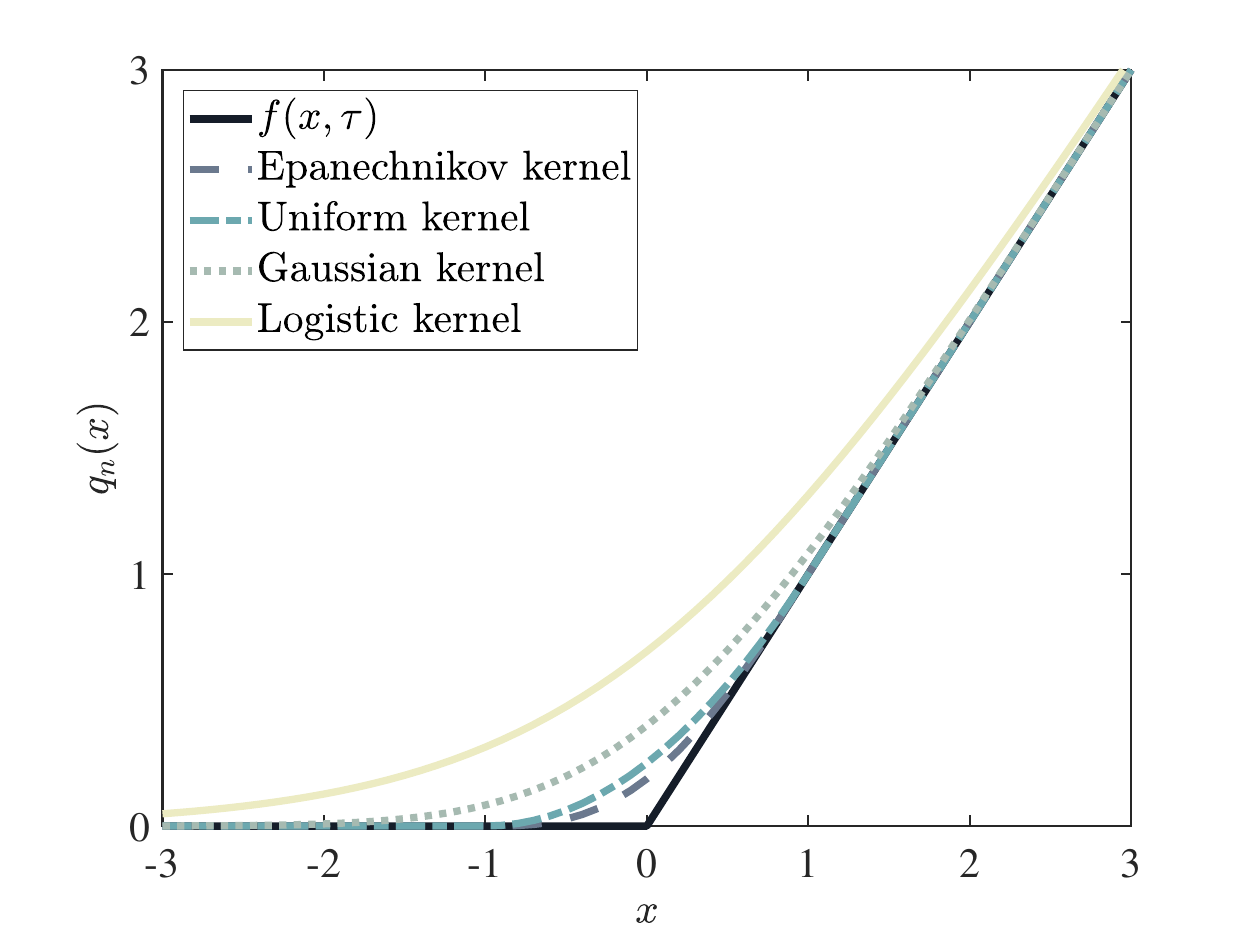}}
\subfigure{\includegraphics[width=0.45\textwidth]{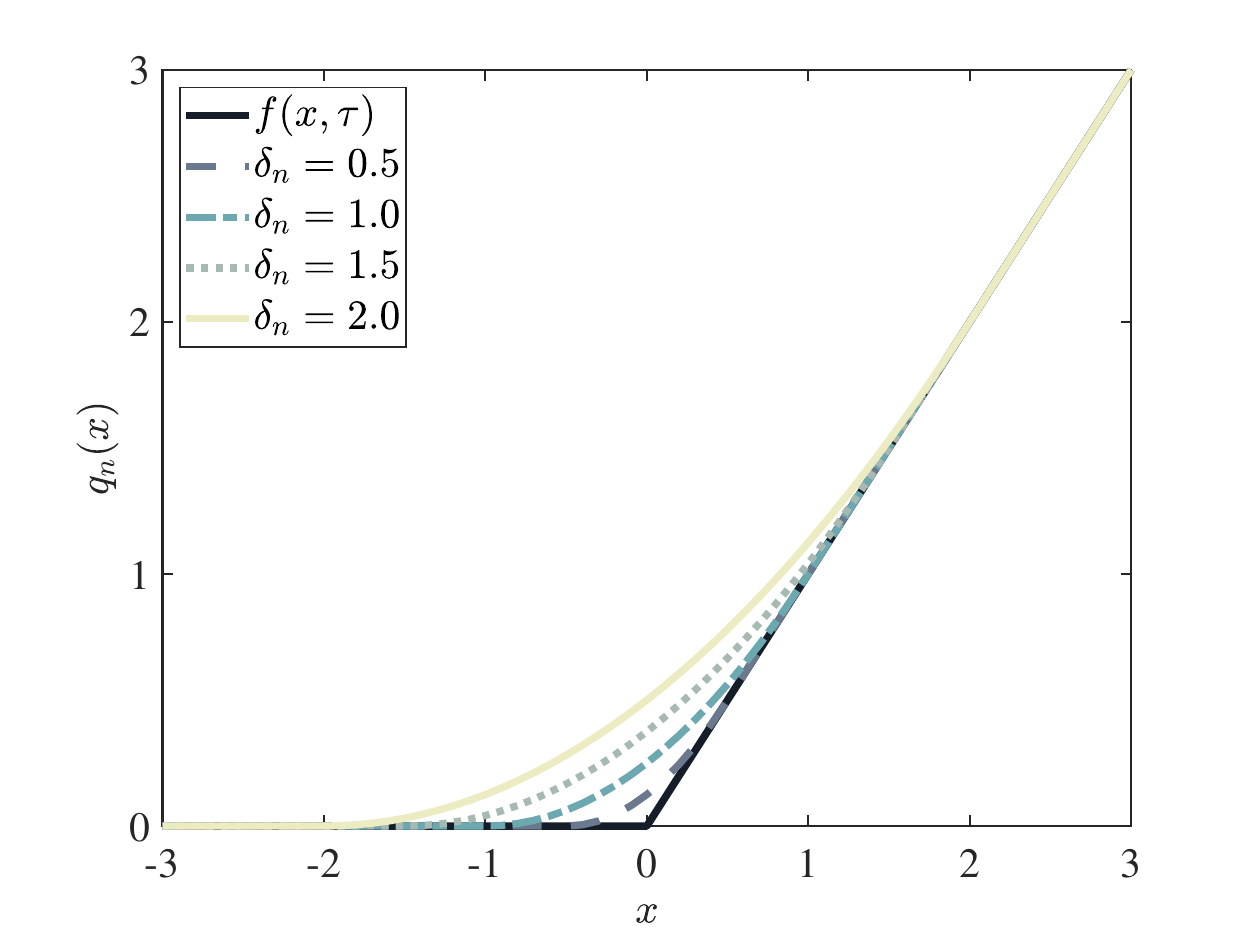}}
\caption{{\small Illustration of the smoothed approximations of the function $f(x,\tau)$.
The left panel displays the smoothed functions $q_n(x)$ using the Uniform kernel, Logistic kernel, Gaussian kernel, and Epanechnikov kernel with $\delta_n=1$ and $\tau=0$.
The right panel showcases the approximation function based on the Uniform kernel for various values of $\delta_n$.}}\label{figure:different-qn}
\end{figure}

\begin{itemize}
\item [1.] (Uniform kernel) Let $\mathcal{K}(x)=(1/2)I(|x| \le 1)$,
which is the density function of the uniform distribution on $[-1,1]$.
Then, we obtain $q_n(x)=(x+\delta_n)^2I(|x|\le \delta_n)/(4\delta_n) +xI(x>\delta_n)$,
which is exactly the  the smoothed function employed to approximate $f(x,\tau)$ in \cite{das2016fast}.

\item [2.] (Epanechnikov kernel) When $\mathcal{K}(x)=(3/4)(1-x^2)I(|x|\le 1)$,
the resulting smoothed function is $q_n(x)=(-x^4/\delta_n^3+6x^2/\delta_n+8x+3\delta_n)I(|x|\le \delta_n)/16+xI(x>\delta_n)$.

\item[3.] (Logistic kernel) In the case of the logistic kernel $\mathcal{K}(x)=e^{-x}/(1+e^{-x})^2$,
the resulting smoothed function is $q_n(x)=\delta_n\ln(1+e^{x/\delta_n})$,
which is the softplus activation function
and has been widely used as a smooth activation function in the context of artificial neural networks \citep{glorot2011deep}.
\item [4.] (Gaussian kernel) Let  $\phi(x)$ and $\Phi(x)$ denote the density function
and the cumulative distribution function of the standard normal distribution, respectively.
When $\mathcal{K}(x)=\phi(x)$, $q_n(x)=x\Phi(x/\delta_n)+\delta_n\phi(x/\delta_n)$.
\end{itemize}

The parameter $\delta_n$ controls the approximation level of $q_n(x,\tau)$,
and choosing an appropriate $\delta_n$ is vital for accurately estimating unknown parameters.
As to the approximation level $\delta_n$,
it decreases with $n$ so that the bias of the estimators diminishes to zero as $n\rightarrow \infty$.
However, minimizing \eqref{Oracle:est} becomes time-consuming with a very small $\delta_n$.
Thus, there is a trade-off between the bias of estimations and the computational cost.
We delve into the decreasing rate of $\delta_n$ in more detail in Section \ref{section:theorem}.


Next, we consider to estimate $M_n^*$. Intuitively, adding more knots allows for greater flexibility, but it also increases the risk of overfitting the data.
However, too few knots can result in an underfitting model that fails to capture the underlying relationships adequately.
Therefore, there exists a balance between model flexibility and complexity.
To be more specific, we consider a simple example of model \eqref{SIR-model}:
\begin{align}\label{Example}
 Y_i=& \widetilde{Z}_i^\top{\eta}+\alpha_0{X}_i^\top{\beta}+\alpha_1f({X}_i^\top{\beta},\tau_1)+\alpha_2f({X}_i^\top{\beta},\tau_2)+\epsilon_i\nonumber\\
=& \widetilde{Z}_i^\top{\eta}+\alpha_0{X}_i^\top{\beta}I({X}_i^\top{\beta}\leq \tau_1)+\big\{(\alpha_0+\alpha_1){X}_i^\top{\beta}-\alpha_1\tau_1\big\}I(\tau_1< {X}_i^\top{\beta}\leq\tau_2)\nonumber \\
 &+\big\{(\alpha_0+\alpha_1+\alpha_2){X}_i^\top{\beta}-(\alpha_1\tau_1+\alpha_2\tau_2)\big\}I({X}_i^\top{\beta}>\tau_2)+\epsilon_i \nonumber\\
=& \widetilde{Z}_i^\top{\eta}+\alpha_0{X}_i^\top{\beta}I({X}_i^\top{\beta}\leq \tau_1)+\big\{(\alpha_0+\alpha_1){X}_i^\top{\beta}-\alpha_1\tau_1\big\}I({X}_i^\top{\beta}>\tau_1)+\epsilon_i,
\end{align}
where the last equality holds if $\alpha_2=0$.
We see that if $\tau_2$ is not a knot,  the slope of ${X}_i^\top{\beta}$ on $\tau_1< {X}_i^\top{\beta}\leq\tau_2$ must be the same as that on ${X}_i^\top{\beta}> \tau_2$,
which implies that $\alpha_2=0.$
However, if $\alpha_2=0,$ then $\tau_2$ is unidentifiable.
But the last equality in \eqref{Example} still holds by setting $\tau_2=\tau_{\infty},$
where $\tau_{\infty}$ is chosen such that $I({X}_i^\top{\beta}>\tau_\infty)=0$ almost surely.
It is feasible in practice since we can set, for example, $\tau_{\infty}=\max_{1\le i\le n}|{X}_i^\top{\beta}|+1$.
These facts allow us to consider a penalized method to estimate $M_n^*.$
Specifically, let $M_n$ be a prespecified sequence that can diverge with $n$.
We propose the following penalized smoothing least squares method to simultaneously estimate ${\theta}_o^*$ and $M_n^*:$
\begin{align}\label{SP:est}
{\widehat\theta}_{\lambda_n}=\text{arg}\min_{{\theta}}\frac{1}{2}\sum_{i=1}^n\Big\{Y_i-\widetilde{Z}_i^\top{\eta}-\alpha_0{X}_i^\top{\beta}-
\sum_{m=1}^{M_n}\alpha_mq_n({X}_i^\top{\beta},\tau_m)\Big\}^2+n\sum_{m=1}^{M_n} p_{\lambda_n,t}(|\alpha_m|),
\end{align}
where ${\theta}={\theta}(M_n)$ and ${\widehat\theta}_{\lambda_n}=({\widehat\alpha}_{(-0),\lambda_n}^\top,{\widehat\tau}_{\lambda_n}^\top,
\widehat\alpha_{0,\lambda_n},{\widehat\beta}_{(-1),\lambda_n}^\top,{\widehat\eta}_{\lambda_n}^\top)^\top$
with $\widehat\tau_{m,\lambda_n}=\tau_{\infty}$ if $\widehat\alpha_{m,\lambda_n}=0.$
Here, $\widehat\tau_{m,\lambda_n}$ and $\widehat\alpha_{m,\lambda_n}$ are the $m$th component of ${\widehat\tau}_{\lambda_n}$ and ${\widehat\alpha}_{(-0),\lambda_n}.$
In addition, $p_{\lambda_n,t}(u)$ denotes a penalty function, $\lambda_n$ is a tuning parameter,
and $t$ is a parameter that controls the concavity of the penalty function.
The magnitude of $\lambda_n$ controls the complexity of the model.
A larger value of $\lambda_n$ indicates heavier shrinkage for ${\alpha}.$

We develop an iterative algorithm to obtain  $\widehat \theta_{\lambda_n}$.
To conserve space, we have provided the detailed information in the Supplementary Material.
Let $\widehat{\mathcal{S}}_{\lambda_n}=\{m:\widehat \alpha_{m,\lambda_n}\neq 0, 1\le m\le M_n\}$.
Denote $\widehat M_{n,\lambda_n}=\text{Card}(\widehat{\mathcal{S}}_{\lambda_n})$  as an estimator of $M_n^*,$
where $\text{Card}(A)$ denotes the cardinality of any set $A.$
In what follows, we omit the dependence of ${\widehat\theta}_{\lambda_n}, \widehat{\mathcal{S}}_{\lambda_n}$
and $\widehat M_{n,\lambda_n}$ on the tuning parameter $\lambda_n$  for the sake of brevity.

In this study, our focus is primarily on
the smoothly clipped absolute deviation (SCAD) penalty and the minimax concave penalty (MCP).
The SCAD penalty is introduced in \cite{fan2001variable} and defined by
\begin{align*}
    p_{\lambda_n,t}(u)=\lambda_n\int_0^{|u|}\min\big\{1,(t-x/\lambda_n)_+/(t-1)\big\}dx,~~t>2.
\end{align*}
The MCP penalty is introduced in \cite{zhang2010nearly} and defined by
\begin{align*}
    p_{\lambda_n,t}(u)=\lambda_n\int_0^{|u|}\big(1-x/(t\lambda_n)\big)_+dx,~~t>1.
\end{align*}
Here, $(x)_+=\max\{x,0\}$. Following \cite{fan2001variable} and \cite{zhang2010nearly}, we treat $t$ as a fixed constant.

\section{Theoretical results}\label{section:theorem}
In this section, we establish the consistency and asymptotic normality of the proposed estimators.
The proofs are given in the Supplementary Material.
We consider the following conditions.

\begin{condition}  \label{C1} There are $M_n^*$ distinct knots that satisfy $\tau_{1}^*<\tau_{2}^*<...<\tau_{M_n^*}^*$
and $\mathbb{P}(\tau_m^*<{X}_i^\top{\beta}^*\le \tau_{m+1}^*)>0$ for $m=0,\dots,M_n^*$,
with $\tau_0^*=-\infty$ and $\tau_{M_n^*+1}^*=\infty$.
\end{condition}

\begin{condition}  \label{C2}
There exist some positive constants $\kappa_0$ and $\kappa_1$ such that $|\alpha_m|<\kappa_0$
and $|\sum_{j=0}^m\alpha_{j}^*|<\kappa_1$ for $0\le m\le M_n^*.$
Additionally, $\min_{1\leq j\leq M_n^*}|\alpha_{j}^*|/\lambda_n\to\infty.$
\end{condition}

\begin{condition}  \label{C3}
$\mathbb{E}(X_{ij}^4|{X}_i^\top{\beta}^*)<\infty$ and $\mathbb{E}(Z_{ij}^4|{X}_i^\top{\beta}^*)<\infty.$
\end{condition}

\begin{condition}  \label{C4} $\mathbb{E}(\epsilon_i|{X}_i,{Z}_i)=0$,
$\mathbb{E}(\epsilon_i^2|{X}_i,{Z}_i)=\sigma^2<\infty$
and $\mathbb{E}(\epsilon_i^4|{X}_i,{Z}_i)<\infty$.
\end{condition}

\begin{condition}  \label{C5}
The kernel function $\mathcal{K}(x)$ has bounded support and satisfies that $\mathcal{K}(x)=\mathcal{K}(-x)$
and $\int \mathcal{K}(x)dx=1$.
\end{condition}

\begin{condition}  \label{C6} (i) $p_{\lambda_n,t}(u)$ is a symmetric function of $u$,
and it is nondecreasing and concave in $u$ for $u\in[0,\infty)$;
(ii) $p_{\lambda_n,t}(u)$ is differentiable in $u\in (0,\infty)$ with $\lambda_n^{-1}p'_{\lambda_n,t}(0+)>0;$
and (iii) there exist some positive constants $\kappa_2$ and $\kappa_3$ such that
$|p''_{\lambda_n,t}(u_1)-p''_{\lambda_n,t}(u_2)|\leq \kappa_2|u_1-u_2|$
for any $u_1$ and $u_2>\kappa_3\lambda_n.$
Here, $p'_{\lambda_n,t}(u)$ and $p''_{\lambda_n,t}(u)$ respectively represent the first and second derivatives of the penalty function $p_{\lambda_n,t}(u)$.
\end{condition}

\begin{condition}  \label{C7}
$\lambda_n \to 0$ and $\sqrt{n\lambda_n^2/s_n} \to \infty$ as $n\to\infty$,
where $s_n=1+d_1+d_2+2M_n.$
\end{condition}

Condition \ref{C1} is mild and assumes that the knots are ordered and distinct.
The first part of condition \ref{C2} imposes a bound on the slope of ${X}_i^\top{\beta}^*$ within each interval $(\tau_{m}^*,\tau_{m+1}^*]$ for $0\le m\le M_n^*$.
This condition is necessary to control the growth of the estimators.
The second part of condition \ref{C2}  places a requirement on the size of the non-zero coefficients $\alpha_m^*$.
It ensures that the non-zero coefficients do not converge to zero too rapidly, which is necessary for achieving the oracle property.
Intuitively, if some non-zero coefficients converge to 0 too fast,
it becomes challenging to estimate them accurately.
This condition relaxes the requirement in \cite{das2016fast},
where $|\alpha_m^*|$ is bounded below by a constant independent of the sample size $n$ for $0\le m\le M_n^*$.
Conditions \ref{C3} and \ref{C4} are standard assumptions in linear regression models.

Condition \ref{C5} specifies that $\mathcal{K}(x)$ is a symmetric density function,
and is satisfied by many kernel functions, such as the uniform kernel and the Epanechnikov kernel.
Under condition \ref{C5}, the convolution gives a smooth approximation to the function $f(x,\tau)$.
To see this, we define the support of $\mathcal{K}(x)$ as $[-1,1]$ and set $\tau=0$.
A direct calculation yields $q_n(x)=\int_{-1}^{x/\delta_n} (x-u\delta_n)\mathcal{K}(u)du$ if $|x|\le \delta_n$
and $q_n(x)=f(x,0)$ otherwise. Moreover, $q_n'(x)=\int_{-1}^{x/\delta_n}\mathcal{K}(u)du$ and $q_n''(x)=\mathcal{K}(x/\delta_n)/\delta_n$.
These imply that $q_n(x)$ is Lipschitz continuous, that is, $|q_n(x_1)-q_n(x_2)|\le |x_1-x_2|$ for any $x_1, x_2\in \mathbb{R}$.
Furthermore, it can be shown that $|q_n(x) - f(x,0)| \le O(\delta_n)$.
Therefore, a smaller bandwidth results in a more accurate approximation between $q_n(x)$ and $f(x,0)$.
This approximate inequality is crucial for our theoretical analysis.
Note that in condition \ref{C5}, we assume the boundedness of the support of $\mathcal{K}(x)$,
which is primarily made to simplify the proof of subsequent theorems.

Condition \ref{C6} describes the properties of a folded-concave penalty function  (\citealp{fan2004nonconcave}).
It consists of three parts.
The first part states that the penalty function is symmetric, non-decreasing, and concave in its argument.
The second part requires differentiability of the penalty function with a positive derivative at zero.
This condition ensures that the penalty function is singular at the origin,
leading to sparsity in the estimated coefficients. The third part imposes a smoothness condition on the penalty function,
controlling the  change rate of its second derivative.
Many popular folded-concave penalty functions, such as SCAD and MCP, satisfy condition \ref{C6}.

Condition \ref{C7} is a mild condition that determines the vanishing rate of the tuning parameter $\lambda_n$.
These conditions collectively ensure the consistency and asymptotic normality of the proposed estimators,
providing theoretical guarantees for their performance as the sample size increases.

Let $a_n=\max_{1\leq m\leq M_n}\big\{p'_{\lambda_n,t}(|\alpha_{m}^*|):\alpha_{m}^*\neq0\big\}$
and $b_n=\max_{1\leq m\leq M_n}\big\{p''_{\lambda_n,t}(|\alpha_{m}^*|):\alpha_{m}^*\neq0\big\}.$
Define $\mathcal{S}^*=\{m: \alpha_m^*\neq 0,~ 1\le m\le M_n\}.$
The following theorem establishes the existence and consistency of the penalized smoothing least squares estimator.
\begin{theorem}\label{Theorem:Consistenncy}
Suppose that conditions \ref{C1}-\ref{C7} hold.
If $b_{n}\rightarrow 0$, $\sqrt{n}s_n^2\delta_n\to0$ and $s_n^3/(n\delta_n)\to0$ as $ n \rightarrow \infty$,
then there exists a local minimizer  ${\widehat\theta}$ defined in \eqref{SP:est} such that
\begin{itemize}
    \item[](i)
    $\mathbb{P}(\widehat{\mathcal{S}}=\mathcal{S}^*)\rightarrow 1$ as $n\rightarrow\infty$.
    \item[](ii) $\|{\widehat\theta}_1-{\theta}_o^*\|=O_p\big(s_n^{1/2}\big(n^{-1/2}+a_n\big)\big)$, where  ${\widehat\theta}_1=\big(\widehat{{\alpha}}_{(-0)}(M_n^*)^\top,\widehat{{\tau}}(M_n^*)^\top,\widehat\alpha_0,{\widehat\beta}_{(-1)}^\top,{\widehat\eta}^\top\big)^\top$ is the subvector of ${\widehat\theta}$,
    and $\|\cdot\|$ denotes the Euclidean norm.
\end{itemize}
\end{theorem}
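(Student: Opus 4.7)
The plan is to adapt the two-stage Fan--Peng (2004) argument for nonconcave penalized estimation with diverging parameter dimension, modified to handle the convolution smoothing and the fact that the knot locations $\tau_m$ and the index coefficient $\beta$ enter nonlinearly through $q_n(X_i^\top\beta,\tau_m)$. Write $Q_n(\theta)$ for the penalized objective in \eqref{SP:est} and set $r_n=s_n^{1/2}(n^{-1/2}+a_n)$. In Stage~1, I would show $\sqrt{n/s_n}$-rate existence of a local minimizer by establishing that for any $\epsilon>0$ there is $C>0$ with
\begin{align*}
\mathbb{P}\Bigl\{\inf_{\|u\|=C}Q_n(\theta_o^*+r_nu)>Q_n(\theta_o^*)\Bigr\}\ge 1-\epsilon,
\end{align*}
where $\theta_o^*$ is embedded into $\mathbb{R}^{s_n}$ by appending zeros in the inactive knot slots (with $\tau_m=\tau_\infty$ there). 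In Stage~2, I would deduce selection consistency by showing any such local minimizer must force $\widehat\alpha_m=0$ for $m\notin\mathcal{S}^*$, which then yields part~(i) and, combined with Stage~1, part~(ii).

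For Stage~1 I would Taylor expand the smoothed quadratic loss $L_n(\theta)$ in $\theta$ around $\theta_o^*$. The gradient $\nabla L_n(\theta_o^*)$ decomposes into a martingale score in $\epsilon_i$ of order $O_p(\sqrt{ns_n})$ and a bias term coming from $q_n(\cdot,\tau)-f(\cdot,\tau)=O(\delta_n)$ of order $O(ns_n\delta_n)$; the contribution of this gradient to $Q_n(\theta_o^*+r_nu)-Q_n(\theta_o^*)$ is therefore $O_p(r_n\sqrt{ns_n}+r_nns_n\delta_n)\|u\|$, and $\sqrt{n}s_n^2\delta_n\to 0$ guarantees the bias piece is absorbed into $O_p(r_n\sqrt{ns_n})$. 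For the Hessian term, the only delicate contribution is the $\alpha_m^2 q_n''(\cdot,\tau_m)$ pieces: since $q_n''(x)=\mathcal{K}(x/\delta_n)/\delta_n$ is spiked but has integral~$1$ and the index has a bounded density near each true knot, each such piece has expectation $O(1)$ and variance $O(1/\delta_n)$; summed over $s_n$ knots and divided by $n$, the stochastic fluctuation is controlled by the assumption $s_n^3/(n\delta_n)\to 0$, and one obtains a uniform lower bound on the smallest eigenvalue of $n^{-1}\nabla^2L_n$ on the ball $\|u\|\le C$ (using Condition~\ref{C1} to keep the information matrix nonsingular). The penalty contribution expands as $na_n\sqrt{s_n}r_n\|u\|+nb_nr_n^2\|u\|^2$ by Condition~\ref{C6} and $b_n\to 0$, which is again dominated by the quadratic loss term for $C$ large. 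Choosing $C$ sufficiently large makes the quadratic term dominate, giving Stage~1.

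For Stage~2, restrict attention to $\theta$ within an $O_p(r_n)$ neighborhood of $\theta_o^*$ (provided by Stage~1) and consider $m\notin\mathcal{S}^*$, i.e., $\alpha_m^*=0$. The subgradient condition for a local minimizer gives
\begin{align*}
\partial_{\alpha_m}L_n(\widehat\theta)+np'_{\lambda_n,t}(|\widehat\alpha_m|)\operatorname{sgn}(\widehat\alpha_m)=0 \quad \text{whenever } \widehat\alpha_m\ne 0.
\end{align*}
A direct expansion of $\partial_{\alpha_m}L_n(\widehat\theta)$ around $\theta_o^*$, using the Lipschitz bound $|q_n(x,\tau_1)-q_n(x,\tau_2)|\le|\tau_1-\tau_2|$ and the approximation $|q_n(x,\tau)-f(x,\tau)|=O(\delta_n)$, shows $|\partial_{\alpha_m}L_n(\widehat\theta)|=O_p(\sqrt{ns_n}+ns_n\delta_n)$ uniformly in $m$. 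By Condition~\ref{C6}(ii) the penalty derivative is at least $c\,n\lambda_n$ for $|\widehat\alpha_m|<\kappa_3\lambda_n$, and Condition~\ref{C7} ($\sqrt{n\lambda_n^2/s_n}\to\infty$) combined with $\sqrt{n}s_n^2\delta_n\to 0$ ensures $n\lambda_n$ strictly dominates $\sqrt{ns_n}+ns_n\delta_n$. Hence the subgradient cannot vanish with $\widehat\alpha_m\ne 0$ small, which, together with the local minimum structure and concavity of the penalty, forces $\widehat\alpha_m=0$ with probability tending to one. This proves~(i); restricting $\widehat\theta$ to its nonzero pattern then identifies the first $M_n^*$ knot slots with $\mathcal{S}^*$ and the Stage~1 rate gives~(ii).

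The main obstacle is the balancing of $\delta_n$: it must be small enough that the smoothing bias is negligible at rate $r_n$ (hence $\sqrt{n}s_n^2\delta_n\to 0$), yet large enough that $q_n''=\delta_n^{-1}\mathcal{K}(\cdot/\delta_n)$ does not blow up the Hessian fluctuation (hence $s_n^3/(n\delta_n)\to 0$). Making these two requirements compatible requires a sharp uniform bound $|q_n(x,\tau)-f(x,\tau)|\le O(\delta_n)$ and uniform control of the empirical process indexed by $(\beta,\tau)$ over the diverging knot set; this is precisely the ``approximate inequality between the smoothed and unsmoothed functions'' referenced in the introduction, and handling it uniformly while $s_n\to\infty$ is the crux of the argument. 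Once that uniform control is in place, the rest of the proof is bookkeeping via Taylor expansion and the standard penalized-estimation machinery.
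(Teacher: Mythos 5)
There is a genuine gap, and it sits at the foundation of your Stage~1. You propose to run the sphere argument in the \emph{full} parameter space $\mathbb{R}^{s_n}$, embedding $\theta_o^*$ by setting $\alpha_m^*=0$ and $\tau_m=\tau_\infty$ for the inactive slots, and to establish
$\inf_{\|u\|=C}Q_n(\theta_o^*+r_nu)>Q_n(\theta_o^*)$ together with a uniform lower bound on the smallest eigenvalue of $n^{-1}\nabla^2L_n$. Neither claim can hold. When $\alpha_m=0$ the objective does not depend on $\tau_m$ at all (and with the $\tau_\infty$ embedding, $q_n(X_i^\top\beta,\tau_m)=0$ for every observation), so a perturbation $u$ supported purely on an inactive $\tau_m$ coordinate leaves $Q_n$ exactly unchanged: the infimum over the sphere \emph{equals} $Q_n(\theta_o^*)$ and the strict inequality fails. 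For the same reason the full-space Hessian is singular in those directions, so no uniform positive eigenvalue bound exists, and the quadratic term cannot dominate there. The penalty's kink at zero rescues the inactive $\alpha_m$ directions but does nothing for the flat $\tau_m$ directions. Your Stage~2 subgradient argument is fine in spirit, but it only shows that any full-space local minimizer near $\theta_o^*$ must have the correct sparsity pattern; it presupposes the existence that Stage~1 was supposed to deliver, so the gap is not repaired downstream.

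The paper's proof is structured precisely to avoid this degeneracy. Its Step~1 runs the sphere argument only on the subspace $\mathcal{B}=\{\theta:\alpha_m=0,\ m=M_n^*+1,\dots,M_n\}$, on which the loss is free of the inactive $\tau$'s, producing a family of restricted local minimizers $\widehat\theta(\widetilde\tau)$ indexed by arbitrary inactive knot locations $\widetilde\tau$; there the Hessian limit $V(\theta_o^*)$ is genuinely positive definite and the $I_1$--$I_5$ bookkeeping (including the smoothing-bias and third-derivative controls that your conditions $\sqrt{n}s_n^2\delta_n\to0$ and $s_n^3/(n\delta_n)\to0$ are indeed designed for) goes through. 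Its Step~2 then shows that $\widehat\theta(\widetilde\tau)$ is a local minimizer of the \emph{unrestricted} objective: any nearby point off $\mathcal{B}$ has some inactive $\alpha_{m}\neq0$, and a mean-value expansion of $\partial L_n/\partial\alpha_m$ shows its sign is dictated by $p'_{\lambda_n,t}$ near the origin (using Condition~\ref{C7}), so the objective strictly increases when leaving $\mathcal{B}$; flatness in the inactive $\tau$ directions is harmless because local minimality does not require strict increase. To repair your proof you would need to replace your Stage~1 by this restrict-then-extend construction (or otherwise quotient out the unidentified $\tau$ directions); your Stage~2 could then be recast as the extension step rather than as a property of an assumed minimizer.
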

The first part of Theorem \ref{Theorem:Consistenncy} states that the proposed estimators can identify the true model with probability tending to one.
It also implies that $\widehat M_n$ is a consistent estimator of $M_n^*$.
The second part demonstrates that if $a_n$ is of order $n^{-1/2},$ then
there exists a root-$(n/s_n)$-consistent estimator of ${\theta}^*$.
If the penalty function is SCAD or MCP and condition \ref{C2} holds,
then $a_n=0$ when $n$ is sufficiently large.
This implies that the scaling factor becomes negligible as the sample size increases,
indicating that the proposed estimators achieves the optimal convergence rate.

Theorem \ref{Theorem:Consistenncy} requires $\sqrt{n}s_n^2\delta_n\to0$ and $s_n^3/(n\delta_n)\to0$ as $n\rightarrow \infty$.
If $d_1$ and $d_2$ are fixed, then we have $s_n=O(M_n)$.
Therefore,  it implies that the number of knots $M_n$ can increase with the sample size,
but the increase rate can not be too fast in order to ensure the optimal convergence rate.
Moreover, the condition $\sqrt{n}s_n^2\delta_n\to0$ indicates that
a small $\delta_n$ is needed to ensure that the bias of the estimators becomes negligible as $s_n$ increases.
In contrast, $s_n^3/(n\delta_n)\to0$ indicates that
a large $\delta_n$ is necessary for maintaining the smoothness of the loss function.
It further indicates that the smoothing parameter $\delta_n\rightarrow 0$ should adapt to the sample size and the number of knots
to achieve a balanced trade-off between the bias and smoothness.
Intuitively, a large value of $M_n$ tends to make the locations $\tau_m$ and $\tau_{m+1}$ of knots close to each other.
Therefore, a slightly smaller value of $\delta_n$ may be more popular in order to better distinguish these knots.
This finding is different from the results of \cite{das2016fast},
where $\delta_n$ is free of the number of knots.


Next, we establish the normality of the proposed estimator.
Define
\begin{align*}
    {\Sigma}_{\lambda_n}({\theta}_o)=&\text{diag}\big\{p''_{\lambda_n,t}(|\alpha_{1}|),\ldots,p''_{\lambda_n,t}(|\alpha_{M_n^*}|),0,\dots,0\big\}, \\
     {B}({\theta}_o)=&\big(p'_{\lambda_n,t}(|\alpha_{1}|)\text{sgn}(\alpha_{1}),\ldots,p'_{\lambda_n,t}(|\alpha_{M_n^*}|)\text{sgn}(\alpha_{M_n^*}),0,\dots,0\big)^\top,\\
    \mbox{and}~~~{V}({\theta}_o)=&\mathbb{E}\big\{{H}_i({\theta}_o){H}_i({\theta}_o)^\top\big\},
\end{align*}
where
\begin{align*}
{H}_i({\theta}_o)=&\Big(f({X}_i^\top{\beta},\tau_{1}),\ldots,f({X}_i^\top{\beta},\tau_{M_n^*}),-\alpha_{1}I{({X}_i^\top{\beta}>\tau_{1})},\dots,-\alpha_{M_n^*}I{({X}_i^\top{\beta}>\tau_{M_n^*})},\\
    &\hspace{2in}{X}_i^\top{\beta},\Big[\alpha_{0}+\sum_{m=1}^{M_n^*}\alpha_{m} I({X}_i^\top{\beta}>\tau_{m})\Big]{\widetilde X_i}^\top,\widetilde{Z}_i^\top\Big)^\top.
\end{align*}
Furthermore, we  write ${\Sigma}_{\lambda_n}^*={\Sigma}_{\lambda_n}({\theta}_o^*),$ ${B}^*={B}({\theta}_o^*)$ and ${V}_*={V}({\theta}_o^*)$ for short.
\begin{theorem}\label{Theorem:Oracle}
Suppose that conditions \ref{C1}-\ref{C7} hold. If $\sqrt{ns_n^5}\delta_n\to0$ and  $s_n^4/(n\delta_n)\to0$ as $n\rightarrow \infty$,
then for any $q\times s_n^*$ matrix $A_n$, we have
$$\sqrt{n}{A}_n{V}_*^{-1/2}\big({V}_*+{\Sigma}_{\lambda_n}^*\big)\big\{({\widehat \theta}_1-{\theta}_o^*)+\big({V}_*+{\Sigma}_{\lambda_n}^*\big)^{-1}{B}^*\big\}
        \xrightarrow{~d~} \mathcal{N}({0}_q,\sigma^2{G}),$$
        where $ \xrightarrow{~d~}$ means convergence in distribution, and $G$ is a $q \times q$ positive definite matrix such that ${A}_n {A}_n^\top\to {G}$ as $n\rightarrow\infty$,
        with $\|{G}\|_\mathrm{F}=O(1)$.
        Here, $s_n^*=1+d_1+d_2+2M_n^*,$ $q<s_n^*$ is fixed, and $\|\cdot\|_\mathrm{F}$ denotes the Frobenius norm of a matrix.
\end{theorem}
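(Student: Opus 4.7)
The plan is to leverage Theorem \ref{Theorem:Consistenncy} so that with probability tending to one the penalized estimator coincides with an oracle-type solution on the correctly identified active set $\mathcal{S}^*$, and then carry out a standard Taylor-expansion M-estimation argument on the smoothed penalized objective, tracking carefully both the smoothing bias (governed by $\delta_n$) and the growth of the number of knots (governed by $s_n^*$). First, on the event $\{\widehat{\mathcal{S}}=\mathcal{S}^*\}$, which has probability approaching one, the restricted subvector ${\widehat\theta}_1$ is a local minimizer of the $s_n^*$-dimensional smoothed penalized loss
$$L_n^s({\theta}_o)=\tfrac{1}{2}\sum_{i=1}^n\Bigl\{Y_i-\widetilde Z_i^\top\eta-\alpha_0 X_i^\top\beta-\sum_{m\in\mathcal{S}^*}\alpha_m q_n(X_i^\top\beta,\tau_m)\Bigr\}^2+n\sum_{m\in\mathcal{S}^*}p_{\lambda_n,t}(|\alpha_m|),$$
and hence satisfies the first-order conditions $\nabla L_n^s({\widehat\theta}_1)=0$. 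Theorem \ref{Theorem:Consistenncy}(ii) gives $\|{\widehat\theta}_1-{\theta}_o^*\|=O_p(s_n^{1/2}(n^{-1/2}+a_n))$, which localizes the expansion.

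Next, I would Taylor-expand the score around ${\theta}_o^*$:
$$0=\nabla L_n^s({\theta}_o^*)+\nabla^2 L_n^s(\bar{\theta})({\widehat\theta}_1-{\theta}_o^*)$$
for some $\bar{\theta}$ on the segment between ${\widehat\theta}_1$ and ${\theta}_o^*$. The Hessian splits as a data part plus the penalty part ${\Sigma}_{\lambda_n}^*$. Using the explicit derivatives of the convolution (in particular $\partial q_n(x,\tau)/\partial\tau$ being a smoothed indicator, together with the approximate inequality $|q_n(x,\tau)-f(x,\tau)|\lesssim\delta_n$ that was introduced after condition \ref{C5}), I would establish a uniform law
$$\tfrac{1}{n}\nabla^2 L_n^s(\bar{\theta})={V}_*+{\Sigma}_{\lambda_n}^*+o_p(1),$$
where the $o_p(1)$ error, after summing over $O(M_n)$ coordinates, is controlled by the assumption $s_n^4/(n\delta_n)\to 0$; this condition is needed precisely because the second derivative of $q_n$ is of order $\delta_n^{-1}$ on a neighborhood of width $\delta_n$ around each knot.

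Then I would decompose the score at the truth as
$$\nabla L_n^s({\theta}_o^*)=-\sum_{i=1}^n H_i({\theta}_o^*)\epsilon_i+R_n+n{B}^*,$$
where $n{B}^*$ is the penalty gradient at ${\theta}_o^*$, and $R_n$ collects both the deterministic bias from replacing $f$ by $q_n$ in the mean and the stochastic error from the difference between the smoothed and unsmoothed score vectors. Using the Lipschitz bound $|q_n-f|\lesssim\delta_n$ and its $\tau$-derivative analog on the support of $\mathcal{K}$, together with the moment condition \ref{C3}, I obtain $\|R_n\|=O_p(\sqrt{n s_n^*}\,\delta_n)$, which is $o_p(\sqrt{n})$ thanks to $\sqrt{n s_n^5}\,\delta_n\to 0$. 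Rearranging the Taylor expansion then yields
$$\sqrt{n}({V}_*+{\Sigma}_{\lambda_n}^*)\bigl\{({\widehat\theta}_1-{\theta}_o^*)+({V}_*+{\Sigma}_{\lambda_n}^*)^{-1}{B}^*\bigr\}=\tfrac{1}{\sqrt{n}}\sum_{i=1}^n H_i({\theta}_o^*)\epsilon_i+o_p(1).$$

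Finally, to pass to a limiting normal distribution with a diverging number of coordinates, I would apply the Cramér-Wold device through $A_n$: the left-hand side, after premultiplication by $A_n V_*^{-1/2}$, becomes $n^{-1/2}\sum_i A_n V_*^{-1/2} H_i({\theta}_o^*)\epsilon_i$, a sum of i.i.d.\ $\mathbb{R}^q$-valued vectors with covariance $\sigma^2 A_n A_n^\top\to\sigma^2 G$. Conditions \ref{C3}-\ref{C4} supply finite fourth moments, and since $q$ is fixed and $\|G\|_\mathrm{F}=O(1)$, the Lindeberg condition reduces to a standard uniform integrability check and the Lindeberg-Feller CLT delivers $\mathcal{N}(0_q,\sigma^2 G)$. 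The main technical obstacle, as already flagged, is the simultaneous control of (i) the Hessian remainder, which is sensitive to the $\delta_n^{-1}$ spike of $q_n''$, and (ii) the smoothing bias in the score, each of which has to hold uniformly over an $s_n^*$-dimensional neighborhood; the precise rate conditions $\sqrt{n s_n^5}\delta_n\to 0$ and $s_n^4/(n\delta_n)\to 0$ in the statement are exactly the two-sided constraints that make this balance feasible.
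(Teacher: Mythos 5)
Your proposal is correct and follows essentially the same route as the paper's proof: stationarity of the penalized smoothed loss on the oracle active set, a Taylor expansion of the score at $\theta_o^*$ whose Hessian converges to ${V}_*+{\Sigma}_{\lambda_n}^*$ (the paper's Lemma 1), replacement of the smoothed score by the unsmoothed score $-\sum_i H_i(\theta_o^*)\epsilon_i$ using the $|q_n-f|\lesssim\delta_n$ bound, and a Lindeberg--Feller argument after premultiplying by $A_n V_*^{-1/2}$. The only slight imprecision is your remainder bound $\|R_n\|=O_p(\sqrt{ns_n^*}\,\delta_n)$, which omits the extra $M_n^2$ factor arising from the sums over knots in each score coordinate (the paper gets $O_p(n\sqrt{s_n}M_n^2\delta_n)$, which is exactly why the stronger condition $\sqrt{ns_n^5}\,\delta_n\to0$ is imposed), but since you invoke that condition the conclusion stands.
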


For the SCAD and MCP, ${\Sigma}_{\lambda_n}^*$ and ${B}^*$ become zero with a sufficiently large $n$.
Theorem \ref{Theorem:Oracle} can be simplified to state that
$\sqrt{n}{A}_n{V}_*^{1/2}({\widehat \theta}_1-{\theta}_o^*)\xrightarrow{~d~} \mathcal{N}({0}_q,\sigma^2{G}),$
which has the same efficiency as the estimator ${\widehat \theta}_o$ based on $M_n^*$ known in advance.
In this sense, our estimator achieves the oracle property.

In practice, the covariance matrix of ${\widehat\theta}_1$ needs to be estimated.
Following the conventional technique,
we consider the following sandwich formula to estimate the covariance of ${\widehat \theta}_1$:
\begin{align*}
   {\widehat\Xi}_n=n^{-1}\widehat\sigma^2\big\{{V}_n({\widehat\theta}_{1})+{\Sigma}_{\lambda_n}({\widehat\theta}_{1}) \big\}^{-1}  {V}_n({\widehat\theta}_{1})
   \big\{{V}_n({\widehat\theta}_{1})+{\Sigma}_{\lambda_n}({\widehat\theta}_{1}) \big\}^{-1},
\end{align*}
where
\[
\widehat\sigma^2=\frac{1}{n} \sum_{i=1}^n\Big\{Y_i-\widetilde{Z}_i^\top{\widehat\eta}-\widehat\alpha_{0}{X}_i^\top{\widehat\beta}-\sum_{m=1}^{\widehat M_n}\widehat\alpha_{m} q_n({X}_i^\top{\widehat\beta},\widehat\tau_{m})\Big\}^2,
\]
and ${V}_n({\widehat\theta}_{1})=n^{-1}\sum_{i=1}^{n}{H}_{ni}({\widehat\theta}_{1}){H}_{ni}({\widehat\theta}_{1})^\top$ with
\begin{align*}
    {H}_{ni}({\widehat\theta}_{1})=&\Big(q_n({X}_i^\top{\widehat\beta},\widehat\tau_{1}),\ldots,q_n({X}_i^\top{\widehat\beta},\widehat\tau_{\widehat M_n}),\widehat\alpha_{1}\frac{\partial}{\partial \tau_1}q_n({X}_i^\top{\widehat\beta},\widehat\tau_{1}),\ldots,
    \widehat\alpha_{\widehat M_n}\frac{\partial}{\partial \tau_{\widehat M_n}}q_n({X}_i^\top{\widehat\beta},\widehat\tau_{\widehat M_n}),\\
     &\hspace{1.95in}{X}_i^\top{\widehat\beta},\Big[\widehat\alpha_{0}{\widetilde X}_i+\sum_{m=1}^{\widehat M_n}\widehat\alpha_{m}\frac{\partial}{\partial{\beta}_{(-1)}}q_n({X}_i^\top{\widehat\beta},\widehat\tau_{m})\Big]^\top, \widetilde{Z}_i^\top\Big)^\top.
\end{align*}
The following theorem establishes the consistency of the sandwich-type estimator.
\begin{theorem}\label{ConS}
Suppose that conditions \ref{C1}-\ref{C7} hold. If $\sqrt{ns_n^5}\delta_n\to0$ and $s_n^4/(n\delta_n)\to0$ as $n\rightarrow \infty$, then we have
    \begin{align*}
        n[{A}_n{\widehat\Xi}_n{A}_n^{\top}-{A}_n{\Xi} {A}_n^{\top}]=o_p(1),
    \end{align*}
    where ${A}_n$ is defined in Theorem \ref{Theorem:Oracle}, and ${\Xi}=n^{-1}\sigma^2\big({V}_*+{\Sigma}_{\lambda_n}^* \big)^{-1}  {V}_* \big({V}_*+{\Sigma}_{\lambda_n}^* \big)^{-1}.$
\end{theorem}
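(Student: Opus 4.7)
The plan is to decompose $n A_n \widehat\Xi_n A_n^\top - n A_n \Xi A_n^\top$ into contributions from the deviations of $\widehat\sigma^2 - \sigma^2$, $V_n(\widehat\theta_1) - V_*$, and $\Sigma_{\lambda_n}(\widehat\theta_1) - \Sigma_{\lambda_n}^*$, show that each contribution is $o_p(1)$ after projection through $A_n$, and then combine via a matrix inversion perturbation identity. A preliminary observation is that Theorem \ref{Theorem:Consistenncy}(i) gives $\mathbb{P}(\widehat M_n = M_n^*) \to 1$, so we may restrict attention to the event on which $\widehat{\mathcal{S}} = \mathcal{S}^*$; on this event the matrices $V_n(\widehat\theta_1)$ and $\Sigma_{\lambda_n}(\widehat\theta_1)$ have the same $s_n^* \times s_n^*$ dimension as $V_*$ and $\Sigma_{\lambda_n}^*$ and are directly comparable. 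Since $A_n$ has fixed rank $q$ with $\|A_n A_n^\top\|_{\mathrm{F}} = O(1)$, projecting through $A_n$ keeps all quantities in a finite-dimensional space even though $s_n^*$ diverges.

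The consistency of $\widehat\sigma^2$ follows by writing the residual inside the sum as $\epsilon_i + r_i$, where $r_i$ collects the parametric estimation error plus the smoothing discrepancy $\widehat\alpha_m[q_n(X_i^\top\widehat\beta,\widehat\tau_m) - f(X_i^\top\beta^*,\tau_m^*)]$. The smoothing piece is $O(\delta_n)$ uniformly in $x$ by the inequality $|q_n(x,\tau) - f(x,\tau)| \lesssim \delta_n$ derived from condition \ref{C5}, and the parametric piece is controlled through $\|\widehat\theta_1 - \theta_o^*\| = O_p(s_n^{1/2} n^{-1/2})$ from Theorem \ref{Theorem:Consistenncy}(ii), together with the moment bounds in conditions \ref{C3}-\ref{C4}; the rate $\sqrt{n s_n^5}\delta_n \to 0$ ensures that these contributions are absorbed so that $\widehat\sigma^2 - \sigma^2 = n^{-1}\sum_{i=1}^n \epsilon_i^2 - \sigma^2 + o_p(1) = o_p(1)$ by the weak law of large numbers. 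A parallel argument handles $\Sigma_{\lambda_n}(\widehat\theta_1) - \Sigma_{\lambda_n}^*$ using the Lipschitz condition on $p''_{\lambda_n,t}$ from condition \ref{C6}(iii) together with the fact that $\min_m|\alpha_m^*|/\lambda_n \to \infty$ from condition \ref{C2}, which keeps $|\widehat\alpha_m|$ away from the non-smooth region of the penalty.

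The principal obstacle is showing $A_n[V_n(\widehat\theta_1) - V_*]A_n^\top = o_p(1)$. I would split this into three pieces and bound each separately: (a) a sample-average piece $n^{-1}\sum_{i=1}^n H_i(\theta_o^*)H_i(\theta_o^*)^\top - V_*$, which is $o_p(1)$ in Frobenius norm by Chebyshev and conditions \ref{C3}-\ref{C4}; (b) a smoothing piece comparing $H_{ni}(\theta_o^*)$ with $H_i(\theta_o^*)$, which requires controlling $q_n(x,\tau) - f(x,\tau) = O(\delta_n)$ and, more delicately, $\partial_\tau q_n(x,\tau) + I(x > \tau)$, whose pointwise error is $O(1)$ on the transition band of width $2\delta_n$ but integrates to $O(\delta_n)$ against bounded densities, so that the overall contribution is controlled by $\sqrt{n s_n^5}\delta_n \to 0$; and (c) a parameter-perturbation piece arising from replacing $\theta_o^*$ with $\widehat\theta_1$, where the Lipschitz constants of $q_n$ and its first and second derivatives blow up like $\delta_n^{-1}$, so that the total bound scales as $s_n^{1/2}n^{-1/2}\cdot \delta_n^{-1}$, which is $o_p(1)$ precisely under $s_n^4/(n\delta_n) \to 0$. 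Once these three pieces vanish after sandwiching with $A_n$, the identity $M^{-1} - N^{-1} = -M^{-1}(M-N)N^{-1}$, applied to $M = V_n(\widehat\theta_1) + \Sigma_{\lambda_n}(\widehat\theta_1)$ and $N = V_* + \Sigma_{\lambda_n}^*$ with the bounded spectral norms of $N^{-1}$ on the event in question, delivers the claim via a Slutsky-type combination.
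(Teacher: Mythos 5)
Your overall architecture matches the paper's: the same decomposition of $n(\widehat\Xi_n-\Xi)$ into the three deviations $\widehat\sigma^2-\sigma^2$, $V_n(\widehat\theta_1)-V_*$, $\Sigma_{\lambda_n}(\widehat\theta_1)-\Sigma_{\lambda_n}^*$, the same matrix-inversion perturbation identity $\mathcal{A}_n^{-1}-\mathcal{A}^{-1}=\mathcal{A}_n^{-1}(\mathcal{A}-\mathcal{A}_n)\mathcal{A}^{-1}$, and essentially the same treatments of $\widehat\sigma^2$, of the penalty block via condition \ref{C6}(iii), and of your pieces (a) and (b) (which together are the paper's Lemma 1). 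However, your piece (c) — the parameter-perturbation part of $V_n$ — contains a genuine gap. You bound it by a uniform Lipschitz argument, $\delta_n^{-1}\|\widehat\theta_1-\theta_o^*\|\asymp s_n^{1/2}n^{-1/2}\delta_n^{-1}$, and claim this vanishes "precisely under $s_n^4/(n\delta_n)\to0$." Both steps fail. First, the algebra is wrong: $s_n^{1/2}n^{-1/2}\delta_n^{-1}\to0$ iff $n\delta_n^2/s_n\to\infty$, which is a different condition from $n\delta_n/s_n^4\to\infty$. Second, and worse, this quantity actually \emph{diverges} under the theorem's hypotheses: $\sqrt{ns_n^5}\,\delta_n\to0$ forces $n\delta_n^2\le ns_n^5\delta_n^2\to0$, hence $s_n^{1/2}n^{-1/2}\delta_n^{-1}=\sqrt{s_n/(n\delta_n^2)}\to\infty$. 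So the bandwidth regime required by the rest of the proof is exactly the regime in which your crude bound for piece (c) is useless.

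The fix is to treat the $\theta$-derivatives of $V_n$ statistically rather than uniformly — the same "integrate the spike" device you correctly invoke for piece (b). Expanding $V_{nj,k}(\widehat\theta_1)-V_{nj,k}(\theta_o^*)$ by the mean value theorem, the derivatives $\partial V_{nj,k}/\partial\theta_{ol}$ involve $q_n''=\mathcal{K}(\cdot/\delta_n)/\delta_n$, which is $O(\delta_n^{-1})$ pointwise but supported on bands of width $O(\delta_n)$; integrating against the density of $X_i^\top\beta^*$ gives expectations of order $O(M_n^2)$ (no $\delta_n^{-1}$), while the variances are of order $\mathrm{poly}(s_n)/(n\delta_n)$ and are killed by Chebyshev — this is where $s_n^4/(n\delta_n)\to0$ genuinely enters. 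Combining with Cauchy--Schwarz and $\|\widehat\theta_1-\theta_o^*\|=O_p(\sqrt{s_n/n})$ then yields $\|V_n(\widehat\theta_1)-V_n(\theta_o^*)\|_{\mathrm{F}}^2=O_p(s_n^4M_n^4/n)$, which is $o_p(1)$ because the two hypotheses of the theorem are jointly compatible only if $s_n^{13}/n\to0$ (so in particular $s_n^8/n\to0$). Without replacing your uniform Lipschitz step by this expectation/variance argument, the proof does not go through.
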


The consistent result offers a way to construct a confidence interval for the estimates of parameters.
For example, let $e_j\in \mathbb{R}^{\widehat M_n}$ be a row vector with one in the $j$th component and zero otherwise,
and set $A_n=e_{j}\{{V}_n({\widehat\theta}_{1})+{\Sigma}_{\lambda_n}({\widehat\theta}_{1})\}^{-1}{V}_n({\widehat\theta}_{1})^{1/2}/\widehat\sigma$ with $q=1$.
Then, we can construct a confidence interval with level $\varsigma$ for $\theta_{oj}^*$ by
$[\widehat\theta_{1j}-z_{\varsigma/2}\sqrt{e_j\widehat\Xi e_j^\top},\widehat\theta_{1j}+z_{\varsigma/2}\sqrt{e_j\widehat\Xi e_j^\top}]$,
where $\theta_{oj}^*$ and $\widehat \theta_{1j}$ are the $j$th components of $\theta_o^*$ and $\widehat\theta_1$, respectively,
and $z_{\varsigma/2}$ denotes the upper $\varsigma/2$-quantile of the standard normal distribution.
In addition, it allows for simultaneous testing of whether a group of variables are significant by using a specific matrix $A_n.$
For example, consider the following hypothesis:
\[
H_0: \beta_{(-1)}^*=0 ~~~\text{v.s.}~~~ H_1: \beta_{(-1)}^*\neq 0.
\]
If the null hypothesis holds, then model \eqref{SIR-model} reduces to linear spline regression models \citep{card2012nonlinear, hansen2017regression, li2011bent, muggeo2003estimating}.
Hence, it is  interesting to test whether $\beta_{(-1)}^*$ is zero.
To aim  this, we set $A_n=e_{\beta}\{{V}_n({\widehat\theta}_{1})+{\Sigma}_{\lambda_n}({\widehat\theta}_{1})\}^{-1}{V}_n({\widehat\theta}_{1})^{1/2}/\widehat\sigma$ with $q=d_1-1$,
where $e_{\beta}=(e_{2(1+\widehat M_n)}^\top,\dots,e_{1+d_1+2\widehat M_n}^\top)^\top$.
For this, we define the following statistic
\[
\mathcal{W}= \widehat{\beta}_{(-1)}^\top [e_{\beta}\widehat \Xi e_{\beta}^\top]^{-1}  \widehat{\beta}_{(-1)}.
\]
By Theorems \ref{Theorem:Oracle} and \ref{ConS}, we can show that $\mathcal{W} \xrightarrow{~d~} \chi^2(d_1-1)$,
where $\chi^2(m)$ denotes a Chi-square distribution with $m$ degrees of freedom.

\section{Testing for knot effects}\label{section:testing}
In this section, we investigate the presence or existence of knots.
To this end, we focus on examining the following hypotheses:
\[
H_0: \alpha_{1}^*=\dots=\alpha_{M_n^*}^*=0~~~\text{v.s.}~~~ H_1: \alpha_{m}^*\neq 0~\text{for some}~m \in \{1,\dots,M_n^*\}.
\]
If the null hypothesis $H_0$ holds, then the LSIR model is identical to the traditional linear regression model, without any knots.
Conversely, there exists at least one knot in the LSIR model under the alternative hypothesis $H_1$.
Define
\begin{align*}
\psi(\tauo,\alpha_0,{\beta},{\eta})
=\frac{1}{\sqrt{n}}\sum_{i=1}^n q_n({X}_i^\top{\beta},\tauo)\big(Y_i-\widetilde{Z}_i^\top{\eta}-\alpha_0{X}_i^\top{\beta}\big).
\end{align*}
Under $H_0$, $\tau_m^*$ vanishes and becomes unidentified.
But it still holds that $\mathbb{E}\{\psi(\tauo,\alpha_0^*,{\beta}^*,{\eta}^*)\}=0$ for any given $\tauo$.
This implies that the value of $\psi(\tauo,\alpha_0^*,\beta^*,\eta^*)$ is close to zero under $H_0$.
Additionally, we have $\mathbb{E}\{\psi(\tauo,\alpha_0^*,\beta^*,\eta^*)\}\neq 0$ for some $\tauo$ under $H_1$.
To address the unidentifiable issue of $\tauo$,
we consider a supremum-type test statistic for the hypothesis:
\begin{align*}
\mathcal{T}=\sup_{\tauo\in \Theta_{\tau}}\frac{\big\{n^{-1/2}\sum_{i=1}^n\widehat\psi_i(\tauo)\big\}^2}{\widehat\varrho(\tauo)},
\end{align*}
where $\Theta_{\tau}\subset \mathbb{R}$ is a compact set,
$\widehat\psi_i(\tauo)=q_n({X}_i^\top{\widehat\beta},\tauo)\big(Y_i-\widetilde{Z}_i^\top{\widehat \eta}- \widehat\alpha_0{X}_i^\top{\widehat \beta}\big)$,
$\widehat\varrho(\tauo)$ is a consistent estimator for the asymptotic variance of $\psi(\tauo,\widehat\alpha_0, {\widehat\beta}, {\widehat\eta})$ under $H_0$,
and $\widehat\alpha_0$, ${\widehat\beta}$ and ${\widehat\eta}$ are the estimates of $\alpha_0^*$, ${\beta}^*$ and ${\eta}^*$ under $H_0$.
Specifically, $\widehat\alpha_0,~{\widehat\beta}$ and ${\widehat\eta}$ can be obtained by
\begin{align}\label{H0:est}
(\widehat\alpha_0, {\widehat\beta}^\top, {\widehat\eta}^\top)^\top
=\text{arg}\min_{(\alpha_0, {\beta}^\top, {\eta}^\top)^\top} \frac{1}{2}\sum_{i=1}^n\Big(Y_i-\widetilde{Z}_i^\top{\eta}-\alpha_0{X}_i^\top{\beta}\Big)^2.
\end{align}
Define $\Psi(\varsigma)=I(\mathcal{T}>c(\varsigma))$,
where $c(\varsigma)$ is the critical value and $\varsigma$ denotes the significant level.
The test statistic $\mathcal{T}$ is close to zero under  $H_{0}$.
Therefore, we reject $H_{0}$ if and only if $\Psi(\varsigma)=1.$
Taking another perspective on the test statistic $\psi(\tauo,\alpha_0,{\beta},{\eta})$ is that
$\psi(\tau_m^*,\alpha_0^*,{\beta}^*,{\eta}^*)$ is a smoothing score function for $\alpha_m^*$ evaluated at $\alpha_1^*=\dots=\alpha_{M_n^*}^*=0$.
Thus, $\psi(\tau_m^*,\alpha_0^*,{\beta}^*,{\eta}^*)$ can be viewed as a score-type test statistic for the hypothesis.
Such type test statistics are also considered in different literature (e.g., \citealp{andrews2001testing,fan2017change,zhong2022estimation}).

To analyze the asymptotic behaviour of the test statistic $\mathcal{T}$, we introduce a sequence of local alternatives denoted as $H_{1n}$, where we assume that $\alpha_m^*=\varpi_m/n^{1/2}$ for some $m\in \{1,\ldots, M_n^*\}$. This means that under $H_{1n}$, there exists at least one knot, and the underlying signals are decreasing at a rate of $n^{-1/2}$.

We define ${D}(\tauo)=\mathbb{E}\{{\xi}_i f({X}_i^\top{\beta}^*,\tauo)\},$
$\psi_{*i}(\tauo)=\{f({X}_i^\top{\beta}^*,\tauo)-{D}(\tauo)^\top {\Omega}^{-1}{\xi}_i\}(Y_i-\widetilde{Z}_i^\top{\eta}^*-\alpha_0^*{X}_i^\top{\beta}^*),$
and $\varrho(\tauo)=\mathbb{E}\big\{\psi_{*i}(\tauo)^2\big\}$,
where ${\xi}_i=({X}_i^\top{\beta}^*,\alpha_{0}^*{\widetilde X}_i^\top,\widetilde{Z}_i^\top)^\top$ and ${\Omega}=\mathbb{E}({\xi}_i{\xi}_i^\top)$.

\begin{theorem}\label{Theorem:Test}
Suppose that conditions \ref{C1}-\ref{C5} hold.
If $\sqrt{n}\delta_n^2\to0$, $n\delta_n\to\infty$ and $s_n^2/n\to0$ as $n\rightarrow \infty$,
then, under either the null hypothesis $H_0$ or the local alternative $H_{1n}$, we have that $\mathcal{T}$
converges in distribution to $\sup_{\tauo\in\Theta_{\tau}} \mathcal{G}(\tauo)^2$,
where $\{\mathcal{G}(\tauo): \tauo\in\Theta_{\tau}\}$ is a Gaussian process
with mean function $\Delta(\tauo)=\mathbb{E}[\{f({X}_i^\top{\beta}^*,\tauo)-{D}(\tauo)^\top {\Omega}^{-1}{\xi_i}\}\sum\nolimits_{m=1}^{M_n^*}\varpi_mf({X}_i^\top{\beta},\tau_m^*)]/\sqrt{\varrho(\tauo)}$
and covariance function ${\Gamma}(\tau_1,\tau_2)=\mathbb{COV}\big(\psi_{*i}(\tau_1),\psi_{*i}(\tau_2)\big)/\sqrt{\varrho(\tau_1)\varrho(\tau_2)}$ for any given $\tau_1,\tau_2\in\Theta_{\tau}$.
\end{theorem}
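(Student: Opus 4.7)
The plan is to show that the normalized score process $\tauo\mapsto n^{-1/2}\sum_{i=1}^n\widehat\psi_i(\tauo)/\sqrt{\widehat\varrho(\tauo)}$ converges weakly on $\Theta_\tau$ to a Gaussian process with mean $\Delta(\tauo)$ and covariance kernel $\Gamma(\tau_1,\tau_2)$, after which Theorem~\ref{Theorem:Test} follows from the continuous mapping theorem applied to $\varphi\mapsto\sup_\tauo \varphi(\tauo)^2$. Since $H_0$ is the special case $\varpi_m\equiv 0$ of $H_{1n}$, I work throughout under $H_{1n}$. The decomposition I aim to establish is
\begin{equation*}
\frac{1}{\sqrt{n}}\sum_{i=1}^n \widehat\psi_i(\tauo) = \frac{1}{\sqrt{n}}\sum_{i=1}^n \big\{f(X_i^\top\beta^*,\tauo)-{D}(\tauo)^\top \Omega^{-1}\xi_i\big\}\epsilon_i + \Delta(\tauo)\sqrt{\varrho(\tauo)} + o_p(1),
\end{equation*}
uniformly in $\tauo\in\Theta_\tau$.

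First I would remove the smoothing. The computation following condition~\ref{C5} yields the uniform bound $|q_n(x,\tauo)-f(x,\tauo)|\le C\delta_n I(|x-\tauo|\le \delta_n)$, so replacing each $q_n(X_i^\top\beta,\tauo)$ in $\widehat\psi_i$ by $f(X_i^\top\beta,\tauo)$ produces a bias of order $\sqrt{n}\delta_n^2$ (negligible by $\sqrt{n}\delta_n^2\to 0$) and a stochastic fluctuation of order $\delta_n$ (negligible by $\delta_n\to 0$); uniformity in $\tauo$ follows from the monotonicity of $f(\cdot,\tauo)$. Next I would linearize in the plug-in parameters. Writing $\theta_*=(\alpha_0,\beta_{(-1)}^\top,\eta^\top)^\top$ and working with the differentiable surrogate $q_n$, the gradient of $q_n(X_i^\top\beta,\tauo)(Y_i-\widetilde Z_i^\top\eta-\alpha_0 X_i^\top\beta)$ contains an $\widetilde X_i q_n'(X_i^\top\beta^*,\tauo)\epsilon_i$ piece with zero conditional mean, and the relevant part of the expected gradient collapses to $-D(\tauo)+o(1)$. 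Combining this with root-$n$ consistency of $(\widehat\alpha_0,\widehat\beta,\widehat\eta)$ (see below) and transferring back from $q_n$ to $f$ via the smoothing bound, I obtain
\begin{equation*}
\frac{1}{\sqrt{n}}\sum_{i=1}^n \widehat\psi_i(\tauo) = \frac{1}{\sqrt{n}}\sum_{i=1}^n f(X_i^\top\beta^*,\tauo)(Y_i-\widetilde Z_i^\top\eta^*-\alpha_0^* X_i^\top\beta^*) - D(\tauo)^\top\sqrt{n}(\widehat\theta_*-\theta_*^*) + o_p(1),
\end{equation*}
uniformly in $\tauo$.

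The third step pins down $\sqrt{n}(\widehat\theta_*-\theta_*^*)$. The least squares problem \eqref{H0:est} is linear after the reparameterization $\vartheta=\alpha_0\beta$; under $H_{1n}$ the data generating process is $Y_i=\widetilde Z_i^\top\eta^*+\alpha_0^* X_i^\top\beta^*+g_i^*+\epsilon_i$ with $g_i^*=n^{-1/2}\sum_m \varpi_m f(X_i^\top\beta^*,\tau_m^*)$, and standard arguments yield
\begin{equation*}
\sqrt{n}(\widehat\theta_*-\theta_*^*)=\Omega^{-1}\frac{1}{\sqrt{n}}\sum_{i=1}^n \xi_i\epsilon_i + \Omega^{-1}\sum_{m=1}^{M_n^*}\varpi_m D(\tau_m^*)+o_p(1),
\end{equation*}
where the drift piece follows from $n^{-1}\sum_i \xi_i f(X_i^\top\beta^*,\tau_m^*)\to D(\tau_m^*)$. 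Substituting and regrouping, while using the uniform Glivenko--Cantelli convergence of $n^{-1}\sum_i f(X_i^\top\beta^*,\tauo) f(X_i^\top\beta^*,\tau_m^*)$ to its expectation, produces the target decomposition. Weak convergence of the stochastic leading term to a centered Gaussian process follows from the Donsker property of the monotone class $\{f(\cdot,\tauo):\tauo\in\Theta_\tau\}$ together with the fourth-moment conditions~\ref{C3} and~\ref{C4}; a CLT supplies the finite-dimensional distributions, and the bracketing bound supplies tightness. Combined with uniform consistency of $\widehat\varrho(\tauo)$ for $\varrho(\tauo)$ (via the same Donsker argument and root-$n$ consistency of $(\widehat\alpha_0,\widehat\beta,\widehat\eta)$) and Slutsky's lemma, the continuous mapping theorem completes the proof.

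The main obstacle I expect is the uniform-in-$\tauo$ linearization of the non-smooth map $\beta\mapsto n^{-1/2}\sum_i f(X_i^\top\beta,\tauo)(\cdot)$, since $f$ fails to be differentiable in $\beta$ on the hyperplane $\{X_i^\top\beta=\tauo\}$. The remedy is to carry out the Taylor expansion against the smooth surrogate $q_n$, whose gradient is uniformly bounded in $\tauo$, and then transfer back using the smoothing bound; the condition $\sqrt{n}\delta_n^2\to 0$ is precisely what makes this transfer asymptotically negligible, while $n\delta_n\to\infty$ guarantees that the kernel's effective support contains enough observations for the gradient to be well-approximated. The condition $s_n^2/n\to 0$ is used when bounding the sandwich-variance remainder in $\widehat\varrho$ and controlling the drift contribution from $M_n^*$ local-alternative terms.
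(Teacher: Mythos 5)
Your proposal is correct and matches the paper's own proof essentially step for step: the paper likewise Taylor-expands $\psi(\tauo,\widehat\alpha_0,{\widehat\beta},{\widehat\eta})$ around the true null parameters using the smooth surrogate $q_n$, removes the smoothing via the bound $|q_n(x,\tauo)-f(x,\tauo)|\le C\delta_n I(|x-\tauo|\le\delta_n)$ together with $\sqrt{n}\delta_n^2\to0$, substitutes the asymptotically linear expansion of the null least-squares estimator to produce the $-{D}(\tauo)^\top{\Omega}^{-1}{\xi}_i$ correction, controls the second-order remainder using $s_n^2/n\to0$ and $n\delta_n\to\infty$, invokes the P-Donsker property of $\{\psi_{*i}(\tauo):\tauo\in\Theta_\tau\}$ plus uniform consistency of $\widehat\varrho(\tauo)$, and finishes with the continuous mapping theorem. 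The only cosmetic difference is that you split the $H_{1n}$ drift out as an explicit deterministic mean term $\Delta(\tauo)\sqrt{\varrho(\tauo)}$, whereas the paper keeps it inside the residual $(Y_i-\widetilde{Z}_i^\top{\eta}^*-\alpha_0^*{X}_i^\top{\beta}^*)$ and lets the limiting Gaussian process carry the mean.
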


Theorem \ref{Theorem:Test} states that under the null hypothesis, $\Psi(\varsigma)$ can achieve a level of significance of $\varsigma$.
Meanwhile,  Theorem \ref{Theorem:Test} also indicates that the power of the proposed test  statistic is essentially controlled by the signal-to-noise ratio term $\Delta(\tauo)$.
When all $\varpi_m~(1\le m\le M_n^*)$ in $\Delta(\tauo)$ converge to zero, the power diminishes to $\varsigma$.
In this case, the proposed test can not distinguish the null hypothesis from the local alternatives.
If there exists at least one $\varpi_m$ in $\Delta(\tauo)$ diverging with $n$
or $M_n^*\rightarrow \infty$ as $n\rightarrow\infty$,
the power converges to 1, which implies that the proposed method is consistent.

In practice, the critical value $c(\varsigma)$ can be obtained using a resampling approach.
For this purpose, we define $
    \widehat\psi_{*i}(\tauo)=\{f({X}_i^\top{\widehat\beta},\tauo)-{\widehat D}(\tauo)^\top {\widehat\Omega}^{-1}{\widehat\xi}_i\}(Y_i-\widetilde{Z}_i^\top{\widehat\eta}-\widehat\alpha_0{X}_i^\top{\widehat\beta}),$
where ${\widehat D}(\tauo)=n^{-1}\sum_{i=1}^n{\widehat\xi}_if({X}_i^\top{\widehat\beta},\tauo)$,
${\widehat\Omega}=n^{-1}\sum_{i=1}^n{\widehat\xi}_i{\widehat\xi}_i^\top,$
and ${\widehat\xi}_i=\big({X}_i^\top{\widehat\beta},\widehat\alpha_{0}{\widetilde X}_i^\top,\widetilde{Z}_i^\top
\big)^\top$.
Let $\widehat\varrho(\tauo)=n^{-1}\sum_{i=1}^n\widehat\psi_{*i}(\tauo)^2.$
Then, define
\[
\mathcal{T}^*=\sup_{\tauo\in\Theta_{\tau}} \frac{\big\{n^{-1/2}\sum_{i=1}^n\widetilde G_i\widehat\psi_{*i}(\tauo)\big\}^2}{\widehat\varrho(\tauo)},
\]
where $\widetilde G_{i}~(i=1,\dots,n)$ are independent standard normal variables which are independent of the observed data.
By repeatedly generating normal random samples $\widetilde G_i$, the distribution of $\mathcal{T}$ can be approximated by the conditional distribution of $\mathcal{T}^*$ given the observed data.
Then, the critical value $c(\varsigma)$ can be obtained from the upper $(1-\varsigma)$-percentile of the conditional distributions of $\mathcal{T}^*$.

\section{Simulation studies}\label{section:simulations}
In this section, we conduct simulation studies to evaluate the finite sample performance of the proposed estimators.
To select the tuning parameter $\lambda_n$, we employ  a Bayesian information criterion (BIC):
\begin{align*}
\text{BIC}(\lambda_n)=&\log\bigg[\frac{1}{n}\sum_{i=1}^n\Big\{Y_i-\widetilde{Z}_i^\top{\widehat\eta}-\widehat\alpha_{0}{X}_i^\top{\widehat\beta}-\sum_{m=1}^{\widehat M_n}\widehat\alpha_{m} f({X}_i^\top{\widehat\beta},\widehat\tau_{m})\Big\}^2 \bigg]\\
&~~+(2\widehat M_n+2+d_1+d_2)\times \frac{C_n\log (n)}{2n},
\end{align*}
where $C_n$ is a  predetermined constant.
When $C_n=1$, it reduces to the traditional BIC.
When $C_n=\log\{\log(n)\}$, it corresponds to the modified BIC. The modified BIC has been shown by  \cite{wang2009shrinkage} to consistently identify the true model,
when the dimension of the unknown parameters diverges with the sample size $n$.

\subsection{Consistency and normality}\label{section:6.1}
We generate the covariates ${X}_i$ and $Z_{i1}~(i=1,\dots,n)$ as follows.
First, we generate $\breve{{X}}_i=(\breve{X}_{i1},\breve{X}_{i2},\breve{X}_{i3})^\top$ independently from a multivariate normal distribution
with mean zero and covariance ${\Upsilon}=(\sigma_{sk}: 1\le s, k\le 3).$ We set $\sigma_{ss}=1$ and $\sigma_{sk}=0.5$ for $s\neq k.$
Then we define $X_{i1}=\breve{X}_{i1}$, $X_{i2}=3.5\{2\Phi(\breve{X}_{i2})-1\}$, and $Z_{i1}=\breve{X}_{i3}$.
The random errors $\epsilon_i~(i=1,\dots,n)$ are independently generated from three different distributions: the standard normal distribution ($\mathcal{N}(0,1)$),
a standardized Chi-square distribution with 2 degrees of freedom ($\text{Schi}^2(2)$), and a Student's $t$-distribution with 4 degrees of freedom ($t(4)$).
The standardized Chi-square distribution is skewed but still has a zero mean,
while the $t(4)$ distribution represents moderately heavy-tailed errors.
We set $\gamma_1^*=0.5$, and the responses $Y_i$ are then generated under the following three cases:

Case 1: $Y_i=Z_{i1}\gamma_1^*+\alpha_0^*{X}_i^\top{\beta}^*+\alpha_1^*f({X}_i^\top{\beta}^*,\tau_1^*)+\epsilon_i$,
where ${\beta}^*=(1,-1)^\top$, $\tau_1^*=0$, and ${\alpha}^*=(-1,1.5)^\top$.

 Case 2: $Y_i=Z_{i1}\gamma_1^*+\alpha_0^*{X}_i^\top{\beta}^*+\sum_{m=1}^2\alpha_m^*f({X}_i^\top{\beta}^*,\tau_m^*)+\epsilon_i$,
where ${\beta}^*=(1,-1)^\top$, ${\tau}^*=(-1,1)^\top$, and ${\alpha}^*=(1,-2,2)^\top$.

 Case 3: $Y_i=Z_{i1}\gamma_1^*+\alpha_0^*{X}_i^\top{\beta}^*+\sum_{m=1}^4\alpha_m^*f({X}_i^\top{\beta}^*,\tau_m^*)+\epsilon_i$,
where ${\beta}^*=(1,-2)^\top$, ${\tau}^*=(-4,-2,2,4)^\top$, and ${\alpha}^*=(-1,3,-2,-2,3)^\top$.

\begin{table}[h]\centering
{\footnotesize												
\caption{Percentages (\%) of correctly selecting $\widehat M_n=M_n^*$ for Cases 1-3.}\label{Table:Simulations:CS}					
\begin{tabular}{cccccrccrccr}				
\hline
& & & &\multicolumn{2}{c}{$\mathcal{N}(0,1)$} & & \multicolumn{2}{c}{$\text{Schi}^2(2)$} & & \multicolumn{2}{c}{$t(4)$}\\
\cline{5-6} \cline{8-9} \cline{11-12} 			
Case	&	$\nu$	&	$n$	&		&	$C_n=1$ 	&	$\log \log n$	&		&	1 	&	$\log \log n$	&		&	1 	&	$\log \log n$	\\
\hline																							
1	&	0.6 	&	1000 	&	SCAD	&	95.8	&	99.8	&		&	95.2	&	99.2	&		&	94.6	&	98.5	\\
	&		&		&	MCP	&	95.3	&	99.9	&		&	94.6	&	99.3	&		&	93.7	&	98.9	\\
	&		&	2000 	&	SCAD	&	97.4	&	99.9	&		&	97.0	&	99.9	&		&	97.0	&	99.1	\\
	&		&		&	MCP	&	96.8	&	99.9	&		&	96.7	&	99.8	&		&	96.4	&	99.4	\\
	&	0.8 	&	1000 	&	SCAD	&	96.1	&	99.9	&		&	95.1	&	99.0	&		&	94.7	&	98.5	\\
	&		&		&	MCP	&	95.6	&	99.9	&		&	94.9	&	99.3	&		&	93.9	&	99.1	\\
	&		&	2000 	&	SCAD	&	97.5	&	99.9	&		&	96.7	&	99.8	&		&	97.3	&	99.2	\\
	&		&		&	MCP	&	97.1	&	99.9	&		&	96.8	&	99.8	&		&	96.3	&	99.3	\\
\hline																							
2	&	0.6 	&	1000 	&	SCAD	&	96.4	&	99.7	&		&	95.6	&	99.0	&		&	94.5	&	98.1	\\
	&		&		&	MCP	&	96.0	&	99.6	&		&	93.8	&	99.0	&		&	93.1	&	99.0	\\
	&		&	2000 	&	SCAD	&	97.5	&	100	&		&	96.7	&	99.9	&		&	95.9	&	99.4	\\
	&		&		&	MCP	&	97.5	&	99.9	&		&	96.4	&	99.8	&		&	95.8	&	99.7	\\
	&	0.8 	&	1000 	&	SCAD	&	96.5	&	99.7	&		&	95.1	&	99.0	&		&	94.1	&	98.1	\\
	&		&		&	MCP	&	96.1	&	99.6	&		&	94.1	&	99.2	&		&	93.0	&	98.9	\\
	&		&	2000 	&	SCAD	&	97.2	&	99.9	&		&	96.6	&	99.9	&		&	95.9	&	99.4	\\
	&		&		&	MCP	&	97.3	&	99.9	&		&	96.2	&	99.9	&		&	96.0	&	99.7	\\
\hline																							
3	&	0.6 	&	1000 	&	SCAD	&	97.0	&	99.9	&		&	96.8	&	99.7	&		&	96.9	&	99.5	\\
	&		&		&	MCP	&	97.0	&	99.9	&		&	96.9	&	99.7	&		&	96.9	&	99.5	\\
	&		&	2000 	&	SCAD	&	98.7	&	100	&		&	99.1	&	99.9	&		&	97.9	&	100	\\
	&		&		&	MCP	&	98.7	&	100	&		&	99.1	&	99.9	&		&	97.9	&	100	\\
	&	0.8 	&	1000 	&	SCAD	&	97.0	&	99.9	&		&	97.1	&	99.8	&		&	97.4	&	99.5	\\
	&		&		&	MCP	&	97.1	&	99.9	&		&	97.1	&	99.8	&		&	97.4	&	99.5	\\
	&		&	2000 	&	SCAD	&	98.9	&	100	&		&	99.2	&	100	&		&	98.1	&	100	\\
	&		&		&	MCP	&	98.9	&	100	&		&	99.2	&	100	&		&	98.0	&	100	\\
\hline	
\end{tabular}}											
\end{table}

\begin{table}[h]\centering									
{\footnotesize				
\caption{Simulation results (multiplied by 100) for Case 1 with $C_n=\log\{\log (n)\}$ 
and $\epsilon_i\sim \mathcal{N}(0,1)$.}\label{Table:caseI:normal}												
\begin{tabular}{cccrccccrccccrcccc}												
\hline																	
& & &\multicolumn{4}{c}{Oracle}& &\multicolumn{4}{c}{SCAD}& &\multicolumn{4}{c}{MCP}\\				
\cline{4-7} \cline{9-12} \cline{14-17} 	
$n$ & $\nu$	&		&	Bias	&	SE	&	SD	&	CP	&		&	Bias	&	SE	&	SD	&	CP	&		&	Bias	&	SE	&	SD	&	CP	\\
\hline											
1000	&	0.6	&	$\alpha_0$	&	0.302 	&	4.74 	&	4.94 	&	94.4 	&		&	0.327 	&	4.74 	&	4.94 	&	94.3 	&		&	0.313 	&	4.74 	&	4.94 	&	94.1 	\\
	&		&	$\alpha_1$	&	-0.430 	&	7.82 	&	8.43 	&	92.6 	&		&	-0.487 	&	7.82 	&	8.42 	&	92.6 	&		&	-0.449 	&	7.82 	&	8.41 	&	92.5 	\\
	&		&	$\tau_1$	&	-0.244 	&	5.61 	&	5.90 	&	93.5 	&		&	-0.263 	&	5.61 	&	5.86 	&	93.5 	&		&	-0.242 	&	5.61 	&	5.85 	&	93.6 	\\
	&		&	$\beta_2$	&	-0.381 	&	4.13 	&	4.31 	&	93.4 	&		&	-0.403 	&	4.13 	&	4.29 	&	93.4 	&		&	-0.393 	&	4.13 	&	4.28 	&	93.5 	\\
	&		&	$\gamma_1$	&	0.026 	&	3.35 	&	3.39 	&	94.7 	&		&	0.028 	&	3.35 	&	3.37 	&	95.0 	&		&	0.026 	&	3.35 	&	3.37 	&	95.1 	\\
	&	0.8	&	$\alpha_0$	&	0.256 	&	4.74 	&	4.93 	&	94.1 	&		&	0.324 	&	4.74 	&	4.93 	&	94.1 	&		&	0.311 	&	4.74 	&	4.93 	&	94.0 	\\
	&		&	$\alpha_1$	&	-0.385 	&	7.81 	&	8.41 	&	92.3 	&		&	-0.504 	&	7.81 	&	8.41 	&	92.2 	&		&	-0.451 	&	7.81 	&	8.42 	&	92.4 	\\
	&		&	$\tau_1$	&	-0.279 	&	5.59 	&	5.88 	&	93.4 	&		&	-0.292 	&	5.59 	&	5.90 	&	93.1 	&		&	-0.249 	&	5.59 	&	5.88 	&	93.4 	\\
	&		&	$\beta_2$	&	-0.334 	&	4.12 	&	4.29 	&	93.7 	&		&	-0.399 	&	4.13 	&	4.28 	&	93.7 	&		&	-0.391 	&	4.12 	&	4.28 	&	93.6 	\\
	&		&	$\gamma_1$	&	0.033 	&	3.34 	&	3.39 	&	94.7 	&		&	0.027 	&	3.34 	&	3.37 	&	95.0 	&		&	0.027 	&	3.34 	&	3.38 	&	95.0 	\\		
\hline
2000	&	0.6	&	$\alpha_0$	&	0.195 	&	3.35 	&	3.31 	&	95.4 	&		&	0.204 	&	3.35 	&	3.30 	&	95.5 	&		&	0.203 	&	3.35 	&	3.31 	&	95.5 	\\
	&		&	$\alpha_1$	&	-0.341 	&	5.52 	&	5.52 	&	95.2 	&		&	-0.362 	&	5.52 	&	5.52 	&	95.1 	&		&	-0.357 	&	5.52 	&	5.54 	&	95.0 	\\
	&		&	$\tau_1$	&	-0.093 	&	3.95 	&	4.06 	&	94.4 	&		&	-0.101 	&	3.95 	&	4.06 	&	94.6 	&		&	-0.097 	&	3.95 	&	4.06 	&	94.5 	\\
	&		&	$\beta_2$	&	-0.311 	&	2.91 	&	2.90 	&	94.8 	&		&	-0.320 	&	2.91 	&	2.90 	&	94.7 	&		&	-0.321 	&	2.91 	&	2.90 	&	94.8 	\\
	&		&	$\gamma_1$	&	0.017 	&	2.37 	&	2.31 	&	95.6 	&		&	0.014 	&	2.37 	&	2.30 	&	95.7 	&		&	0.014 	&	2.37 	&	2.30 	&	95.6 	\\
	&	0.8	&	$\alpha_0$	&	0.168 	&	3.35 	&	3.31 	&	95.5 	&		&	0.195 	&	3.35 	&	3.31 	&	95.6 	&		&	0.189 	&	3.35 	&	3.32 	&	95.4 	\\
	&		&	$\alpha_1$	&	-0.311 	&	5.52 	&	5.51 	&	95.3 	&		&	-0.361 	&	5.52 	&	5.54 	&	95.1 	&		&	-0.344 	&	5.52 	&	5.54 	&	95.1 	\\
	&		&	$\tau_1$	&	-0.108 	&	3.94 	&	4.07 	&	94.3 	&		&	-0.118 	&	3.94 	&	4.09 	&	94.3 	&		&	-0.109 	&	3.94 	&	4.08 	&	94.3 	\\
	&		&	$\beta_2$	&	-0.282 	&	2.90 	&	2.90 	&	94.9 	&		&	-0.311 	&	2.90 	&	2.90 	&	94.9 	&		&	-0.306 	&	2.90 	&	2.91 	&	94.9 	\\
	&		&	$\gamma_1$	&	0.021 	&	2.37 	&	2.31 	&	95.7 	&		&	0.016 	&	2.37 	&	2.30 	&	95.8 	&		&	0.016 	&	2.37 	&	2.30 	&	95.7 	\\
\hline													
\end{tabular}}											
\end{table}

Under Cases 1-3, we vary $M_n^*\in \{1, 2, 4\}$, $n \in \{1000, 2000\}$ and present
 different curves of $\varphi(w)=\alpha_0^*w+\sum_{m=1}^{M_n^*}\alpha^*_{m}f(w,\tau_m^*)$ in Figure S1.
Considering both the theoretical rate and common practice, we set $M_n=5$ at the order of $n^{1/15}$. Let
$\delta_n=\{\log(M_n)/n\}^{\nu}$ with $\nu$ varying in $0.6$ and $0.8$,
and $\tau_{\infty}=\max_{1\le i\le n} |{X}_i\widehat{{\beta}}^{(k)}|+1$.
For the penalty function, we consider MCP with $t=3$ and SCAD with $t=3.7$, and set $C_n$ to be $1$ or $\log\{\log (n)\}$ in BIC.
All the reported results are based on 1000 replications.

We begin by evaluating the consistency of $\widehat M_n$.
Table \ref{Table:Simulations:CS} reports the percentage of correctly specified models (i.e. $\widehat M_n=M_n^*$).
It can be observed that all the percentages are close to 1 and increase as the sample size $n$ varies from 1000 to 2000. These results validate the selection consistency  in Theorem
\ref{Theorem:Consistenncy}.
Furthermore, the performance of the proposed procedure with $C_n=\log\{\log(n)\}$ outperforms that with $C_n=1$.
Additionally, the results remain comparable when the value of $\nu$ varies from $0.6$ to $0.8$.
This indicates that the selection consistency is stable with respect to a proper choice of $\nu$.

\begin{table}
{\footnotesize
\caption{Simulation results (multiplied by 100) for Case 2 with $\nu=0.6$, $C_n=\log\{\log (n)\}$
and $\epsilon_i\sim \mathcal{N}(0,1)$.}											
\begin{tabular}{crccccrccccrcccc}		
\hline
 & &\multicolumn{4}{c}{Oracle}& &\multicolumn{4}{c}{SCAD}& &\multicolumn{4}{c}{MCP}\\	
\cline{3-6} \cline{8-11} \cline{13-16} 	
$n$ &		&	Bias	&	SE	&	SD	&	CP	&		&	Bias	&	SE	&	SD	&	CP	&		&	Bias	&	SE	&	SD	&	CP	\\	
\hline
1000	&		$\alpha_0$	&	0.188 	&	4.13 	&	4.13 	&	94.7 	&		&	0.143 	&	4.13 	&	4.17 	&	94.2 	&		&	0.147 	&	4.13 	&	4.16 	&	94.4 	\\
	&			$\alpha_1$	&	-0.788 	&	12.2 	&	12.7 	&	94.3 	&		&	-0.604 	&	12.2 	&	12.9 	&	94.6 	&		&	-0.602 	&	12.2 	&	13.0 	&	94.5 	\\
	&			$\alpha_2$	&	1.275 	&	13.6 	&	14.2 	&	94.0 	&		&	1.074 	&	13.6 	&	14.2 	&	93.9 	&		&	1.134 	&	13.6 	&	14.2 	&	93.9 	\\
	&			$\tau_1$	&	-0.024 	&	6.62 	&	6.87 	&	94.0 	&		&	-0.136 	&	6.63 	&	6.96 	&	93.6 	&		&	-0.136 	&	6.63 	&	6.98 	&	93.9 	\\
	&			$\tau_2$	&	0.271 	&	8.57 	&	9.09 	&	93.8 	&		&	0.361 	&	8.59 	&	9.33 	&	93.0 	&		&	0.428 	&	8.58 	&	9.42 	&	92.5 	\\
	&			$\beta_2$	&	0.050 	&	3.26 	&	3.36 	&	93.6 	&		&	0.008 	&	3.27 	&	3.40 	&	93.3 	&		&	0.008 	&	3.27 	&	3.40 	&	93.2 	\\
	&			$\gamma_1$	&	-0.125 	&	3.32 	&	3.24 	&	95.3 	&		&	-0.132 	&	3.32 	&	3.25 	&	95.1 	&		&	-0.132 	&	3.32 	&	3.26 	&	95.2 	\\
2000	&		$\alpha_0$	&	-0.076 	&	2.91 	&	2.86 	&	95.6 	&		&	-0.076 	&	2.91 	&	2.84 	&	95.9 	&		&	-0.090 	&	2.91 	&	2.84 	&	95.7 	\\
	&			$\alpha_1$	&	-0.379 	&	8.56 	&	8.98 	&	93.8 	&		&	-0.380 	&	8.56 	&	8.96 	&	93.9 	&		&	-0.316 	&	8.56 	&	8.97 	&	94.2 	\\
	&			$\alpha_2$	&	0.644 	&	9.56 	&	9.97 	&	94.1 	&		&	0.627 	&	9.56 	&	9.94 	&	94.3 	&		&	0.580 	&	9.56 	&	9.95 	&	94.1 	\\
	&			$\tau_1$	&	0.036 	&	4.66 	&	4.84 	&	94.3 	&		&	0.036 	&	4.66 	&	4.79 	&	94.4 	&		&	0.003 	&	4.67 	&	4.82 	&	94.3 	\\
	&			$\tau_2$	&	0.223 	&	6.03 	&	6.36 	&	93.0 	&		&	0.221 	&	6.03 	&	6.40 	&	92.9 	&		&	0.261 	&	6.03 	&	6.41 	&	92.9 	\\
	&			$\beta_2$	&	-0.086 	&	2.30 	&	2.41 	&	93.5 	&		&	-0.086 	&	2.30 	&	2.40 	&	93.5 	&		&	-0.098 	&	2.30 	&	2.40 	&	93.3 	\\
	&			$\gamma_1$	&	-0.080 	&	2.34 	&	2.31 	&	95.8 	&		&	-0.079 	&	2.34 	&	2.31 	&	95.7 	&		&	-0.080 	&	2.34 	&	2.31 	&	95.7 	\\
\hline									
\end{tabular}}	
\label{Table:caseII:normal}											
\end{table}

\begin{table}
{\footnotesize
\caption{Simulation results (multiplied by 100) for Case 3 with $\nu=0.6$, $C_n=\log\{\log (n)\}$
and $\epsilon_i\sim \mathcal{N}(0,1)$.}\label{Table:caseIII:normal}												
\begin{tabular}{ccrccccrccccrcccc}																											
\hline	
& &\multicolumn{4}{c}{Oracle}& &\multicolumn{4}{c}{SCAD}& &\multicolumn{4}{c}{MCP}\\ 	
\cline{3-6} \cline{8-11} \cline{13-16} 	
$n$ & &	Bias	&	SE	&	SD	&	CP	&		&	Bias	&	SE	&	SD	&	CP	&		&	Bias	&	SE	&	SD	&	CP	\\		
\hline
1000		&	$\alpha_0$	&	0.076 	&	3.05 	&	3.21 	&	93.7 	&		&	0.021 	&	3.05 	&	3.20 	&	94.1 	&		&	0.022 	&	3.05 	&	3.20 	&	94.1 	\\
			&	$\alpha_1$	&	1.186 	&	16.3 	&	18.0 	&	92.5 	&		&	1.881 	&	16.4 	&	17.6 	&	93.5 	&		&	1.871 	&	16.4 	&	17.5 	&	93.7 	\\
			&	$\alpha_2$	&	-1.459 	&	15.8 	&	17.2 	&	93.4 	&		&	-1.783 	&	15.9 	&	17.0 	&	93.5 	&		&	-1.778 	&	15.9 	&	17.0 	&	93.6 	\\
			&	$\alpha_3$	&	-1.445 	&	15.9 	&	16.8 	&	93.5 	&		&	-0.628 	&	15.8 	&	17.0 	&	93.0 	&		&	-0.596 	&	15.7 	&	17.0 	&	93.0 	\\
			&	$\alpha_4$	&	1.818 	&	18.4 	&	19.2 	&	93.8 	&		&	0.845 	&	18.3 	&	19.5 	&	93.3 	&		&	0.833 	&	18.3 	&	19.5 	&	93.3 	\\
			&	$\tau_1$	&	-0.575 	&	12.5 	&	13.3 	&	93.0 	&		&	-0.212 	&	12.5 	&	13.2 	&	93.4 	&		&	-0.214 	&	12.5 	&	13.1 	&	93.3 	\\
			&	$\tau_2$	&	-0.465 	&	11.3 	&	12.1 	&	92.9 	&		&	-1.002 	&	11.3 	&	11.9 	&	93.4 	&		&	-1.006 	&	11.3 	&	11.9 	&	93.4 	\\
			&	$\tau_3$	&	1.091 	&	11.3 	&	12.2 	&	92.3 	&		&	-0.032 	&	11.3 	&	12.5 	&	92.4 	&		&	-0.047 	&	11.3 	&	12.5 	&	92.4 	\\
			&	$\tau_4$	&	0.553 	&	13.6 	&	14.3 	&	93.5 	&		&	0.848 	&	13.6 	&	14.5 	&	93.4 	&		&	0.865 	&	13.6 	&	14.4 	&	93.3 	\\
			&	$\beta_2$	&	-0.269 	&	5.08 	&	5.37 	&	93.9 	&		&	-0.198 	&	5.08 	&	5.35 	&	94.2 	&		&	-0.198 	&	5.08 	&	5.35 	&	94.2 	\\
			&	$\gamma_1$	&	-0.217 	&	3.46 	&	3.56 	&	95.0 	&		&	-0.212 	&	3.46 	&	3.55 	&	94.9 	&		&	-0.212 	&	3.46 	&	3.55 	&	94.9 	\\	
	
2000		&	$\alpha_0$	&	0.026 	&	2.16 	&	2.19 	&	94.1 	&		&	-0.020 	&	2.16 	&	2.17 	&	94.6 	&		&	-0.020 	&	2.16 	&	2.17 	&	94.6 	\\
			&	$\alpha_1$	&	-0.054 	&	11.4 	&	11.9 	&	93.2 	&		&	0.443 	&	11.5 	&	11.8 	&	93.9 	&		&	0.477 	&	11.5 	&	11.7 	&	93.8 	\\
			&	$\alpha_2$	&	-0.055 	&	11.1 	&	11.4 	&	94.2 	&		&	-0.321 	&	11.1 	&	11.3 	&	94.0 	&		&	-0.353 	&	11.1 	&	11.3 	&	94.0 	\\
			&	$\alpha_3$	&	-0.660 	&	11.1 	&	11.6 	&	94.0 	&		&	-0.189 	&	11.1 	&	11.6 	&	93.1 	&		&	-0.206 	&	11.1 	&	11.6 	&	93.3 	\\
			&	$\alpha_4$	&	0.881 	&	12.9 	&	13.6 	&	94.3 	&		&	0.319 	&	12.8 	&	13.7 	&	93.7 	&		&	0.337 	&	12.8 	&	13.7 	&	93.7 	\\
			&	$\tau_1$	&	-0.482 	&	8.85 	&	9.05 	&	93.2 	&		&	-0.209 	&	8.85 	&	8.94 	&	93.7 	&		&	-0.199 	&	8.84 	&	8.94 	&	93.7 	\\
			&	$\tau_2$	&	0.167 	&	7.96 	&	8.32 	&	93.4 	&		&	-0.165 	&	7.96 	&	8.19 	&	93.7 	&		&	-0.193 	&	7.96 	&	8.18 	&	94.0 	\\
			&	$\tau_3$	&	0.253 	&	7.96 	&	8.50 	&	93.3 	&		&	-0.408 	&	7.95 	&	8.57 	&	93.0 	&		&	-0.399 	&	7.95 	&	8.56 	&	93.0 	\\
			&	$\tau_4$	&	0.193 	&	9.58 	&	10.1 	&	93.5 	&		&	0.318 	&	9.59 	&	10.0 	&	93.3 	&		&	0.314 	&	9.59 	&	10.0 	&	93.4 	\\
			&	$\beta_2$	&	-0.167 	&	3.59 	&	3.65 	&	94.8 	&		&	-0.103 	&	3.59 	&	3.61 	&	95.4 	&		&	-0.102 	&	3.59 	&	3.61 	&	95.4 	\\
			&	$\gamma_1$	&	-0.033 	&	2.45 	&	2.47 	&	93.5 	&		&	-0.032 	&	2.45 	&	2.46 	&	93.5 	&		&	-0.032 	&	2.45 	&	2.46 	&	93.5 	\\
		\hline										
\end{tabular}}										
\end{table}

We proceed to evaluate the consistency and normality of ${\widehat \theta}$.
We fix $C_n=\log\{\log(n)\}$.
The simulation results are summarized in Tables \ref{Table:caseI:normal}-\ref{Table:caseIII:normal}
and S1-S10 in the Supplementary Material.
These tables provide information on various performance measures of ${\widehat \theta}$.
Specifically, the tables include the bias (Bias), which is given by the sample mean of the estimate minus the true value. The sample standard deviation (SD) are also provided to assess the variability of ${\widehat \theta}$.
Additionally, the sample mean of the standard error (SE) estimate is reported. Finally, the $95\%$ empirical coverage probability (CP) based on the normal approximation is calculated to evaluate the accuracy of the confidence intervals.
For comparison, we also carry out the oracle procedure, denoted as ${\widehat \theta}_{o}$. This procedure involves  minimizing \eqref{Oracle:est} with  $M_n^*$   known.

The results indicate that our proposed method performed well for the situations considered.
Specifically, the proposed estimators are practically unbiased.
The estimated standard errors are very close to  the sample standard deviations,
and decrease as $n$ varies from $1000$ to $2000$.
The $95\%$ empirical coverage probabilities are also reasonable, suggesting that the proposed confidence intervals have good coverage properties.
Importantly, the performance of the proposed estimators is comparable to that of the oracle estimators.
These results demonstrate the validity of Theorems \ref{Theorem:Consistenncy} and \ref{Theorem:Oracle}.
Moreover, the results are comparable with $\nu$ varying from $0.6$ to $0.8$,
indicating that the consistency and normality of ${\widehat\theta}$ are stable to a proper choice of $\nu$.

\subsection{Power analysis}
We conduct simulation studies to evaluate the power performance of the test for detecting the existence of knots.
We set ${\beta}^*=(1,-1)^\top$ and $\gamma_1^*=0.5$. 
The covariates ${X}_i$ and $Z_{i1}$ are generated as in Section \ref{section:6.1}.
We considered two cases for generating $Y_i$:

Case 4: $Y_i=Z_{i1}\gamma_1^*+\alpha_0^*{X}_i^\top{\beta}^*+\alpha_1^*f({X}_i^\top{\beta}^*,\tau_1^*)+\epsilon_i,$
where $\tau_1^*=0$, $\alpha_0^*=1$ and $\alpha_1^*=\widetilde\alpha$.

Case 5: $Y_i=Z_{i1}\gamma_1^*+\alpha_0^*{X}_i^\top{\beta}^*+\sum_{m=1}^2\alpha_m^*f({X}_i^\top{\beta}^*,\tau_m^*)+\epsilon_i,$
where ${\tau}^*=(-1,1)^\top$, $\alpha_0^*=1$ and $\alpha_1^*=\alpha_2^*=\widetilde\alpha$.

\begin{table}[h]\centering	
{\footnotesize												
\caption{Empirical sizes (\%) and powers (\%) of the proposed test at 0.05 level of significance.}\label{Table:Simulations:Test}					
\begin{tabular}{cccccccccccccccc}				
\hline
&&&\multicolumn{6}{c}{$n=500$}&&\multicolumn{6}{c}{$n=1000$}\\
\cline{4-9} \cline{11-16}	
Case	&	$\epsilon_i$	&	$\nu$	&	$\widetilde\alpha=0$	&	0.05	&	0.1	&	0.15	&	0.2	&	0.25	&		&	$\widetilde\alpha=0$	&	0.05	&	0.1	&	0.15	&	0.2	&	0.25	\\
\hline																															
4	&	$\mathcal{N}(0,1)$	&	0.6	&	4.50 	&	14.5	&	41.8	&	77.1	&	94.2	&	99.6	&		&	5.20 	&	23.5	&	74.0 	&	98.3	&	100	&	100	\\
	&		&	0.7	&	4.20 	&	14.4	&	41.8	&	77.1	&	94.1	&	99.5	&		&	4.70 	&	23.0 	&	74.5	&	98.3	&	100	&	100	\\
	&		&	0.8	&	4.60 	&	14.1	&	42.1	&	77.0 	&	94.3	&	99.6	&		&	4.80 	&	22.7	&	73.9	&	98.3	&	100	&	100	\\
	&	$\text{Schi}^2(2)$	&	0.6	&	4.70 	&	10.6	&	37.6	&	75.9	&	96.3	&	99.9	&		&	5.50 	&	17.0 	&	72.2	&	98.1	&	100	&	100	\\
	&		&	0.7	&	4.70 	&	10.6	&	36.6	&	75.1	&	96.3	&	99.9	&		&	5.60 	&	16.8	&	72.6	&	98.0 	&	100	&	100	\\
	&		&	0.8	&	4.60 	&	10.5	&	37.3	&	75.8	&	96.4	&	99.9	&		&	5.90 	&	16.5	&	72.7	&	97.9	&	100	&	100	\\
	&	$t(4)$	&	0.6	&	5.00 	&	10.6	&	24.0 	&	45.9	&	73.1	&	88.6	&		&	5.00 	&	13.7	&	40.9	&	79.8	&	95.5	&	99.3	\\
	&		&	0.7	&	5.20 	&	10.3	&	24.2	&	46.5	&	73.3	&	89.0 	&		&	5.00 	&	13.2	&	40.7	&	79.5	&	95.4	&	99.3	\\
	&		&	0.8	&	4.90 	&	10.4	&	23.7	&	45.9	&	73.4	&	88.7	&		&	4.80 	&	13.8	&	41.1	&	79.7	&	95.5	&	99.2	\\
5	&	$\mathcal{N}(0,1)$	&	0.6	&	4.40 	&	45.1	&	97.1	&	100	&	100	&	100	&		&	4.70 	&	74.3	&	100	&	100	&	100	&	100	\\
	&		&	0.7	&	4.60 	&	44.8	&	97.3	&	100	&	100	&	100	&		&	5.00 	&	74.6	&	100	&	100	&	100	&	100	\\
	&		&	0.8	&	4.60 	&	44.8	&	97.0 	&	100	&	100	&	100	&		&	4.60 	&	74.0 	&	100	&	100	&	100	&	100	\\
	&	$\text{Schi}^2(2)$	&	0.6	&	5.20 	&	42.4	&	99.2	&	100	&	100	&	100	&		&	4.90 	&	76.4	&	100	&	100	&	100	&	100	\\
	&		&	0.7	&	5.40 	&	41.8	&	99.3	&	100	&	100	&	100	&		&	4.70 	&	75.8	&	100	&	100	&	100	&	100	\\
	&		&	0.8	&	5.80 	&	42.0 	&	99.2	&	100	&	100	&	100	&		&	4.90 	&	75.8	&	100	&	100	&	100	&	100	\\
	&	$t(4)$	&	0.6	&	5.30 	&	27.5	&	75.6	&	97.8	&	99.9	&	100	&		&	4.40 	&	46.9	&	96.8	&	100	&	100	&	100	\\
	&		&	0.7	&	4.80 	&	27.9	&	75.2	&	98.0 	&	99.9	&	100	&		&	4.60 	&	46.9	&	96.8	&	100	&	100	&	100	\\
	&		&	0.8	&	5.30 	&	27.4	&	75.3	&	97.7	&	99.9	&	100	&		&	4.40 	&	47.2	&	97.2	&	100	&	100	&	100	\\

\hline		
\end{tabular}}											
\end{table}
Under these settings, the number of knots is $M_n^*=1$ for Case 4 and $M_n^*=2$ for Case 5.
We vary $\widetilde \alpha \in \{0,~ 0.05,~ 0.1,~ 0.15,~ 0.2,~ 0.25\}$, where $\widetilde\alpha=0$ corresponds to the null hypothesis.
The departure from $H_0$ increases as $\widetilde\alpha$ varies from 0.05 to 0.25.
When calculating the test statistic $\mathcal{T}$,
we take the grid of 100 evenly spaced points in $\Theta_{\tau}=[-2.5,2.5].$
We vary $\nu\in \{0.6,~0.7,~0.8\}$.
The significance level is set to be $\varsigma=0.05$, and the sample size is set to be $n=500$ and $n=1000$.
The critical value is calculated using the resampling method with 1000 simulated realizations. The empirical sizes and powers of the test, based on 1000 replications, are summarized in Table \ref{Table:Simulations:Test}.
Under the null hypothesis ($\widetilde\alpha=0$), the empirical sizes are close to the nominal significance level of $5\%$.
The empirical sizes are comparable for different values of $\nu$, demonstrating the validity and robustness of the proposed test statistics to detect knots.
The empirical powers increase as the sample size $n$ or the magnitude of the knot effect $\widetilde\alpha$ increases.
When $\widetilde\alpha$ increases to $0.25$, the powers are almost $100\%$ for all settings.
As expected, the powers are higher when $M_n^*=2$ compared to $M_n^*=1$.

\section{Real data analysis}\label{section:real:data}

In this section, we analyze two real datasets to validate the performance of the proposed method.

\subsection{Real estate valuation dataset}
\label{REV}
In Example 1, the dataset consists of 414 real estate transaction records during the period of June 2012 to May 2013,
collected from two districts in Taipei City and two districts in New Taipei City.
The response variable is the residential housing price per unit area.
The goal of the analysis is to investigate the knot effects of various variables on housing prices.
Based on related research in \cite{yeh2018building}, four covariates are considered:
the distance to the nearest MRT station (Meter), house age (Year), transaction date (Date),
and the number of convenience stores in the living circle on foot (Number).
The raw data is available at https://archive.ics.uci.edu.

The analysis begins by attempting several choices of $X_{ij}$ as potential knot variables, but none of them yield better results in terms of goodness-of-fit
$R^2$,
compared to using the Number variable as $Z_{i1}$, the negative Meter variable as $X_{i1}$,
the Year variable as $X_{i2}$, and the Data variable as $X_{i3}$.
Here, $R^2=1-\text{SSE/SST}$, $\text{SST}=\sum_{i=1}^n(Y_i-\overline Y)^2$,
$\text{SSE}=\sum_{i=1}^n(\widehat Y_i-Y_i)^2$ and $\widehat Y_i~(i=1,\dots,n)$ are the fitted values of $Y_i$ using model \eqref{SIR-model}.
The testing procedure described in Section \ref{section:testing} is performed to examine the existence of knots.
The resulting p-value is less than 0.001, which indicates the presence of at least one knot.
The parameters in model \eqref{SIR-model} are then estimated using the proposed method.
For this purpose, as in the simulation studies, we set $M_n=5$, $C_n=\log\{\log(n)\}$ and $\delta_n=\{\log(M_n)/n\}^{0.8}$.
The estimated results are provided in Table \ref{Table:real data:REV}, and the estimated number of knots is $\widehat M_n=1$.
Figure \ref{Figure:real data} displays a scatter plot between the estimated index ${X}_i^\top{\widehat\beta}$ and the residential housing price per unit area, along with the fitted LSIR curve.

The estimated location of knot is $-0.25$ with p-value less than 0.05.
In addition, $\widehat\alpha_0=2.99$ and $\widehat\alpha_1=16.7$ with p-value less than 0.05.
This indicates that the housing prices have a slow increase with the index in the interval $(-\infty, -0.25]$,
but  increase quickly with a rate of $\widehat\mu_1=\widehat\alpha_0+\widehat\alpha_1=19.69$ when  the index exceed $-0.25$.
Finally, it is observed that the house age has a significant negative effect on housing prices, while the transaction date and the number of convenience stores have positive effects.

\begin{table}[h]\centering								
{\footnotesize
\caption{Analysis results for real estate valuation detaset.}\label{Table:real data:REV}									
\begin{tabular}{crccccrccc}												
\hline																	
&\multicolumn{4}{c}{SCAD}& &\multicolumn{4}{c}{MCP}\\				
\cline{2-5} \cline{7-10}												
&	Est	&	SE	&	CI	&	p-value	&		&	Est	&	SE	&	CI	&	p-value	\\
\hline	
$\alpha_0$	&	2.99 	&	0.77 	&	(1.47,4.50)	&	0.000 	&		&	2.99 	&	0.77 	&	(1.47,4.50)	&	0.000 	\\
$\alpha_1$	&	16.7 	&	2.19 	&	(12.4,21.0)	&	0.000 	&		&	16.7 	&	2.19 	&	(12.4,21.0)	&	0.000 	\\
$\tau_1$	&	-0.25 	&	0.11 	&	(-0.47,-0.03)	&	0.025 	&		&	-0.25 	&	0.11 	&	(-0.47,-0.03)	&	0.025 	\\
$\beta_2$	&	-0.15 	&	0.03 	&	(-0.21,-0.10)	&	0.000 	&		&	-0.15 	&	0.03 	&	(-0.21,-0.10)	&	0.000 	\\
$\beta_3$	&	0.11 	&	0.03 	&	(0.06,0.16)	&	0.000 	&		&	0.11 	&	0.03 	&	(0.06,0.16)	&	0.000 	\\
$\gamma_0$	&	26.8 	&	1.38 	&	(24.1,29.5)	&	0.000 	&		&	26.8 	&	1.38 	&	(24.1,29.5)	&	0.000 	\\
$\gamma_1$	&	0.52 	&	0.20 	&	(0.13,0.90)	&	0.009 	&		&	0.52 	&	0.20 	&	(0.13,0.90)	&	0.009 	\\
\hline																						
\end{tabular}}										
\end{table}

\subsection{Fish toxicity dataset}

In Example 2, we have obtained a fish toxicity dataset sourced from a study on predicting acute toxicity of chemicals towards the fathead minnow (\citealp{cassotti2015similarity}).
The dataset consists of 908 samples, where each sample represents a different chemical.
The response variable is LC50, which is the concentration that causes death in $50\%$ of test fathead minnows over a test duration of 96 hours.
In line with the research conducted by \cite{cassotti2015similarity}, the focus is on six specific molecular descriptors:
CIC0 (information indices), SM1\_Dz(Z) (2D matrix-based descriptors), MLOGP (molecular properties), GATS1i (2D autocorrelations), NdsCH (atom-type counts), and NdssC (atom-type counts).
The raw data can be accessed at  https://michem.unimib.it.

\begin{figure}
\centering
\subfigure{\includegraphics[width=0.45\textwidth]{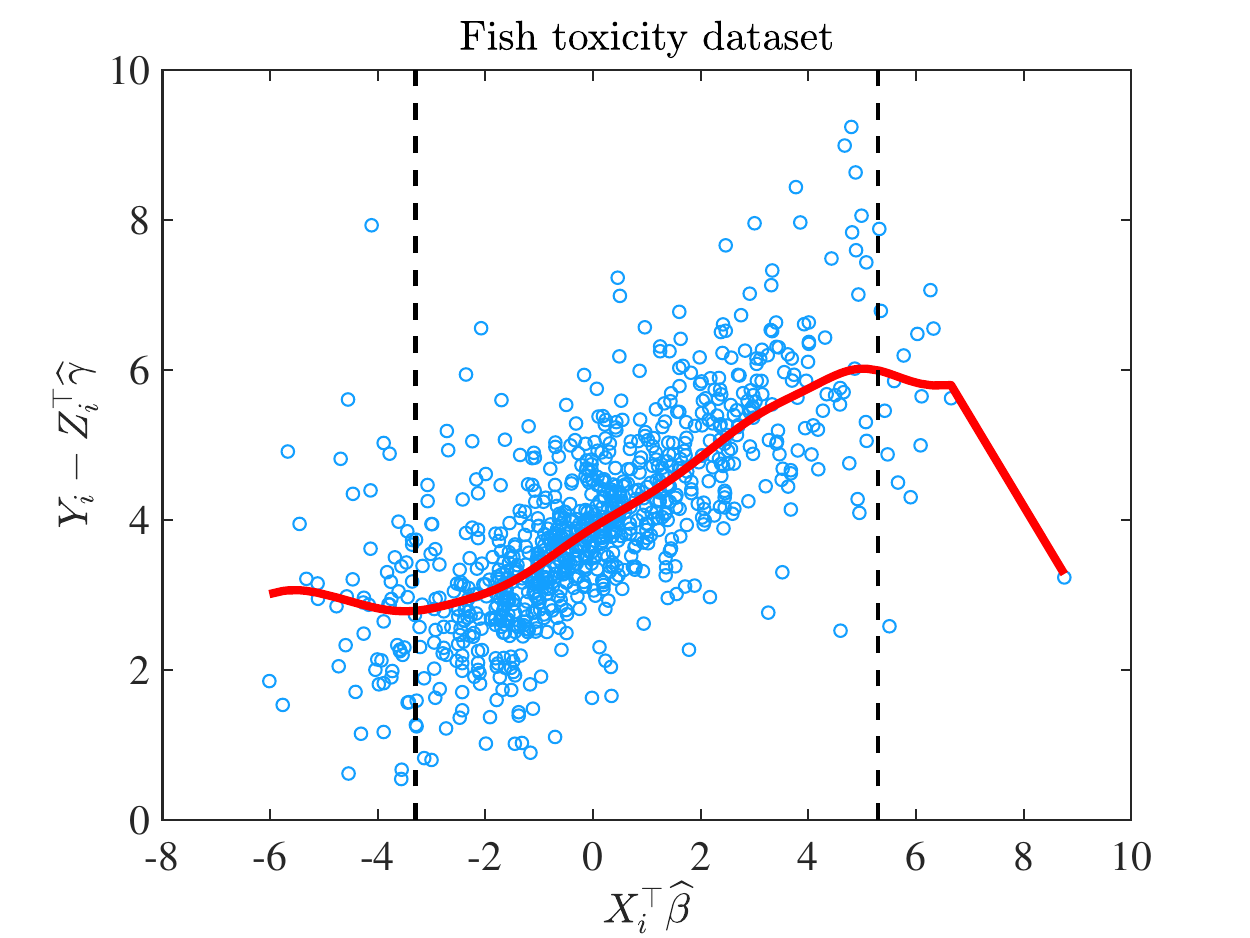}}
\subfigure{\includegraphics[width=0.45\textwidth]{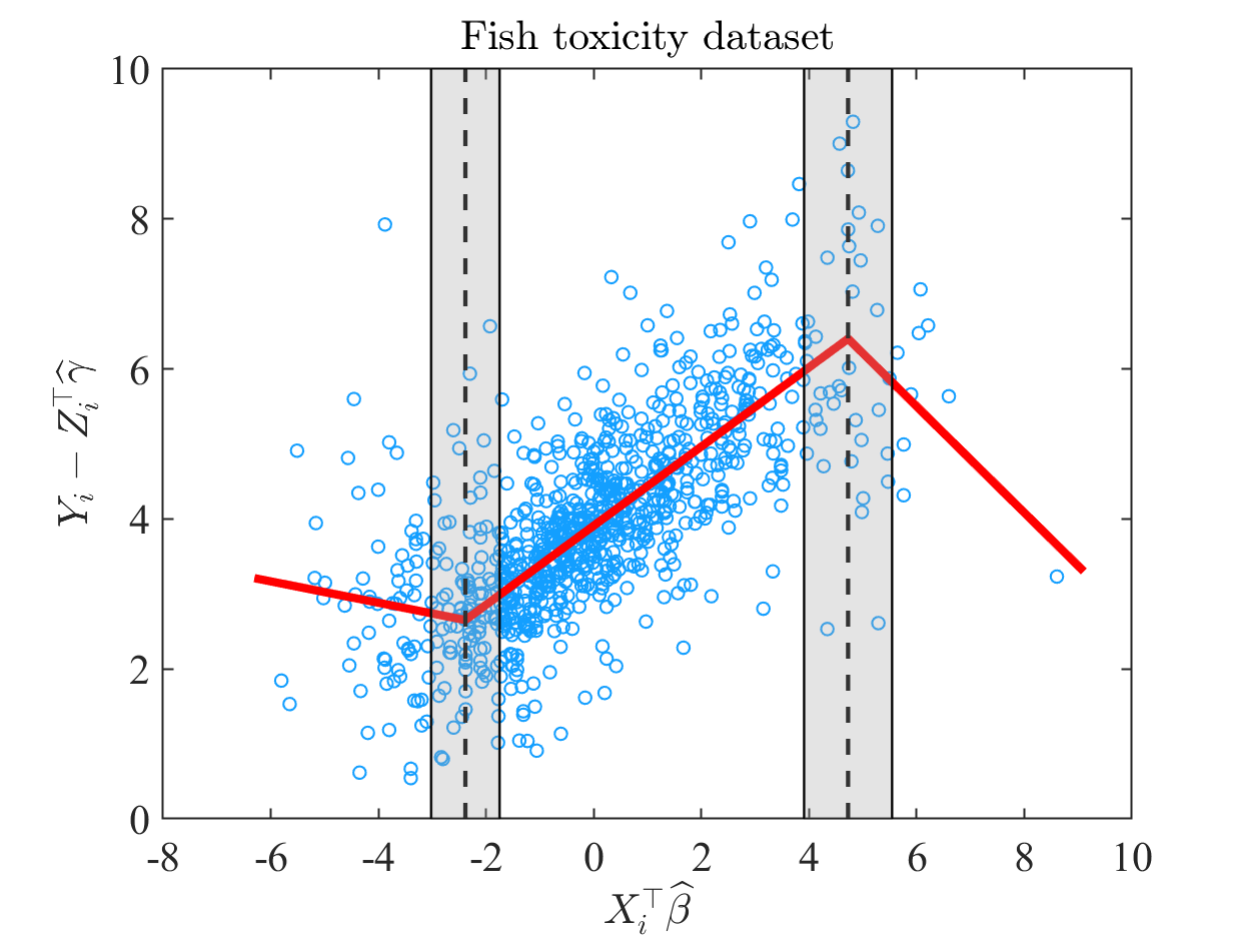}}
\caption{{\small The analysis of the fish toxicity dataset.
The red lines in the left and right panels are the estimated curves via the single-index regression model \citep{wang2010estimation}
and model \eqref{SIR-model}, respectively.
The grey region represents the $95\%$ confidence intervals of the estimates of the knot locations.}}\label{Figure:fish}
\end{figure}

We investigate whether there exist  effects of various variables on LC50.
Several choices of $X_i$ are tested, but none of them result in better $R^2$ values,
compared to using GATS1i as $Z_{i1}$, NdsCH as $Z_{i2}$, NdssC as $Z_{i3}$,
CIC0 as $X_{i1}$, SM1\_Dz(Z) as $X_{i2}$, and MLOGP as $X_{i3}$.
The $R^2$ is defined as in Section \ref{REV}.
To examine the presence of knots, the testing procedure described in Section \ref{section:testing} is performed, yielding a p-value of 0.0052,
which indicates the presence of at least one knot.
The parameters in model \eqref{SIR-model} are then estimated using the proposed method.
Here, we also set $M_n=5$, $C_n=\log\{\log(n)\}$ and $\delta_n=\{\log(M_n)/n\}^{0.8}$.
The estimated results can be found in Table \ref{Table:real data:Fish}, and the number of estimated knots is $\widehat M_n=2$.
The left panel of Figure \ref{Figure:fish} reports the estimated curve derived from the partial single-index regression model \citep{wang2010estimation},
while the right panel displays the fitted LSIR curve.

The right panel of Figure \ref{Figure:fish} shows the estimated knot locations at -2.38 and 4.73, both with p-values less than 0.05.
Additionally, $\widehat\alpha_0=-0.14$ with p-value larger than 0.05,
while both $\widehat\alpha_1=0.67$ and $\widehat\alpha_2=-1.24$ with p-values less than 0.05.
These findings indicate distinct effects of the index on LC50 in different intervals.
Specifically, LC50 remains stable when the index falls within the interval $(-\infty, -2.38]$.
LC50 increases at a rate of $\widehat{\mu}_1 = \widehat{\alpha}_0 + \widehat{\alpha}_1 = 0.53$ when the index is in the interval $(-2.28, 4.73]$.
Finally, LC50 decreases rapidly when the index is in the interval $(4.73, \infty)$.


\begin{table}[h]\centering								
{\footnotesize
\caption{Analysis results for fish toxicity dataset.}\label{Table:real data:Fish}									
\begin{tabular}{crccccrccc}												
\hline																	
&\multicolumn{4}{c}{SCAD}& &\multicolumn{4}{c}{MCP}\\				
\cline{2-5} \cline{7-10}												
&	Est	&	SE	&	CI	&	p-value	&		&	Est	&	SE	&	CI	&	p-value	\\
\hline	
$\alpha_0$	&	-0.14 	&	0.11 	&	(-0.37,0.08)	&	0.212 	&		&	-0.14 	&	0.11 	&	(-0.37,0.08)	&	0.212 	\\
$\alpha_1$	&	0.67 	&	0.13 	&	(0.42,0.92)	&	0.000 	&		&	0.67 	&	0.13 	&	(0.42,0.92)	&	0.000 	\\
$\alpha_2$	&	-1.24 	&	0.25 	&	(-1.73,-0.74)	&	0.000 	&		&	-1.24 	&	0.25 	&	(-1.73,-0.74)	&	0.000 	\\
$\tau_1$	&	-2.38 	&	0.32 	&	(-3.02,-1.75)	&	0.000 	&		&	-2.38 	&	0.32 	&	(-3.02,-1.75)	&	0.000 	\\
$\tau_2$	&	4.73 	&	0.42 	&	(3.91,5.54)	&	0.000 	&		&	4.73 	&	0.42 	&	(3.91,5.54)	&	0.000 	\\
$\beta_2$	&	1.23 	&	0.09 	&	(1.05,1.42)	&	0.000 	&		&	1.23 	&	0.09 	&	(1.05,1.42)	&	0.000 	\\
$\beta_3$	&	1.11 	&	0.16 	&	(0.80,1.43)	&	0.000 	&		&	1.11 	&	0.16 	&	(0.80,1.43)	&	0.000 	\\
$\gamma_0$	&	2.31 	&	0.40 	&	(1.52,3.10)	&	0.000 	&		&	2.31 	&	0.40 	&	(1.52,3.10)	&	0.000 	\\
$\gamma_1$	&	-0.38 	&	0.04 	&	(-0.45,-0.31)	&	0.000 	&		&	-0.38 	&	0.04 	&	(-0.45,-0.31)	&	0.000 	\\
$\gamma_2$	&	0.37 	&	0.05 	&	(0.27,0.47)	&	0.000 	&		&	0.37 	&	0.05 	&	(0.27,0.47)	&	0.000 	\\
$\gamma_3$	&	0.03 	&	0.04 	&	(-0.05,0.11)	&	0.444 	&		&	0.03 	&	0.04 	&	(-0.05,0.11)	&	0.444 	\\
\hline										
\end{tabular}}										
\end{table}

\section{Discussion}\label{section:discussion}
We proposed a linear spline index regression model for data analysis,
where the locations and number of knots were unknown a priori.
We developed a penalized smoothing least squares method to estimate the unknown parameters
and demonstrated the consistency and asymptotic normality of the proposed estimators.
Moreover, we allowed the number of knots to diverge with the sample size.
Simulation studies have indicated that the proposed method performs effectively.

To conclude this article, we will discuss several interesting topics for future study.
First, the proposed method may be applicable to models with multiple indices and unknown knots.
In addition, we can use a polynomial regression model to capture the nonlinear effects of an index on each segment, defined as follows:
\begin{align*}
Y_i=\sum_{l=0}^{L_0}\alpha_{0l}(X_i^\top\beta)^l+\sum_{m=1}^{M_n^*}\bigg[\sum_{l=0}^{L_{m}}\alpha_{ml}(X_i^\top\beta)^l\bigg](X_i^\top\beta-\tau_m)_{+}+Z_i^\top\gamma+\epsilon_i,
\end{align*}
where $L_{m}~(0\le m\le M_n^*)$ represents unspecified positive integers and $\alpha_{0l}~(1\le l\le L_m)$ are unknown parameters.
Let $L$ and $M_n$ be some prespecified integers.
Inspired by \eqref{Example}, we can estimate the unknown parameters by minimizing
\begin{align*}
\frac{1}{2}\sum_{i=1}^n\bigg\{Y_i-Z_i^\top\gamma-\sum_{l=0}^{L}\alpha_{0l}(X_i^\top\beta)^l
-\sum_{m=1}^{M_n}\bigg[\sum_{l=0}^{L}\alpha_{ml}(X_i^\top\beta)^l\bigg]q_n(X_i^\top\beta,\tau_m)\bigg\}^2
&+n\sum_{m=0}^{M_n} \|\widetilde \alpha_m\|_1^{1/2},
\end{align*}
where $\widetilde\alpha_m=(\alpha_{m0},\dots,\alpha_{mL})^\top$ and $\|\widetilde \alpha_m\|_1=\sum_{l=0}^{L}|\alpha_{ml}|$.
Finally, the proposed method can be extended to encompass
generalized linear models, quantile regression models and survival models. Investigating the applicability and performance of the method in these broader contexts would be an intriguing direction for future research.

\section*{Acknowledgments}
Lianqiang Qu's research was partially supported by the National Natural Science Foundation of China (Grant No. 12471256).
Meiling Hao's research was partially supported by the Fundamental Research Funds for the Central Universities in UIBE (CXTD14-05)
and the National Natural Science Foundation of China (Grant Nos. 12371264 and 12171329).
Liuquan Sun's research was partially supported by the National Natural Science Foundation of China (Grant No. 12171463).

\section*{Supplementary material}
\label{SM}
Supplementary material includes the proofs of Theorems
\ref{Theorem:Consistenncy}-\ref{Theorem:Test},
an iterative procedure to minimize \eqref{SP:est},
and additional simulation results.

\bibliographystyle{myapalike}
\bibliography{ref}

\clearpage
\newpage

\appendix

\setcounter{page}{1}
\setcounter{equation}{0}

\renewcommand\thesection{\Alph{section}}
\numberwithin{equation}{section}
\newcommand{\supplementalequation}{\Alph{section}.\arabic{equation}}
\renewcommand{\theequation}{\supplementalequation}
\setcounter{equation}{0}

\begin{center}
{\Large\bf \textsf{Supplementary Material to ``Linear spline index regression model: Interpretability, nonlinearity and dimension reduction"}}


\end{center}

This supplementary material is organized as follows.
Section \ref{secA} provides the proofs of Theorems 1-4.
Section \ref{secB} presents several lemmas.
Section \ref{secC} introduces an iterative algorithm to minimize (4).
Section \ref{secD} presents additional results from the simulation studies.

\section{Proofs of Theorems 1-4}\label{secA}
\subsection{Proof of Theorem 1}
Define $\rho_n=\sqrt{s_n}(1/\sqrt{n}+a_n)$ and
\begin{align*}
    L_n({\theta})&=\underbrace{\frac{1}{2}\sum_{i=1}^n\Big\{Y_i-\widetilde {Z}_i^\top\eta-\alpha_0{X}_i^\top{\beta}-\sum_{m=1}^{M_n}\alpha_mq_n({X}_i^\top{\beta},\tau_m)\Big\}^2}_{h_n({\theta})}+n\sum_{m=1}^{M_n}p_{\lambda_n,t}(|\alpha_m|).
\end{align*}
The proofs are divided in the following steps.

{\it Step 1.}
We constrain $L_n({\theta})$ to the $s_n$-dimensional subspace
$\mathcal{B}=\{{\theta}: \alpha_{m}=0, m=M_n^*+1,\dots,M_n\}.$
Let
\[
\bar L_n({\theta}_o)=\underbrace{\frac{1}{2}\sum_{i=1}^n\Big\{Y_i-\widetilde{Z}_i^\top{\eta}-\alpha_0{X}_i^\top{\beta}-\sum_{m=1}^{M_n^*}\alpha_mq_n({X}_i^\top{\beta},\tau_m)\Big\}^2}_{\bar h_n({\theta}_o)}+n\sum_{m=1}^{M_n^*}p_{\lambda_n,t}(|\alpha_m|).
\]
We show that there exists a strict local minimizer ${\widehat \theta}_1$ of $\bar L_n({\theta}_o)$
such that $\|{\widehat \theta}_1-{\theta}_{o}^*\|=O_p(\rho_n)$.
It suffices to show that for any given $\varepsilon>0$, there exists a sufficiently large constant $C_0$ such that
\begin{align}\label{Them1:proof:eq0:1}
    \mathbb{P}\bigg\{\inf_{\|{u}\|=C_0} \bar L_n({\theta}_{o}^*+\rho_n {u})>\bar L_n({\theta}_{o}^*) \bigg\} \geq 1-\varepsilon.
\end{align}
Define $\mathcal{O}_n({u})=\bar L_n({\theta}_{o}^*+\rho_n {u})-\bar L_n({\theta}_{o}^*).$
By the Taylor expansion of $\mathcal{O}_n({u})$ at $\theta_o^*$, we obtain
\begin{align}\label{Them1:proof:eq0:2}
    \mathcal{O}_n({u})&
    = \bar h_n({\theta}_{o}^*+\rho_n {u})-\bar h_n({\theta}_{o}^*)+n\sum_{m=1}^{M_n^*}\big\{p_{\lambda_n,t}(|\alpha_m^*+\rho_n u_m|)-p_{\lambda_n,t}(|\alpha_m^*|)\big\}
    \nonumber\\
    &=  \underbrace{\rho_n\bigg\{\frac{\partial \bar h_n({\theta}_{o}^*)}{\partial {\theta}_o}\bigg\}^\top {u}}_{I_1}
    +\underbrace{\frac{1}{2}{u}^\top\frac{\partial^2 \bar h_n({\theta}_{o}^*)}{\partial {\theta}_o \partial{\theta}_o^\top}{u} \rho_n^2}_{I_2}
    +\underbrace{\frac{1}{6}\bigg\{\frac{\partial}{\partial {\theta}_o}\bigg({u}^\top\frac{\partial^2 \bar h_n({\breve\theta}_o)}{\partial {\theta}_o\partial{\theta}_o^\top}{u}\bigg)\bigg\}^\top {u} \rho_n^3}_{I_3}\nonumber\\
    &\quad+\underbrace{n\sum_{m=1}^{M_n^*}\rho_np'_{\lambda_n,t}(|\alpha_m^*|)\text{sgn}(\alpha_m^*)u_m}_{I_4}
    +\underbrace{n\sum_{m=1}^{M_n^*}\frac{1}{2}\rho_n^2p''_{\lambda_n,t}(|\alpha_m^*|)u_m^2\{1+o(1)\}}_{I_5},
\end{align}
where ${\breve \theta}_o$ lies between ${\theta}_{o}^*$ and ${\theta}_{o}^*+\rho_n {u}$ and $u_m$ is the $m$th element of ${u}.$
To prove \eqref{Them1:proof:eq0:1}, we need to demonstrate that the right-hand side of \eqref{Them1:proof:eq0:2} is positive for a sufficiently large value of $C_0$.
For simplicity, let  $d=d_1+d_2+1$, and define
\begin{align*}
    &T_{ni}({\theta_o})=Y_i-\widetilde{Z}_i^\top{\eta}-\alpha_0{X}_i^\top{\beta}-\sum_{m=1}^{M_n^*}\alpha_mq_n({X}_i^\top{\beta},\tau_m),\\
\text{and}~~&~T_i({\theta_o})=Y_i-\widetilde{Z}_i^\top{\eta}-\alpha_0{X}_i^\top{\beta}-\sum_{m=1}^{M_n^*}\alpha_mf({X}_i^\top{\beta},\tau_m).
\end{align*}
Then, let ${g}_{ni}({\theta_o})=\partial \bar h_n({\theta}_{o})/\partial {\theta}_o$
and $g_{ni,j}({\theta_o})$ be the $j$th element of ${g}_{ni}({\theta_o})$, where
\begin{align*}
g_{ni,j}({\theta_o})=\begin{cases}
    -q_n({X}_i^\top{\beta},\tau_{j})T_{ni}({\theta_o}), &\text{if}~1\le j\le M_n^*,\\
    -\alpha_{j-M_n^*}\frac{\partial}{\partial \tau_{j-M_n^*}}q_n({X}_i^\top{\beta},\tau_{j-M_n^*})T_{ni}({\theta_o}),
     &\text{if}~ M_n^*+1\le j\le 2M_n^*,\\
    -{X}_i^\top{\beta} T_{ni}({\theta_o}), & \text{if}~j=2M_n^*+1,\\
    -\big(\alpha_0X_{i,j-2M_n^*}+\sum_{m=1}^{M_n^*}\alpha_m\frac{\partial q_n({X}_i^\top{\beta},\tau_m)}{\partial \beta_{j-2M_n^*}} \big)T_{ni}({\theta_o}), &\text{if}~ 2M_n^*+2\le j \le 2M_n^*+d_1,\\
    -\widetilde Z_{i,j-2M_n^*-d_1-1}T_{ni}({\theta_o}),
    &\text{if}~ 2M_n^*+d_1+1\le j\le 2M_n^*+d.
\end{cases}
\end{align*}
Additionally, we define ${g}_i({\theta_o})=\big(g_{i,1}({\theta_o}),\ldots,g_{i,s_n^*}({\theta_o})\big)^\top$, where
\begin{align*}
g_{i,j}({\theta_o})=\begin{cases}
    -f({X}_i^\top{\beta},\tau_{j})T_{i}({\theta_o}), &\text{if}~1\le j\le M_n^*,\\
    \alpha_{j-M_n^*}I({X}_i^\top{\beta}>\tau_{j-M_n^*})T_{i}({\theta_o}),
     &\text{if}~ M_n^*+1\le j\le 2M_n^*,\\
    -{X}_i^\top{\beta} T_{i}({\theta_o}), & \text{if}~j=2M_n^*+1,\\
    -X_{i,j-2M_n^*}\big(\alpha_0+\sum_{m=1}^{M_n^*}\alpha_mI({X}_i^\top{\beta}>\tau_m) \big)T_{i}({\theta_o}), &\text{if}~ 2M_n^*+2\le j \le 2M_n^*+d_1,\\
    -\widetilde Z_{i,j-2M_n^*-d_1-1}T_{i}({\theta_o}), &\text{if}~ 2M_n^*+d_1+1\le j\le 2M_n^*+d.
\end{cases}
\end{align*}

We begin by demonstrating
\begin{align}\label{Them1:proof:eq1}
    |I_1|=O_p(n\rho_n^2\|{u}\|).
\end{align}
For this purpose, we can express the summation as follows:
\begin{align*}
   \sum_{i=1}^n{g}_{ni}({\theta}_{o}^*)=\sum_{i=1}^n\big\{{g}_{ni}({\theta}_{o}^*)-{g}_{i}({\theta}_{o}^*)\big\}+\sum_{i=1}^n {g}_{i}({\theta}_{o}^*).
\end{align*}
For each $1\le j\le M_n^*$, we have
\begin{align}\label{Them1:proof:eq1:3}
    &\mathbb{E}\big\{|g_{ni,j}({\theta}_{o}^*)-g_{i,j}({\theta}_{o}^*)|\big\}\nonumber\\
    \leq &  \mathbb{E}\Big\{\big|\big(Y_i-\widetilde{Z}_i^\top{\eta}^*-\alpha_0^*{X}_i^\top{\beta}^*\big)\big(q_n({X}_i^\top{\beta}^*,\tau_{j}^*)-f({X}_i^\top{\beta}^*,\tau_{j}^*)\big)\big|\Big\}\nonumber\\
    &\quad+\mathbb{E}\bigg\{\Big|\sum_{m=1}^{M_n^*}\alpha_m^*\Big(q_n({X}_i^\top{\beta}^*,\tau_{j})q_n({X}_i^\top{\beta}^*,\tau_{m}^*)-f({X}_i^\top{\beta}^*,\tau_{j})f({X}_i^\top{\beta}^*,\tau_{m}^*)\Big)\Big|\bigg\}.
\end{align}
For simplicity of presence, we assume that the support of $\mathcal{K}(x)$ is $[-1,1]$.
According to condition 5, we obtain
\begin{align}\label{Them1:proof:eq1:1}
|q_n(w,\tau_m^*)-f(w,\tau_m^*)|\le C_1\delta_nI(\tau_m^*-\delta_n\leq w\leq \tau_m^*+\delta_n),
\end{align}
and
\begin{align}\label{Them1:proof:eq1:2}
\mathbb{E}\big\{I(\tau_m^*-\delta_n\leq W_i\leq \tau_m^*+\delta_n)\big\}&=\int_{\tau_m^*-\delta_n}^{\tau_m^*+\delta_n}dF(w)=O(\delta_n),
\end{align}
where $W_i={X}_i^\top{\beta}^*$, $F(w)$ represents the cumulative distribution function of $W_i$ and $C_1$ is a positive constant.
By  \eqref{Them1:proof:eq1:1}, \eqref{Them1:proof:eq1:2} and conditions 1-4,
we have that the first term on the right hand side of \eqref{Them1:proof:eq1:3} satisfies
\begin{align*}
    &\mathbb{E}\Big\{\big|\big(Y_i-\widetilde{Z}_i^\top{\eta}^*-\alpha_0^*{X}_i^\top{\beta}^*\big)\big(q_n({X}_i^\top{\beta}^*,\tau_{j}^*)-f({X}_i^\top{\beta}^*,\tau_{j}^*)\big)\big|\Big\}\nonumber\\
    \leq& C_1\sum_{m=1}^{j-1}|\alpha_m^*|\times|\tau_j^*-\tau_m^*|\delta_n\mathbb{E}\big\{I(\tau_j^*-\delta_n\leq {X}_i^\top{\beta}^*\leq \tau_j^*+\delta_n)\big\}\nonumber\\
    &+C_1|\alpha_j^*|\mathbb{E}\big\{f({X}_i^\top{\beta}^*,\tau_j^*)\delta_nI(\tau_j^*-\delta_n\leq {X}_i^\top{\beta}^*\leq \tau_j^*+\delta_n)\big\}\nonumber\\
    &+C_1\mathbb
    {E}\big\{\delta_nI(\tau_j^*-\delta_n\leq {X}_i^\top{\beta}^*\leq \tau_j^*+\delta_n)\mathbb{E}(|\epsilon_i|\big|{X}_i^\top{\beta}^*)\big\}\nonumber\\
    \leq & O(M_n\delta_n^2)+O(\delta_n^2),
\end{align*}
and the second term on the right hand side of \eqref{Them1:proof:eq1:3} satisfies
\begin{align*}
    &\mathbb{E}\bigg\{\Big|\sum_{m=1}^{M_n^*}\alpha_m^*\Big(q_n({X}_i^\top{\beta}^*,\tau_{j})q_n({X}_i^\top{\beta}^*,\tau_{m}^*)-f({X}_i^\top{\beta}^*,\tau_{j})f({X}_i^\top{\beta}^*,\tau_{m}^*)\Big)\Big|\Big\}\nonumber\\
    \leq & C_1\sum_{m=1}^{j-1}|\alpha_m^*|\times|\tau_j^*-\tau_m^*|\delta_n\mathbb{E}\big\{I(\tau_j^*-\delta_n\leq {X}_i^\top{\beta}^*\leq \tau_j^*+\delta_n)\big\}\nonumber\\
    &+C_1^2|\alpha_j^*|\delta_n^2\mathbb{E}\big\{I(\tau_j^*-\delta_n\leq {X}_i^\top{\beta}^*\leq \tau_j^*+\delta_n)\big\}\nonumber\\
    &+C_1\sum_{m=j+1}^{M_n^*}|\alpha_m^*|\times|\tau_m^*-\tau_j^*|\delta_n\mathbb{E}\big\{I(\tau_m^*-\delta_n\leq {X}_i^\top{\beta}^*\leq \tau_m^*+\delta_n)\big\}\nonumber\\
    \leq & O(M_n\delta_n^2)+O(\delta_n^3)+O(M_n\delta_n^2).
\end{align*}
These facts, together with \eqref{Them1:proof:eq1:3}, imply
\begin{align}\label{Them1:proof:eq1:6}
    \mathbb{E}\big\{|g_{ni,j}({\theta}_{o}^*)-g_{i,j}({\theta}_{o}^*)|\big\}
    \le  O(M_n\delta_n^2)+O(\delta_n^2)=O(M_n^2\delta_n),~~j=1,\dots,M_n^*.
\end{align}
Furthermore, for each $M_n^*+1\le j\le s_n^*,$ we have
\begin{align}\label{Them1:proof:eq1:7}
\mathbb{E}\big\{\big|g_{ni,j}({\theta}_{o}^*)-g_{i,j}({\theta}_{o}^*)\big|\big\}\le O(M_n^2\delta_n).
\end{align}
The proof of \eqref{Them1:proof:eq1:7} is provided in Section \ref{secB}.

By Markov's inequality, it follows from \eqref{Them1:proof:eq1:6} and \eqref{Them1:proof:eq1:7} that
\begin{align*}
     \frac{1}{\sqrt{n}}\sum_{i=1}^n\big\{g_{ni,j}({\theta}_{o}^*)-g_{i,j}({\theta}_{o}^*)\big\}=O_p(\sqrt{n}M_n^2\delta_n).
\end{align*}
Thus, if $\sqrt{n}M_n^2\delta_n\to0$ as $n\rightarrow \infty,$ we obtain
\begin{align}\label{Them1:proof:eq1:8}
     \bigg\|\frac{1}{\sqrt{n}}\sum_{i=1}^n\big\{{g}_{ni}({\theta}_{o}^*)-{g}_{i}({\theta}_{o}^*)\big\}\bigg\|\leq O_p(\sqrt{ns_n}M_n^2\delta_n)=o_p(\sqrt{s_n}).
\end{align}
Furthermore, for each $1\le j\le s_n^*,$ by utilizing the central limit theorem, we can demonstrate
\[\frac{1}{\sqrt{n}}\sum_{i=1}^ng_{i,j}({\theta}_{o}^*)=O_p(1).\]
This implies
\begin{align}\label{Them1:proof:eq1:9}
    \bigg\|\frac{1}{\sqrt{n}}\sum_{i=1}^n{g}_i({\theta}_{o}^*)\bigg\|\leq O_p(\sqrt{s_n}).
\end{align}
Combining $\eqref{Them1:proof:eq1:8}$ and $\eqref{Them1:proof:eq1:9}$, we have
\begin{align}\label{Them1:proof:eq1:10}
    \bigg\|\frac{1}{\sqrt{n}}\frac{\partial \bar h_n({\theta}_{o}^*)}{\partial {\theta}_o}\bigg\|
    &=\bigg\|\frac{1}{\sqrt{n}}\sum_{i=1}^n{g}_{ni}({\theta}_{o}^*)-\frac{1}{\sqrt{n}}\sum_{i=1}^n{g}_i({\theta}_{o}^*)+\frac{1}{\sqrt{n}}\sum_{i=1}^n{g}_i({\theta}_{o}^*)\bigg\|\nonumber\\
    &\leq\bigg\|\frac{1}{\sqrt{n}}\sum_{i=1}^n{g}_{ni}({\theta}_{o}^*)-\frac{1}{\sqrt{n}}\sum_{i=1}^n{g}_i({\theta}_{o}^*)\bigg\|+\bigg\|\frac{1}{\sqrt{n}}\sum_{i=1}^n{g}_i({\theta}_{o}^*)\bigg\|\nonumber\\
    &=o_p(\sqrt{s_n})+O_p(\sqrt{s_n})=O_p(\sqrt{s_n}).
\end{align}
Then, we can get
\begin{align*}
    |I_1|=\rho_n\bigg|\bigg\{\frac{\partial\bar h_n({\theta}_{o}^*)}{\partial {\theta}_o}\bigg\}^\top {u}\bigg| \leq \rho_n\bigg\|\frac{\partial\bar h_n({\theta}_{o}^*)}{\partial {\theta}_o}\bigg\|\|{u}\|= O_p(\sqrt{ns_n}\rho_n)\|{u}\| = O_p(n\rho_n^2)\|{u}\|,
\end{align*}
and hence \eqref{Them1:proof:eq1} holds.

Next, we show
\begin{align}\label{Them1:proof:eq2}
    I_2\geq O_p(n\rho_n^2\|{u}\|^2/2).
\end{align}
Since for any ${a}\in\mathbb{R}^{s_n^*}$,
${a}^\top {V}({\theta}_{o}^*) {a}={a}^\top \mathbb{E}\big\{{H}_i({\theta}_{o}^*){H}_i({\theta}_{o}^*)^\top\big\}{a}>0$,
there exists a positive constant $\tilde{c}$ such that $\text{eig}_{\min}\big({V}({\theta}_{o}^*)\big)>\tilde{c},$ where $\text{eig}_{\min}({A})$ denotes the minimum eigenvalues of matrix ${A}$.
By Lemma 1 in Section \ref{secB}, we have
\begin{align*}
    I_2&
    =\frac{n\rho_n^2}{2}{u}^\top\bigg[\frac{1}{n}\frac{\partial^2 \bar h_n({\theta}_{o}^*)}{\partial {\theta}_o\partial{\theta}_o^\top}-{V}({\theta}_{o}^*)\bigg]{u} +\frac{n\rho_n^2}{2}{u}^\top {V}({\theta}_{o}^*){u}\nonumber\\
    &=\frac{n\rho_n^2}{2}{u}^\top {V}({\theta}_{o}^*){u}+o_p(1)n\rho_n^2\|{u}\|^2\nonumber\\
    &\geq  \tilde{c} n\rho_n^2\|{u}\|^2/2+o_p(n\rho_n^2\|{u}\|^2),
\end{align*}
which implies that \eqref{Them1:proof:eq2} holds.

For $I_3,$
we begin with the following decomposition
\begin{align}
    I_3\nonumber
    &=\underbrace{\frac{1}{6}\sum_{j,k,l=1}^{s_n^*}\frac{1}{n}\bigg\{\frac{\partial^3\bar h_n({\breve\theta}_o)}{\partial \theta_{oj} \partial\theta_{ok}\partial\theta_{ol}}-\mathbb{E}\frac{\partial^3\bar h_n({\theta}_{o}^*)}{\partial \theta_{oj} \partial\theta_{ok}\partial\theta_{ol}}\bigg\}u_ju_ku_ln\rho_n^3}_{I_{31}}\\
    &\quad+\underbrace{\frac{1}{6n}\sum_{j,k,l=1}^{s_n^*}\mathbb{E}\frac{\partial^3\bar h_n({\theta}_{o}^*)}{\partial \theta_{oj} \partial\theta_{ok}\partial\theta_{ol}}u_ju_ku_ln\rho_n^3}_{I_{32}}.\nonumber
\end{align}
By \eqref{Them1:proof:eq1:1} and \eqref{Them1:proof:eq1:2}, along with conditions 1-4,
we can show that for any ${\theta}_o$ satisfying $\|{\theta}_{o}-{\theta}_{o}^*\|\le \rho_n\|{u}\|,$
\begin{align}
   & \mathbb{E}\bigg(\frac{1}{n}\frac{\partial^3 \bar h_n({\theta}_o)}{\partial\theta_{oj} \partial\theta_{ok}\partial\theta_{ol}}\bigg)=\mathbb{E}\bigg(\frac{1}{n}\frac{\partial^3 \bar h_n({\theta}_{o}^*)}{\partial\theta_{oj} \partial\theta_{ok}\partial\theta_{ol}}\bigg)+o(1),\nonumber\\
   & \mathbb{VAR}\bigg(\frac{1}{n}\frac{\partial^3 \bar h_n({\theta}_o)}{\partial\theta_{oj} \partial\theta_{ok}\partial\theta_{ol}}\bigg)\leq O\Big(\frac{M_n^3}{n\delta_n}\Big).\nonumber
\end{align}
Therefore, if $s_n^3/(n\delta_n)\to0$ as $n\to\infty$, then according to Chebyshev's inequality, we can obtain
\begin{align}\label{Them1:proof:eq3:5}
\frac{1}{n}\bigg\{\frac{\partial^3\bar h_n({\breve\theta}_o)}{\partial \theta_{oj} \partial\theta_{ok}\partial\theta_{ol}}-\mathbb{E}\frac{\partial^3\bar h_n({\theta}_{o}^*)}{\partial \theta_{oj} \partial\theta_{ok}\partial\theta_{ol}}\bigg\}=o_p(1),
\end{align}
which implies that $|I_{31}|\le o_p(s_n^{3/2}n\rho_n^3\|{u}\|^3) .$

For $I_{32},$ we have
\begin{align*}
    |I_{32}|
    &\le \frac{1}{6}\bigg[\sum_{j,k,l=1}^{s_n}\bigg\{\frac{1}{n}\mathbb{E}\frac{\partial^3\bar h_n({\theta}_{o}^*)}{\partial \theta_{oj} \partial\theta_{ok}\partial\theta_{ol}}\bigg\}^2\bigg]^{1/2}\|{u}\|^3n\rho_n^3\\
    & \le O(ns_n^{3/2}M_n^2\rho_n^3\|{u}\|^3),
\end{align*}
where the first inequality holds due to the Cauchy-Schwarz inequality,
and the second inequality holds because $n^{-1}\mathbb{E}\{\partial^3 \bar h_n({\theta}_{o}^*)/\partial\theta_{oj}\partial\theta_{ok}\partial\theta_{ol}\}< O(M_n^2).$
Thus, $|I_{32}|=o(n\rho_n^2)\|{u}\|^2$ if $s_n^4/\sqrt{n}\to0$ and $s_n^4a_n\to0$ as $n\rightarrow\infty.$
These facts implies
\begin{align}\label{Them1:proof:eq3}
        |I_3|=o_p(n\rho_n^2\|{u}\|^2).
\end{align}
By the definition of $a_n,$
we can show
\begin{align}\label{Them1:proof:eq4}
        |I_4|=\bigg|\sum_{m=1}^{M_n^*}n\rho_np'_{\lambda_n,t}(|\alpha_m^*|)\text{sgn}(\alpha_m^*)u_m\Big| \leq \sqrt{M_n^*}n\rho_na_n\|{u}\|\leq n\rho_n^2\|{u}\|.
\end{align}
For $I_5,$ since $b_n\to0$ as $n\to\infty$, we have
\begin{align}\label{Them1:proof:eq5}
    |I_5|=\bigg|\frac{1}{2}\sum_{m=1}^{M_n^*}n\rho_n^2p''_{\lambda_n,t}(|\alpha_m^*|)u_m^2\{1+o(1)\}\bigg| \leq o(n\rho_n^2)\|{u}\|^2.
\end{align}
Combining  \eqref{Them1:proof:eq1}, \eqref{Them1:proof:eq2}, and \eqref{Them1:proof:eq3}-\eqref{Them1:proof:eq5},
we obtain that $I_1,I_3,I_4$ and $I_5$ are dominated by $I_2$ for some sufficiently large $C_0$.
Thus, the sign of $\mathcal{O}_n({u})$ is determined by that of $I_2$,
which is positive.
This proves that \eqref{Them1:proof:eq0:1} holds, and  therefore $\|{\widehat \theta}_1-{\theta}_{o}^*\|=O_p(\rho_n).$

Note that when $\alpha_m=0~(m=M_{n}^*+1,\dots, M_n),$
$L_n({\theta})$ is free of ${\widetilde \tau}=(\tau_{M_n^*+1},\dots,\tau_{M_n})^\top$
and $\bar L_{n}({\theta}_o)=L_n({\theta}).$
Hence, for any ${\widetilde \tau}\in \mathbb{R}^{M_n-M_n^*},$
${\widehat\theta}({\widetilde\tau})=({\widehat\theta}_1^\top,{\widetilde \tau}^\top,{0}_{M_n-M_n^*}^\top)^\top$
is also a local minimizer of $L_{n}({\theta})$ in the subspace $\mathcal{B},$
and satisfies that $\|{\widehat\theta}({\widetilde\tau})-{\theta}^*({\widetilde\tau})\|=O_p(\rho_n),$
where ${\theta}^*({\widetilde\tau})=(({\theta}_{o}^*)^\top,{\widetilde \tau}^\top,{0}_{M_n-M_n^*}^\top)^\top.$

{\it Step 2.} We show that with probability tending to 1,
the local minimizer ${\widehat{\theta}}({\widetilde \tau})$ defined in Step 1
is also the local minimizer of $L_n({\theta})$ for any ${\widetilde\tau}$.
For this purpose, let $\mathbb{N}_C=\{\theta: \|{\theta}-{\widehat\theta}({\widetilde\tau})\|\le C_2\rho_n\}\cap\mathcal{B}$,
and $\mathbb{N}_1\subset\mathbb{R}^{s_n}$ be a sufficiently small ball around ${\widehat\theta}({\widetilde\tau})$,
satisfying that for any ${\theta}\in\mathbb{N}_1$, $\|{\theta}-{\widehat\theta}({\widetilde\tau})\|=O(\rho_n)$
and $\mathbb{N}_1\cap \mathcal{B}\subset \mathbb{N}_C$.
For simplicity, write ${\widehat\theta}={\widehat\theta}({\widetilde\tau})$ and ${\theta}^*={\theta}^*({\widetilde\tau})$.

It suffices to show that for any ${w}_1\in \mathbb{N}_1-\mathbb{N}_C,$ $L_n({w}_1)>L_n({\widehat\theta})$.
We begin with the following decomposition
\begin{align*}
    L_n({w}_1)-L_n({\widehat\theta})=L_n({w}_1)-L_n({w}_2)+L_n({w}_2)-L_n({\widehat\theta}),
\end{align*}
where ${w}_2$ is the projection of ${w}_1$ on subspace $\mathcal{B}$.
By the definition of ${\widehat \theta}$, we have that ${\widehat\theta}$ is a local minimizer of $L_n({\theta})$
constrained on subspace $\mathcal{B},$ and hence $L_n({w}_2) > L_n({\widehat\theta})$ if ${w}_2\neq \widehat\theta$.
Thus, we need to show that $L_n({w}_1)-L_n({w}_2)>0$.

Note that $w_{1j}=w_{2j}$ for $1\le j\le \tilde s_n,$ where $\tilde s_n=1+d_1+d_2+M_n+M_n^*,$
and $w_{kj}$ is the $j$th element of ${w}_k$ for $k=1, 2$.
Therefore, by the mean-value theorem, we have
\[
L_n({w}_1)-L_n({w}_2)=\sum_{j=\tilde s_n+1}^{s_n}\frac{\partial L_n({w}_0)}{\partial w_{j}}w_{1j},
\]
where ${w}_0$ lies on the line segment joining ${w}_1$ and ${w}_2.$
We show that the term on the right hand side of the above equation is positive.
It suffices to show that for each $j=\tilde s_n+1,\ldots,s_n$, with probability tending to 1, it holds
\begin{align}
    \frac{\partial L_n({w})}{\partial w_j} & < 0 \quad \text{for} -\varepsilon_n<w_j<0,\label{Them1:proof:2:eq0:1} \\
   \text{and}~~\frac{\partial L_n({w})}{\partial w_j} & >0 \quad \text{for}\quad 0<w_j<\varepsilon_n\label{Them1:proof:2:eq0:2},
\end{align}
where $\varepsilon_n=C_3\sqrt{s_n/n},$ and $C_3$ denotes a positive constant.
Here ${w}$ is a vector satisfying $\|{w}-{\theta}^*\|\le O(\sqrt{s_n/n})$, and $w_j$ is the $j$th element of ${w}$.

We first show \eqref{Them1:proof:2:eq0:2}.
By the Taylor expansion of $\partial L_n({w})/\partial w_{j}$ at ${\theta}^*$, we have
\begin{align}\label{Them1:proof:2:eq0:3}
    \frac{\partial L_n({w})}{\partial w_j}
    &=\underbrace{\frac{\partial h_n({\theta}^*)}{\partial w_j}}_{I_{6j}}
    +\underbrace{\sum_{k=1}^{s_n}\frac{\partial^2 h_n({\theta}^*)}{\partial w_{j}\partial w_{k}}(w_{k}-\theta_{k}^*)}_{I_{7j}}\nonumber\\
    &\quad+\underbrace{\sum_{k,l=1}^{s_n}\frac{\partial^3 h_n({\breve\theta})}{\partial w_{j}\partial w_{k}\partial w_{l}}(w_{k}-\theta_{k}^*)(w_{l}-\theta_{l}^*)}_{I_{8j}}
    +np'_{\lambda_n,t}(|w_{j}|)\text{sgn}(w_{j}),
\end{align}
where ${\breve \theta}$ lies between ${w}$ and ${\theta}^*$.

For each $j=\tilde s_n+1,\dots,s_n$, using the arguments similar to the proof of \eqref{Them1:proof:eq1:10}
and considering the fact that  $\sqrt{n}M_n^2\delta_n/s_n\to0$, we can conclude
\begin{align}\label{Them1:proof:2:eq1}
    I_{6j}=o_p(\sqrt{n}s_n).
\end{align}
For $I_{7j}$, using the arguments similar to the proof of Lemma 1 in Section \ref{secB}, we can show
\begin{align}\label{Them1:proof:2:eq2:1}
    \frac{1}{n}\frac{\partial^2 h_n({\theta}^*)}{\partial w_{j}\partial w_{k}}-\widetilde V_{j,k}({\theta}^*)=o_p(1), ~~k=1,\dots,s_n,
\end{align}
where $\widetilde V_{j,k}({\theta}^*)$ denotes the $(j,k)$th element of ${\widetilde V}({\theta}^*)=\mathbb{E}\big\{{\widetilde H}_i({\theta}^*){\widetilde H}_i({\theta}^*)^\top\big\}$ with
\begin{align*}
    {\widetilde H}_i({\theta}^*)=&\Big(f({X}_i^\top{\beta}^*,\tau_{1}^*),\ldots,f({X}_i^\top{\beta}^*,\tau_{M_n}^*),-\alpha_{1}^*I{({X}_i^\top{\beta}^*>\tau_{1}^*)},\dots,-\alpha_{M_n}^*I{({X}_i^\top{\beta}^*>\tau_{M_n}^*)},\\
    &\hspace{2in}{X}_i^\top{\beta}^*,\Big[\alpha_{0}^*+\sum_{m=1}^{M_n}\alpha_{m}^* I({X}_i^\top{\beta}^*>\tau_{m}^*)\Big]{\widetilde X}_i^\top,\widetilde{Z}_i^\top\Big)^\top.
\end{align*}
By \eqref{Them1:proof:2:eq2:1} and the fact that $\|{w}-{\theta}^*\|\
\leq O(\sqrt{s_n/n})$, we have
\begin{align}\label{Them1:proof:2:eq2}
    I_{7j}
    &=\sum_{k=1}^{s_n}\bigg\{\frac{\partial^2 h_n({\theta}^*)}{\partial w_{j}\partial w_{k}}-n\widetilde V_{j,k}({\theta}^*)\bigg\}(w_{k}-\theta_{k}^*)
    +\sum_{k=1}^{s_n}n\widetilde V_{j,k}({\theta}^*)(w_{k}-\theta_{k}^*)\nonumber\\
    &\leq\bigg[\sum_{k=1}^{s_n}\bigg\{\frac{\partial^2 h_n({\theta}^*)}{\partial w_{j}\partial w_{k}}-n\widetilde V_{j,k}({\theta}^*)\bigg\}^2\bigg]^{1/2}\|{w}-{\theta}^*\|+n\bigg\{\sum_{k=1}^{s_n}\widetilde V_{j,k}^2({\theta}^*)\bigg\}^{1/2}\|{w}-{\theta}^*\|\nonumber\\
    &=o_p(\sqrt{n}s_n)+O(\sqrt{n}s_n)=O_p(\sqrt{n}s_n),
\end{align}
where the inequality holds due to the Cauchy–Schwarz inequality.
Similarly, we can show
\begin{align}\label{Them1:proof:2:eq3}
    I_{8j}
    &=\sum_{k,l=1}^{s_n}\bigg\{\frac{\partial^3 h_n({\breve\theta})}{\partial w_{j}\partial w_{k}\partial w_{l}}
    -\mathbb{E}\bigg(\frac{\partial^3 h_n({\theta}^*)}{\partial w_{j}\partial w_{k}\partial w_{l}}\bigg)\bigg\}(w_{k}-\theta_{k}^*)(w_{l}-\theta_{l}^*)\nonumber\\
    &\quad+\sum_{k,l=1}^{s_n}\mathbb{E}\bigg(\frac{\partial^3 h_n({\theta}^*)}{\partial w_{j}\partial w_{k}\partial w_{l}}\bigg)(w_{k}-\theta_{k}^*)(w_{l}-\theta_{l}^*)\nonumber\\
    &\leq o_p(ns_n)\|{w}-{\theta}^*\|^2+O(ns_nM_n^2)\|{w}-{\theta}^*\|^2\nonumber\\
    &=o_p(s_n^2)+O(s_n^2M_n^2)=o_p(\sqrt{n}s_n).
\end{align}
Combining \eqref{Them1:proof:2:eq0:3}-\eqref{Them1:proof:2:eq3}, we obtain
\begin{align*}
     \frac{\partial L_n({w})}{\partial w_{j}}
     =n\sqrt{s_n}\lambda_n\Big\{\lambda_{n}^{-1}p'_{\lambda_n,t}(|w_{j}|)\text{sgn}(w_{j})
     +O_p\big(\sqrt{s_n/(n\lambda_n^2)}\big)\Big\}.
\end{align*}
Since $\sqrt{s_n/(n\lambda_n^2)}\to0$ and $\lambda_n^{-1}p'_{\lambda_n,t}(0^+)>0,$ we have that for all sufficiently large $n$,
\begin{align*}
     \frac{\partial L_n({w})}{\partial w_{j}}
     > n\sqrt{s_n}p'_{\lambda_n,t}(|w_{j}|)\text{sgn}(w_{j})/2,
\end{align*}
which implies that the sign of $\partial L_n({w})/\partial w_{j}$ is determined by that of $w_{j}$.
Hence, we have that \eqref{Them1:proof:2:eq0:2} holds.
Similarly, we can show that \eqref{Them1:proof:2:eq0:1} holds.
Note that ${w}_1\in\mathbb{N}_1-\mathbb{N}_C$ and $\mathbb{N}_1\cap \mathcal{B}\subset \mathbb{N}_C$.
Thus, there exists at least one $\tilde s_n+1\le j\le s_n$ such that $w_{1j}\neq 0,$
and the sign of $w_{0j}$ is the same as that of $w_{1j}$.
These facts imply
\[
L_n({w}_1)-L_n({w}_2)=\sum_{j=\tilde s_n+1}^{s_n}\frac{\partial L_n({w}_0)}{\partial w_{j}}w_{1j}\\
> n\sqrt{s_n}p'_{\lambda_n,t}(|w_{0j}|)|w_{1j}|/2\ge 0
\]
for all sufficient large $n$ and all ${\widetilde \tau}\in \mathbb{R}^{M_n-M_n^*}.$
This completes the proof of Theorem 1.

\subsection{Proof of Theorem 2}
By Theorem 1, we know that there exists a local minimizer ${\widehat\theta}$ of $L_n({\theta})$ such that ${\widehat\theta}$ is $(n/s_n)^{1/2}$-consistent.
To prove Theorem 2, we first show
\begin{align}\label{Them2:proof:eq1}
    \big\{ {V}({\theta}_{o}^*)+{\Sigma}_{\lambda_n}({\theta}_{o}^*)\big\}({\widehat \theta}_{1}-{\theta}_{o}^*)+{B}({\theta}_{o}^*)
    =-\frac{1}{n}\sum_{i=1}^n{g}_i({\theta}_{o}^*)+o_p(n^{-1/2}).
\end{align}
By the Taylor expansion of $\partial \bar L_n({\widehat\theta}_{1})/\partial {\theta}_o$ at ${\theta}_{o}^*$, we have
\begin{align}\label{Them2:proof:eq1:1}
    &\frac{1}{n}\bigg[\Big\{\frac{\partial^2 \bar h_n({\theta}_{o}^*)}{\partial {\theta}_{o}\partial{\theta}_{o}^\top}+n{\Sigma}_{\lambda_n}({\breve\theta}_{o})\Big\}({\widehat\theta}_{1}-{\theta}_{o}^*)+n{B}({\theta}_{o}^*)\bigg]\nonumber\\
    =&-\frac{1}{n}\frac{\partial \bar h_n({\theta}_{o}^*)}{\partial {\theta}_{o}}-\frac{1}{2n}({\widehat\theta}_{1}-{\theta}_{o}^*)^\top\frac{\partial^2}{\partial{\theta}_{o}\partial{\theta}_{o}^\top}\Big(\frac{\partial \bar h_n({\breve\theta}_{o})}{\partial {\theta}_{o}}\Big)({\widehat\theta}_{1}-{\theta}_{o}^*),
\end{align}
where ${\breve\theta}_{o}$ lies between ${\widehat\theta}_{1}$ and ${\theta}_{o}^*$.
For simplicity of presentation, we define
\begin{align*}
    {\Upsilon}_1&=\frac{\partial^2 \bar h_n({\theta}_{o}^*)}{\partial {\theta}_{o}\partial{\theta}_{o}^\top}+n{\Sigma}_{\lambda_n}({\breve\theta}_{o}),\\
\text{and}~~{\Upsilon}_2&=\frac{1}{2}({\widehat\theta}_{1}-{\theta}_{o}^*)^\top\frac{\partial^2}{\partial{\theta}_{o}\partial{\theta}_{o}^\top}\Big(\frac{\partial \bar h_n({\breve\theta}_{o})}{\partial {\theta}_{o}}\Big)({\widehat\theta}_{1}-{\theta}_{o}^*).
\end{align*}
By Lemma 1 and condition 6, we have
\begin{align*}
    \bigg\|\frac{1}{n}{\Upsilon}_1-\big\{{V}({\theta}_{o}^*)+{\Sigma}_{\lambda_n}({\theta}_{o}^*)\big\}\bigg\|_\mathrm{F}=o_p\Big(\frac{1}{\sqrt{s_n}}\Big).
\end{align*}
This, together with $\|{\widehat\theta}_{1}-{\theta}_{o}^*\|=O_p(\sqrt{s_n/n})$, implies
\begin{align}\label{Them2:proof:eq1:2}
    &\bigg\|\bigg\{\frac{1}{n}{\Upsilon}_1-{V}({\theta}_{o}^*)-{\Sigma}_{\lambda_n}({\theta}_{o}^*)\bigg\}({\widehat\theta}_{1}-{\theta}_{o}^*)\bigg\|\nonumber\\
    \leq & \bigg\|\frac{1}{n}{\Upsilon}_1-{V}({\theta}_{o}^*)-{\Sigma}_{\lambda_n}({\theta}_{o}^*)\bigg\|_\mathrm{F}\times\|{\widehat\theta}_{1}-{\theta}_{o}^*\|=o_p\Big(\frac{1}{\sqrt{n}}\Big).
\end{align}
By applying the Cauchy–Schwarz inequality to ${\Upsilon}_2$, we have
\begin{align}\label{Them2:proof:eq1:3}
    \bigg\|\frac{1}{n}{\Upsilon}_2\bigg\|^2
    &\leq\frac{1}{2}\|{\widehat\theta}_{1}-{\theta}_{o}^*\|^4\sum_{j,k,l=1}^{s_n^*}\bigg\{\frac{1}{n}\frac{\partial \bar h_n({\breve\theta}_{o})}{\partial \theta_{oj}\partial \theta_{ok}\partial \theta_{ol}}-\frac{1}{n}\mathbb{E}\frac{\partial \bar h_n({\breve\theta}_{o})}{\partial \theta_{oj}\partial \theta_{ok}\partial \theta_{ol}}\bigg\}^2\nonumber\\
    &\quad+\frac{1}{2}\|{\widehat\theta}_{1}-{\theta}_{o}^*\|^4\sum_{j,k,l=1}^{s_n^*}\bigg\{\frac{1}{n}\mathbb{E}\frac{\partial \bar h_n({\breve\theta}_{o})}{\partial \theta_{oj}\partial \theta_{ok}\partial \theta_{ol}}\bigg\}^2\nonumber\\
    &=O_p\Big(\frac{s_n^2}{n^2}\Big)o_p(s_n^3)+O_p\Big(\frac{s_n^2}{n^2}\Big)O(s_n^3M_n^4)=o_p\Big(\frac{1}{n}\Big).
\end{align}
In addition, it follows the proof of $I_1$ in Theorem 1 that
\begin{align}\label{Them2:proof:eq1:4}
        \frac{1}{n}\frac{\partial \bar h_n({\theta}_{o}^*)}{\partial {\theta}_{o}}
        =\frac{1}{n}\sum_{i=1}^n{g}_{i}({\theta}_{o}^*)+o_p\Big(\frac{1}{\sqrt{n}}\Big).
\end{align}
Combining \eqref{Them2:proof:eq1:1}-\eqref{Them2:proof:eq1:4}, we have that \eqref{Them2:proof:eq1} holds.

For simplicity, we write ${\Sigma}_{\lambda_n}^*={\Sigma}_{\lambda_n}({\theta}_{o}^*),$
${B}^*={B}({\theta}_{o}^*)$ and ${V}_*={V}({\theta}_{o}^*)$.
By \eqref{Them2:proof:eq1}, we consider the following equations:
\begin{align}\label{Them2:proof:eq2}
    &\sqrt{n}{A}_n{V}_*^{-1/2}\big\{ {V}_*+{\Sigma}_{\lambda_n}^*\big\}\Big[({\widehat \theta}_{1}-{\theta}_{o}^*)+\big\{ {V}_*+{\Sigma}_{\lambda_n}^*\big\}^{-1}{B}^*\Big]\nonumber\\
    =& -\frac{1}{\sqrt{n}}{A}_n{V}_*^{-1/2}\sum_{i=1}^n{g}_i({\theta}_{o}^*)+o_p\Big({A}_n{V}_*^{-1/2}\Big).
\end{align}
By the conditions in Theorem 2, it can be directly shown that the last term is $o_p(1)$.

Define
\begin{align*}
    \mathcal{Y}_{ni}=-\frac{1}{\sqrt{n}}{A}_n{V}_*^{-1/2}{g}_i({\theta}_{o}^*).
\end{align*}
Next, we show that $\mathcal{Y}_{ni}$ satisfies the Lindeberg–Feller conditions.
It follows that for any $\varepsilon>0,$
\begin{align}\label{Them2:proof:eq2:1}
    \sum_{i=1}^n\mathbb{E}\|\mathcal{Y}_{ni}\|^2I\big(\|\mathcal{Y}_{ni}\|>\varepsilon\big)\leq n\big\{\mathbb{E}\|\mathcal{Y}_{ni}\|^4\big\}^{1/2}\big\{\mathbb{P}\big(\|\mathcal{Y}_{ni}\|>\varepsilon\big)\big\}^{1/2}.
\end{align}
By the conditions in Theorem 2,  we have
\begin{align*}
    \mathbb{E}\|\mathcal{Y}_{n1}\|^4
    &=\frac{1}{n^2}\mathbb{E}\big\|{A}_n{V}_*^{-1/2}{g}_1({\theta}_{o}^*)\big\|^4\\
    &=\frac{1}{n^2}\mathbb{E}\big\|\big({g}_1({\theta}_{o}^*)\big)^\top \big({V}_*^{-1/2}\big)^\top {A}_n^\top {A}_n{V}_*^{-1/2}{g}_1({\theta}_{o}^*)\big\|^2\\
    &\leq \frac{1}{n^2}\text{eig}_{\max}({A}_n^\top {A}_n)\text{eig}_{\max}\big({V}_*^{-1}\big)\mathbb{E}\big\|\big({g}_1({\theta}_{o}^*)\big)^\top {g}_1({\theta}_{o}^*)\big\|^2\\
    &=O\Big(\frac{s_n^2}{n^2}\Big),
\end{align*}
and
\begin{align*}
    \mathbb{P}\big(\|\mathcal{Y}_{n1}\|>\varepsilon\big)\leq\frac{\mathbb{E}\big\|{A}_n{V}_*^{-1/2}{g}_1({\theta}_{o}^*)\big\|^2}{n\varepsilon^2}
    =O\Big(\frac{s_n}{n}\Big).
\end{align*}
Thus, by \eqref{Them2:proof:eq2:1} and the condition $s_n^3/n\to0,$ we have
\begin{align*}
    \sum_{i=1}^n\mathbb{E}\|\mathcal{Y}_{ni}\|^2I\big(\|\mathcal{Y}_{ni}\|>\varepsilon\big)=O\Big(n\frac{s_n}{n}\sqrt{\frac{s_n}{n}}\Big)=o(1).
\end{align*}
On the other hand, since ${A}_n{A}_n^\top \to {G}$ and $\mathbb{COV}\big\{{g}_1({\theta}_{o}^*)\big\}=\sigma^2{V}_*$,
we have
\begin{align*}
    \sum_{i=1}^n\mathbb{COV}(\mathcal{Y}_{ni})
    &=n\mathbb{COV}(\mathcal{Y}_{n1})=\mathbb{COV}\big\{{A}_n{V}_*^{-1/2}{g}_1({\theta}_{o}^*)\big\}\\
    &={A}_n{V}_*^{-1/2}\mathbb{COV}\big\{{g}_1({\theta}_{o}^*)\big\}\big({V}_*^{-1/2}\big)^\top {A}_n^\top\\
    &=\sigma^2{A}_n{V}_*^{-1/2}{V}_*\big({V}_*^{-1/2}\big)^\top {A}_n^\top\\
    &=\sigma^2{A}_n{A}_n^\top\to \sigma^2{G}.
\end{align*}
These facts imply that $\mathcal{Y}_{ni}$ satisfies the Lindeberg–Feller conditions.
Therefore, we obtain that $n^{-1/2}{A}_n{V}_*^{-1/2}\sum_{i=1}^n{g}_i({\theta}_{o}^*)$ converges in distribution to a multivariate normal random variable
with mean ${0}_q$ and covariance matrix $\sigma^2{G}$. This completes the proof of Theorem 2.

\subsection{Proof of Theorem 3}
Define $\mathcal{A}_n={V}_n({\widehat\theta}_1)+{\Sigma}_{\lambda_n}({\widehat\theta}_{1})$,
$\mathcal{A}={V}({\theta}_{o}^*)+{\Sigma}_{\lambda_n}({\theta}_{o}^*)$,
$\mathcal{B}_n={V}_n({\widehat\theta}_1)$ and $\mathcal{B}={V}({\theta}_{o}^*)$.
Then, we can write $n({\widehat\Xi}_n-{\Xi})$ as
\begin{align}\label{Them3:proof:eq01}
    n({\widehat\Xi}_n-{\Xi})=&\widehat\sigma^2\mathcal{A}_n^{-1}\mathcal{B}_n\mathcal{A}_n^{-1}-\sigma^2\mathcal{A}^{-1}\mathcal{B}\mathcal{A}^{-1}\nonumber\\
    =&\widehat\sigma^2\mathcal{A}_n^{-1}(\mathcal{B}_n-\mathcal{B})\mathcal{A}_n^{-1}+(\widehat\sigma^2-\sigma^2)\mathcal{A}_n^{-1}\mathcal{B}\mathcal{A}_n^{-1}\nonumber\\
    &+\sigma^2(\mathcal{A}_n^{-1}-\mathcal{A}^{-1})\mathcal{B}\mathcal{A}_n^{-1}+\sigma^2\mathcal{A}^{-1}\mathcal{B}(\mathcal{A}_n^{-1}-\mathcal{A}^{-1}),
\end{align}
and $\mathcal{A}_n^{-1}-\mathcal{A}^{-1}$ as
\begin{align}\label{Them3:proof:eq02}
    \mathcal{A}_n^{-1}-\mathcal{A}^{-1}=\mathcal{A}_n^{-1}(\mathcal{A}-\mathcal{A}_n)\mathcal{A}^{-1}.
\end{align}
Denote $\text{eig}_i({A})$ as the $i$th largest eigenvalue of a symmetric matrix ${A}$.
In what follows, we show
\begin{align}\label{Them3:proof:eq1}
    \text{eig}_i[n({\widehat\Xi}_n-{\Xi})]=o_p(1),
\end{align}
which indicates that ${\widehat\Xi}_n$ is a consistent estimator of ${\Xi}$.
Note that $|\text{eig}_i(\mathcal{A})|$ and $|\text{eig}_i(\mathcal{B})|$ are infinite and uniformly bounded away from 0.
Thus, in view of \eqref{Them3:proof:eq01} and \eqref{Them3:proof:eq02}, to prove \eqref{Them3:proof:eq1},
we need to demonstrate that $\text{eig}_i(\mathcal{A}_n-\mathcal{A})=o_p(1)$,
$\text{eig}_i(\mathcal{B}_n-\mathcal{B})=o_p(1)$ and $\widehat\sigma^2-\sigma^2=o_p(1)$.

We first show
\begin{align}\label{Them3:proof:eq6}
    \text{eig}_i(\mathcal{A}_n-\mathcal{A})=o_p(1).
\end{align}
We consider the following decomposition
\begin{align*}
    \mathcal{A}_n-\mathcal{A}&=\underbrace{{V}_n({\widehat\theta}_1)-{V}({\theta}_{o}^*)}_{I_9}+\underbrace{{\Sigma}_{\lambda_n}({\widehat\theta}_{1})-{\Sigma}_{\lambda_n}({\theta}_{o}^*)}_{I_{10}}.
\end{align*}
Since
\begin{align*}
    \text{eig}_{\min}(I_9)+\text{eig}_{\min}(I_{10})&\leq \text{eig}_{\min}(I_9+I_{10})\\
    &\leq \text{eig}_{\max}(I_9+I_{10})\leq \text{eig}_{\max}(I_{9})+ \text{eig}_{\max}(I_{10}),
\end{align*}
we can consider $\text{eig}_i(I_9)$ and $\text{eig}_i(I_{10})$ separately.
For $I_9$, since $\text{eig}_i^2(I_9)\leq\text{eig}_{\max}^2(I_9)\leq \|I_9\|_\mathrm{F}^2$,
it suffices to show
\begin{align}\label{Them3:proof:eq3}
    \|I_9\|_\mathrm{F}=o_p(1).
\end{align}
Note that
\begin{align*}
    I_9={V}_n({\widehat\theta}_1)-{V}({\theta}_{o}^*)=\underbrace{{V}_n({\widehat\theta}_1)-{V}_n({\theta}_{o}^*)}_{I_{91}}+\underbrace{{V}_n({\theta}_{o}^*)-{V}({\theta}_{o}^*)}_{I_{92}}.
\end{align*}
Using the arguments similar to the proof of \eqref{Them1:proof:eq3:5} in Theorem 1,
we can show
\begin{align*}
    \|I_{91}\|_\mathrm{F}^2&=\sum_{j=1}^{s_n^*}\sum_{k=1}^{s_n^*}\big\{V_{nj,k}({\widehat\theta}_1)-V_{nj,k}({\theta}_{o}^*)\big\}^2\\
    &=\sum_{j=1}^{s_n^*}\sum_{k=1}^{s_n^*}\bigg\{\sum_{l=1}^{s_n^*}\frac{\partial V_{nj,k}({\breve\theta}_o)}{\partial \theta_{ol} }(\widehat\theta_{1l}-\theta_{ol}^*)\bigg\}^2\\
    &\leq\sum_{j=1}^{s_n^*}\sum_{k=1}^{s_n^*}\sum_{l=1}^{s_n^*}\bigg\{\frac{\partial V_{nj,k}({\breve\theta}_o)}{\partial \theta_{ol} }\bigg\}^2
    \|{\widehat\theta}_1-{\theta}_{o}^*\|^2\\
    &=O_p\Big(\frac{s_n^4M_n^4}{n}\Big)+o_p\Big(\frac{s_n^4}{n}\Big),
\end{align*}
where ${\breve\theta}_{o}$ lies between ${\widehat\theta}_{1}$ and ${\theta}_{o}^*$.
Thus, if $s_n^8/n\to0$ as $n\to\infty$, we have
\begin{align}\label{Them3:proof:eq3:1}
    \|I_{91}\|_\mathrm{F}^2=O_p(s_n^4M_n^4/n)=o_p(1).
\end{align}
For $I_{92}$, by Lemma 1 in Section \ref{secB}, we obtain $\|I_{92}\|_\mathrm{F}=o_p(1).$
This, together with \eqref{Them3:proof:eq3:1}, implies \eqref{Them3:proof:eq3}.

In addition, by condition 6 and the fact that $\|{\widehat\theta}_1-{\theta}_{o}^*\|=O_p(\sqrt{s_n/n})$, we can show
\begin{align}\label{Them3:proof:eq4}
    \|I_{10}\|_\mathrm{F}&=\|{\Sigma}_{\lambda_n}({\widehat\theta}_{1})-{\Sigma}_{\lambda_n}({\theta}_{o}^*)\|_\mathrm{F}\nonumber\\
    &\leq \kappa_2\|{\widehat\theta}_1-{\theta}_{o}^*\|=O_p(\sqrt{s_n/n})=o_p(1).
\end{align}
Therefore, by \eqref{Them3:proof:eq3} and \eqref{Them3:proof:eq4}, we obtain \eqref{Them3:proof:eq6}.
Similarly, we can show that $\|\mathcal{B}_n-\mathcal{B}\|_\mathrm{F}=\|{V}_n({\widehat\theta}_1)-{V}({\theta}_{o}^*)\|_\mathrm{F}=o_p(1)$,
which implies
\begin{align}\label{Them3:proof:eq3:B}
    \text{eig}_i(\mathcal{B}_n-\mathcal{B})=o_p(1).
\end{align}

Next, we show
\begin{align}\label{Them3:proof:eq5}
    \widehat\sigma^2-\sigma^2=o_p(1).
\end{align}
By the definitions of $\widehat \sigma^2$ and $\sigma^2$, we can write
\begin{align*}
    \widehat\sigma^2-\sigma^2=\underbrace{\frac{1}{n}\sum_{i=1}^n\big\{T_{ni}^2({\widehat\theta}_1)-T_{ni}^2({\theta}_{o}^*)\big\}}_{I_{11}}+\underbrace{\frac{1}{n}\sum_{i=1}^nT_{ni}^2({\theta}_{o}^*)-\mathbb{E}T_{i}^2({\theta}_{o}^*)}_{I_{12}},
\end{align*}
where $T_{ni}({\widehat\theta}_1)$ and $T_{i}({\theta}_{o}^*)$ are defined in the proof of Theorem 1.
For $I_{11}$, by the Taylor expansion and the Cauchy-Schwarz inequality, we have
\begin{align*}
    I_{11}^2&=\bigg\{\frac{1}{n}\sum_{i=1}^n\sum_{j=1}^{s_n^*}\frac{\partial T_{ni}^2({\breve \theta}_o)}{\partial \theta_{oj}}(\widehat\theta_{1j}-\theta_{oj}^*)\bigg\}^2
    \leq\sum_{j=1}^{s_n^*}\bigg\{\frac{1}{n}\sum_{i=1}^n\frac{\partial T_{ni}^2({\breve \theta}_o)}{\partial \theta_{oj}}\bigg\}^2\|{\widehat\theta}_{1}-{\theta}_{o}^*\|^2,
\end{align*}
where ${\breve\theta}_{o}$ lies between ${\widehat\theta}_{1}$ and ${\theta}_{o}^*$.
Using the arguments similar to the proof of \eqref{Them1:proof:eq3:5}, we have
\begin{align}\label{Them3:proof:eq5:1}
    I_{11}^2\le&\sum_{j=1}^{s_n^*}\bigg\{\frac{1}{n}\sum_{i=1}^n\frac{\partial T_{ni}^2({\breve \theta}_o)}{\partial \theta_{oj}}\bigg\}^2\|{\widehat\theta}_{1}-{\theta}_{o}^*\|^2\nonumber\\
    \leq&2\sum_{j=1}^{s_n^*}\bigg(\mathbb{E}\frac{\partial T_1^2({\theta}_{o}^*)}{\partial \theta_{oj}}\bigg)^2\|{\widehat\theta}_{1}-{\theta}_{o}^*\|^2
    +2\sum_{j=1}^{s_n^*}\bigg\{\frac{1}{n}\sum_{i=1}^n\frac{\partial T_{ni}^2({\breve \theta}_o)}{\partial \theta_{oj}}-\mathbb{E}\bigg(\frac{\partial T_1^2({\theta}_{o}^*)}{\partial \theta_{oj}}\bigg)\bigg\}^2\|{\widehat\theta}_{1}-{\theta}_{o}^*\|^2\nonumber\\
    \leq & O_p\Big(\frac{s_n^2}{n}\Big)+o_p\Big(\frac{s_n^2}{n}\Big)=o_p(1)
\end{align}
with $s_n^2/n\to0$ as $n\to\infty$.
For $I_{12}$, by \eqref{Them1:proof:eq1:1}, \eqref{Them1:proof:eq1:2} and conditions 1-4, we have
\begin{align*}
    &\big|\mathbb{E}T_{ni}^2({\theta}_{o}^*)-\mathbb{E}T_i^2({\theta}_{o}^*)\big|\\
    =&\bigg|\mathbb{E}\bigg[\epsilon_i-\sum_{m=1}^{M_n^*}\alpha_m^*\Big\{q_n({X}_i^\top{\beta}^*,\tau_m^*)-f({X}_i^\top{\beta}^*,\tau_m^*)\Big\}\bigg]^2-\mathbb{E}\epsilon_i^2\bigg|\\
    =&\bigg|\mathbb{E}\bigg[\sum_{m=1}^{M_n^*}\alpha_m^*\Big\{q_n({X}_i^\top{\beta}^*,\tau_m^*)-f({X}_i^\top{\beta}^*,\tau_m^*)\Big\}\bigg]^2\\
    &-2\mathbb{E}\bigg[\epsilon_i\sum_{m=1}^{M_n^*}\alpha_m^*\Big\{q_n({X}_i^\top{\beta}^*,\tau_m^*)-f({X}_i^\top{\beta}^*,\tau_m^*)\Big\}\bigg]\bigg|\\
    \leq& C_1^2\delta_n^2\sum_{m=1}^{M_n^*}\sum_{k=1}^{M_n^*}|\alpha_m^*||\alpha_k^*|\times \mathbb{P}\big(|{X}_i^\top{\beta}^*-\tau_m^*|\leq\delta_n,|{X}_i^\top{\beta}^*-\tau_k^*|\leq\delta_n\big)\\
    \leq& C_1^2M_n\delta_n^3\max_{1\le m\le M_{n}^*} |\alpha_m^*|^2F'(w_m)=O(M_n\delta_n^3).
\end{align*}
Similarly, we can show
\begin{align*}
    \mathbb{VAR}\bigg\{\frac{1}{n}\sum_{i=1}^nT_{ni}^2({\theta}_{o}^*)\bigg\}
    &=\frac{1}{n}\mathbb{VAR}\bigg\{\epsilon_i-\sum_{m=1}^{M_n^*}\alpha_m^*\Big(q_n({X}_i^\top{\beta}^*,\tau_m^*)-f({X}_i^\top{\beta}^*,\tau_m^*)\Big)\bigg\}^2\\
    &\leq \frac{8}{n}\mathbb{E}(\epsilon_i^4)+\frac{8}{n}\mathbb{E}\bigg\{\sum_{m=1}^{M_n^*}\alpha_m^*\Big(q_n({X}_i^\top{\beta}^*,\tau_m^*)-f({X}_i^\top{\beta}^*,\tau_m^*)\Big)\bigg\}^4\\
    &\leq \frac{8}{n}\mathbb{E}(\epsilon_i^4)+\frac{8C_1^4M_n\delta_n^5}{n}\max_{1\le m\le M_{n}^*} |\alpha_m^*|^4F'(w_m)=O(1/n)+O(M_n\delta_n^5/n).
\end{align*}
Thus, if $s_n\delta_n^3\to0$ and $s_n\delta_n^5/n\to0$ as $n\to\infty$,
using the Chebyshev’s inequality yields $I_{12}=o_p(1),$
which, together with \eqref{Them3:proof:eq5:1}, implies that \eqref{Them3:proof:eq5} holds.
By \eqref{Them3:proof:eq6}, \eqref{Them3:proof:eq3:B} and \eqref{Them3:proof:eq5}, we obtain \eqref{Them3:proof:eq1}.
This completes the proof of Theorem 3.

\subsection{Proof of Theorem 4}
With a slight abuse of notation, let ${\widehat\theta}=(\widehat\alpha_0,\widehat\beta_2,\ldots,\widehat\beta_{d_1},{\widehat\eta}^\top)^\top$ and ${\theta}^*=(\alpha_0^*,\beta_2^*,\ldots,\beta_{d_1}^*,{\eta}^{*\top})^\top$,
where $\widehat\alpha_0, {\widehat\beta}$ and ${\widehat\eta}$ are obtained using (5).
By the Taylor expansion of $\psi(\tauo,\widehat\alpha_0,{\widehat\beta},{\widehat\eta})$ at ${\theta}^*$, we have
\begin{align*}
    \psi(\tauo,\widehat\alpha_0,{\widehat\beta},{\widehat\eta})
    =&\frac{1}{\sqrt{n}}\sum_{i=1}^nq_n({X}_i^\top{\widehat\beta},\tauo)\big(Y_i-\widetilde{Z}_i^\top{\widehat\eta}-\widehat\alpha_0{X}_i^\top{\widehat\beta}\big)\\
    =&\underbrace{\frac{1}{\sqrt{n}}\sum_{i=1}^nq_n({X}_i^\top{\beta}^*,\tauo)\big(Y_i-\widetilde{Z}_i^\top{\eta}^*-\alpha_0^*{X}_i^\top{\beta}^*\big)}_{I_{13}}+\underbrace{\frac{1}{n}\sum_{i=1}^n{D}_{ni}(\tauo)^\top\sqrt{n}({\widehat\theta}-{\theta}^*)}_{I_{14}}\\
    &+\underbrace{\frac{1}{\sqrt{n}}\sum_{i=1}^n({\widehat\theta}-{\theta}^*)^\top\frac{\partial^2\psi_i(\tauo,{\breve\theta})}{\partial {\theta} \partial {\theta}^\top}({\widehat\theta}-{\theta}^*)}_{I_{15}},
\end{align*}
where ${\breve\theta}$ lies between ${\widehat\theta}$ and ${\theta}^*$,  $\psi_i(\tauo,{\theta})=q_n({X}_i^\top{\beta},\tauo)\big(Y_i-\widetilde{Z}_i^\top{\eta}-\alpha_0{X}_i^\top{\beta}\big)$ and ${D}_{ni}(\tauo)=\big(D_{ni,1}(\tauo),\dots,D_{ni,d}(\tauo)\big)^\top$ with
\begin{align*}
    D_{ni,j}(\tauo)=\begin{cases}
    -{X}_i^\top{\beta}^*q_n({X}_i^\top{\beta}^*,\tauo), & \text{if}~j=1,\\
    -\alpha_0^*X_{i,j}q_n({X}_i^\top{\beta}^*,\tauo)+\frac{\partial q_n({X}_i^\top{\beta}^*,\tauo)}{\partial \beta_{j}}\Big(Y_i-\widetilde{Z}_i^\top{\eta}^*-\alpha_0^*{X}_i^\top{\beta}^*\Big), &\text{if}~ 2\le j \le d_1,\\
    -\widetilde Z_{i,j-d_1-1}q_n({X}_i^\top{\beta}^*,\tauo),
    &\text{if}~ d_1+1\le j\le d.\\
\end{cases}
\end{align*}
Note that
\begin{align*}
    I_{13}=&\frac{1}{\sqrt{n}}\sum_{i=1}^n\big\{q_n({X}_i^\top{\beta}^*,\tauo)-f({X}_i^\top{\beta}^*,\tauo)\big\}\big(Y_i-\widetilde{Z}_i^\top{\eta}^*-\alpha_0^*{X}_i^\top{\beta}^*\big)\\
    &+\frac{1}{\sqrt{n}}\sum_{i=1}^nf({X}_i^\top{\beta}^*,\tauo)\big(Y_i-\widetilde{Z}_i^\top{\eta}^*-\alpha_0^*{X}_i^\top{\beta}^*\big).
\end{align*}
Under $H_{1n}$, using some arguments similar to the proof of \eqref{Them1:proof:eq1:7}, we can show
\begin{align*}
    &\mathbb{E}\Big|\big\{q_n({X}_i^\top{\beta}^*,\tauo)-f({X}_i^\top{\beta}^*,\tauo)\big\}\big(Y_i-\widetilde{Z}_i^\top{\eta}^*-\alpha_0^*{X}_i^\top{\beta}^*\big)\Big|\nonumber\\
    \leq&C_1\delta_n\mathbb{E}\bigg|I(\tauo-\delta_n\leq {X}_i^\top{\beta}^* \leq \tauo+\delta_n) \Big\{\frac{1}{\sqrt{n}}\sum_{m=1}^{M_n^*}\varpi_mf({X}_i^\top{\beta}^*,\tau_m^*)+\epsilon_i\Big\}\bigg|\nonumber\\
    \leq&O(M_n^*\delta_n^2/\sqrt{n})+O(\delta_n^2).
\end{align*}
This, together with $s_n^2/n\to0$ and $\sqrt{n}\delta_n^2\to0$ as $n\to\infty$, implies  
\begin{align*}
    \frac{1}{\sqrt{n}}\sum_{i=1}^n\big\{q_n({X}_i^\top{\beta}^*,\tauo)-f({X}_i^\top{\beta}^*,\tauo)\big\}\big(Y_i-\widetilde{Z}_i^\top{\eta}^*-\alpha_0^*{X}_i^\top{\beta}^*\big)=O_p(\sqrt{n}\delta_n^2)=o_p(1).
\end{align*}
Therefore, we obtain
\begin{align}\label{Them4:proof:eq1}
     I_{13}=\frac{1}{\sqrt{n}}\sum_{i=1}^nf({X}_i^\top{\beta}^*,\tauo)\big(Y_i-\widetilde{Z}_i^\top{\eta}^*-\alpha_0^*{X}_i^\top{\beta}^*\big)+o_p(1).
\end{align}

For $I_{14},$ by the condition $s_n^2/n\to0$ as $n\to\infty$ and the definition of ${\widehat\theta},$
we have
\begin{align}\label{Them4:proof:eq2:2}
    I_{14}=-{D}(\tauo)^\top{\Omega}^{-1}\frac{1}{\sqrt{n}}\sum_{i=1}^n{\xi}_i\big(Y_i-\widetilde{Z}_i^\top{\eta}^*-\alpha_0^*{X}_i^\top{\beta}^*\big)+o_p(1).
\end{align}
Under conditions 1-4, using \eqref{Them1:proof:eq1:1} and \eqref{Them1:proof:eq1:2},
we can show that for any ${\theta}$ satisfying $\|{\theta}-{\theta}^*\|=O(1/\sqrt{n}),$
\begin{align}
   & \mathbb{E}\bigg(\frac{\partial^2 \psi_i(\tauo,{\theta})}{\partial\theta_{j}\partial\theta_{k}}\bigg)=\mathbb{E}\bigg(\frac{\partial^2 \psi_i(\tauo,{\theta}^*)}{\partial\theta_{j}\partial\theta_{k}}\bigg)+o(1),\nonumber\\
\text{and}~~ & \frac{1}{n}\mathbb{VAR}\bigg(\frac{\partial^2 \psi_i(\tauo,{\theta})}{\partial\theta_{j}\partial\theta_{k}}\bigg)\leq O\Big(\frac{1}{n\delta_n}\Big)+O\Big(\frac{s_n^2}{n^2\delta_n}\Big).\nonumber
\end{align}
Therefore, if $s_n^2/n\to0$ and $1/(n\delta_n)\to0$ as $n\to\infty$, using the Chebyshev's inequality, we can obtain
\begin{align}\label{Them4:proof:eq3}
    \|I_{15}\|^2\leq& 2n\|{\widehat\theta}-{\theta}^*\|^4\sum_{j,k}\bigg\{\frac{1}{n}\sum_{i=1}^n\frac{\partial^2 \psi_i(\tauo,{\breve\theta})}{\partial\theta_{j}\partial\theta_{k}}-\mathbb{E}\frac{\partial^2 \psi_1(\tauo,{\theta}^*)}{\partial\theta_{j}\partial\theta_{k}}\bigg\}^2\nonumber\\
    &+2n\|{\widehat\theta}-{\theta}^*\|^4\sum_{j,k}\bigg\{\mathbb{E}\frac{\partial^2 \psi_1(\tauo,{\theta}^*)}{\partial\theta_{j}\partial\theta_{k}}\bigg\}^2\nonumber\\
    =& O_p(1/n)=o_p(1).
\end{align}
Combining \eqref{Them4:proof:eq1}-\eqref{Them4:proof:eq3}, we have
\begin{align*}
    \psi(\tauo,\widehat\alpha_0,{\widehat\beta},{\widehat\eta})
    =&\frac{1}{\sqrt{n}}\sum_{i=1}^n\big\{f({X}_i^\top{\beta}^*,\tauo)-{D}(\tauo)^\top {\Omega}^{-1}{\xi}_i\big\}\big(Y_i-\widetilde{Z}_i^\top{\eta}^*-\alpha_0^*{X}_i^\top{\beta}^*\big)+o_p(1)\\
    =&\frac{1}{\sqrt{n}}\sum_{i=1}^n\psi_{*i}(\tauo)+o_p(1).
\end{align*}
In addition, it can be shown that the class $\{\psi_{*i}(\tauo):\tauo\in\Theta_{\tau}\}$ is P-Donsker.
Therefore, $n^{-1/2}\sum_{i=1}^n\psi_{*i}(\tauo)/\sqrt{\varrho(\tauo)}$ converges weakly to a Gaussian process with
mean function $\Delta(\tauo)=\mathbb{E}[\{f({X}_i^\top{\beta}^*,\tauo)-{D}(\tauo)^\top {\Omega}^{-1}{\xi}\}\sum\nolimits_{m=1}^{M_n^*}\varpi_mf({X}_i^\top{\beta},\tau_m^*)]/\sqrt{\varrho(\tauo)}$
and covariance function ${\Gamma}(\tau_1,\tau_2)=\mathbb{COV}\big(\psi_{*}(\tau_1),\psi_{*}(\tau_2)\big)/\sqrt{\varrho(\tau_1)\varrho(\tau_2)}$
where $\tau_1,\tau_2\in\Theta_{\tau}$.
Finally, we can show that the variance estimator $\widehat\varrho(\tauo)$ converges in probability to $\varrho(\tauo)$ uniformly in $\tauo\in\Theta_{\tau}$
under both the null and the local alternative hypotheses.
Therefore, Theorem 4 holds and the proof is complete.

\section{Proof of Lemmas and (A.8)}\label{secB}
\subsection{Proof of Lemma 1}
\begin{lemma}
 If $s_n^4/(n\delta_n)\to 0$ and $s_n^4\delta_n\to0$,
then we have
\begin{align}\label{Lemma2:1}
    \bigg\|\frac{1}{n}\frac{\partial^2 \bar h_n({\theta}_{o}^*)}{\partial {\theta}_{o} \partial{\theta}_{o}^\top}- {V}({\theta}_{o}^*)\bigg\|_\mathrm{F}=o_p\Big(\frac{1}{\sqrt{s_n}}\Big),
\end{align}
and
\begin{align}\label{Lemma2:2}
    \big\|{V}_n({\theta}_{o}^*)- {V}({\theta}_{o}^*)\big\|_\mathrm{F}=o_p\Big(\frac{1}{\sqrt{s_n}}\Big).
\end{align}
\end{lemma}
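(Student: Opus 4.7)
My strategy is to handle both displays by a bias--variance decomposition performed entry by entry on the $s_n^* \times s_n^*$ matrix and then aggregated in Frobenius norm. Writing $A = n^{-1}\partial^2 \bar h_n(\theta_o^*)/(\partial\theta_o\partial\theta_o^\top)$ and $V = V(\theta_o^*)=\mathbb{E}\{H_i(\theta_o^*)H_i(\theta_o^*)^\top\}$, a direct computation gives
\begin{align*}
A = \frac{1}{n}\sum_{i=1}^n H_{ni}(\theta_o^*)H_{ni}(\theta_o^*)^\top + \frac{1}{n}\sum_{i=1}^n T_{ni}(\theta_o^*)\,R_i,
\end{align*}
where $H_{ni}$ is the smoothed analogue of $H_i$ (with $f$ and indicator functions replaced by $q_n$ and $\partial q_n/\partial\tau$) and $R_i$ is symmetric with nontrivial entries only in the $(\tau,\tau)$, $(\tau,\alpha)$ and $(\tau,\beta)$ blocks, arising from pure second derivatives of the smoothed mean in $\tau$. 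The second display involves $V_n(\theta_o^*)=n^{-1}\sum H_{ni}H_{ni}^\top$, which is the same expression with the residual piece removed, so both statements reduce to the same catalogue of bounds.

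I would then decompose $A_{jk}-V_{jk}=(A_{jk}-\mathbb{E}A_{jk})+(\mathbb{E}A_{jk}-V_{jk})$ and classify entries into three types. Type (i), involving only products of $q_n$ values, carries bias $O(\delta_n^2)$ via the key estimate $|q_n(w,\tau)-f(w,\tau)|\le C_1\delta_n I(|w-\tau|\le\delta_n)$ together with $\mathbb{P}(|X_i^\top\beta^*-\tau|\le\delta_n)=O(\delta_n)$. Type (ii), containing one factor $\partial q_n/\partial\tau$, carries bias $O(\delta_n)$, because $\partial q_n/\partial\tau$ differs from $-I(X_i^\top\beta^*>\tau)$ only on a band of width $\delta_n$. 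Type (iii), the residual entries of the form $T_{ni}\cdot\partial^2 q_n/\partial\tau^2 = T_{ni}\mathcal{K}_{\delta_n}(\tau-X_i^\top\beta^*)$, produces bias $O(M_n\delta_n)$: the $\delta_n^{-1}$ scaling of the kernel is tamed because $\mathbb{E}T_{ni}(\theta_o^*)=O(M_n\delta_n^2)$ by the symmetry of $\mathcal{K}$ in Condition 5 and the centering of $\epsilon_i$ in Condition 4. Summing squared biases over the $O(s_n^2)$ entries in each block yields a total of $O(s_n^4\delta_n^2)$, which is $o(1/s_n)$ under $s_n^4\delta_n\to 0$ since $s_n\cdot s_n^4\delta_n^2\le(s_n^4\delta_n)\cdot s_n\delta_n\to 0$.

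For the variance piece I would invoke Chebyshev. Under Conditions 3 and 4, types (i) and (ii) give variance $O(1/n)$ per entry, while type (iii) gives $O(1/(n\delta_n))$ per entry using $\mathbb{E}\mathcal{K}_{\delta_n}^2(\tau-X_i^\top\beta^*)=O(\delta_n^{-1})$. Aggregating yields $O(s_n^2/n)+O(s_n^2/(n\delta_n))=O(s_n^2/(n\delta_n))$, so $s_n$ times the aggregated variance is $s_n^3/(n\delta_n)\to 0$ under $s_n^4/(n\delta_n)\to 0$. Markov's inequality applied to $\mathbb{E}\|A-V\|_\mathrm{F}^2$ then upgrades both bounds to $o_p(1/s_n)$, giving the Frobenius-norm conclusion $o_p(1/\sqrt{s_n})$; the second display follows verbatim by restricting the argument to the $H_{ni}H_{ni}^\top$ term alone, as type (iii) contributions do not appear there.

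The principal obstacle is the careful treatment of type (iii) entries, where a naive bound would produce an unacceptable $\delta_n^{-1}$ explosion. The saving cancellation comes from observing that the residual multiplier $T_{ni}(\theta_o^*)$ is mean zero up to the small smoothing bias, so the kernel integrates against an essentially centered quantity. A secondary nuisance is the off-diagonal $\tau_j,\tau_k$ cross entries $\partial^2 q_n/(\partial\tau_j\partial\tau_k)$ with $j\ne k$, which vanish identically since $q_n$ depends on $\tau_m$ through $\alpha_m q_n(\cdot,\tau_m)$ alone; this keeps the $(\tau,\tau)$ block sparse enough that the $O((M_n^*)^2)$ count in the aggregation step does not worsen to $O((M_n^*)^4)$.
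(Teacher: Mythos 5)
Your proposal is correct and follows essentially the same route as the paper's proof: an entry-wise bias--variance decomposition controlled by Chebyshev/Markov, with the band-limited approximation inequality $|q_n(w,\tau)-f(w,\tau)|\le C_1\delta_n I(|w-\tau|\le\delta_n)$ together with $\mathbb{P}(|X_i^\top\beta^*-\tau|\le\delta_n)=O(\delta_n)$ driving the bias bounds, the kernel bound $\mathbb{E}\,\mathcal{K}_{\delta_n}^2(\tau-X_i^\top\beta^*)=O(\delta_n^{-1})$ driving the variance of the residual-curvature entries, and polynomial-in-$s_n$ aggregation under the two stated rate conditions. Your explicit split of the Hessian into the Gauss--Newton part $V_n(\theta_o^*)$ plus the residual term $n^{-1}\sum_i T_{ni}R_i$ (which the paper leaves implicit, disposing of the second display with ``similarly'') and your exact-zero observation for the cross $\tau_j,\tau_k$ entries are clarifying touches, and your slightly different $M_n$-bookkeeping in the per-entry bounds still suffices under the hypotheses, so there is no gap.
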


\begin{proof}
For any given $\varepsilon>0,$ we have
\begin{align*}
    & \mathbb{P}\bigg(\bigg\|\frac{1}{n}\frac{\partial^2 \bar h_n({\theta}_{o}^*)}{\partial {\theta}_o \partial{\theta}_o^\top}- {V}({\theta}_{o}^*)\bigg\|_\mathrm{F}\geq \frac{\varepsilon}{\sqrt{s_n}}\bigg)\\
    \leq & \frac{s_n}{\varepsilon^2}\sum_{j=1}^{s_n^*}\sum_{k=1}^{s_n^*}\mathbb{E}\bigg\{\frac{1}{n}\frac{\partial^2 \bar h_n({\theta}_{o}^*)}{\partial \theta_{oj} \partial\theta_{ok}}-\frac{1}{n}\mathbb{E}\frac{\partial^2 \bar h_n({\theta}_{o}^*)}{\partial \theta_{oj} \partial\theta_{ok}}+\frac{1}{n}\mathbb{E}\frac{\partial^2 \bar h_n({\theta}_{o}^*)}{\partial \theta_{oj} \partial\theta_{ok}}-V_{j,k}({\theta}_{o}^*)\bigg\}^2\\
    \leq & \underbrace{\frac{4s_n}{\varepsilon^2n}\sum_{j=1}^{s_n^*}\sum_{k=1}^{s_n^*}\mathbb{E}\bigg\{\frac{\partial g_{ni,j}({\theta}_{o}^*)}{\partial \theta_{ok}}-V_{j,k}({\theta}_{o}^*)\bigg\}^2}_{I_{16}}
    +\underbrace{\frac{4s_n}{\varepsilon^2n}\sum_{j=1}^{s_n^*}\sum_{k=1}^{s_n^*}\mathbb{E}\Big\{V_{j,k}({\theta}_{o}^*)\Big\}^2}_{I_{17}}\\
    &+\underbrace{\frac{2s_n}{\varepsilon^2}\sum_{j=1}^{s_n^*}\sum_{k=1}^{s_n^*}\bigg\{\frac{1}{n}\mathbb{E}\frac{\partial^2 \bar h_n({\theta}_{o}^*)}{\partial \theta_{oj} \partial\theta_{ok}}-V_{j,k}({\theta}_{o}^*)\bigg\}^2}_{I_{18}}.
\end{align*}
Next, we show that $I_{16}=O(s_n^3M_n/(n\delta_n)),~I_{17}=O(s_n^3/n)$ and $I_{18}=O(s_n^3M_n^4\delta_n^2)$.
It suffices to show that the terms in the summations of $I_{16},~I_{17}$ and $I_{18}$ have the order of $O(M_n/\delta_n),~O(1)$ and $O(M_n^4\delta_n^2)$, respectively.
For $I_{16}$ and $I_{18}$, we show that the statements hold for the terms with $j, k=2(M_n^*+1)$,
and the statements for other terms can be obtained similarly.
Note that for $j, k=2(M_n^*+1)$, we have
\begin{align}\label{Lemma2:proof:eq1:2}
    &\bigg|\mathbb{E}\bigg\{\frac{\partial g_{ni,j}({\theta}_{o}^*)}{\partial \theta_{ok}}-V_{j,k}({\theta}_{o}^*)\bigg\}\bigg|\nonumber\\
    =&\bigg|\mathbb{E}\bigg\{\Big(\alpha_0^*X_{i,2}+\sum_{m=1}^{M_n^*}\alpha_m^*\frac{\partial q_n({X}_i^\top{\beta}^*,\tau_m^*)}{\partial \beta_{2}} \Big)^2
    -\Big(\alpha_0^*X_{i,2}+\sum_{m=1}^{M_n^*}\alpha_m^*X_{i,2}I(X_i^\top\beta^*>\tau_m^*) \Big)^2\nonumber\\
    &-\sum_{l=1}^{M_n^*}\alpha_l^*\frac{\partial^2 q_n({X}_i^\top{\beta}^*,\tau_l^*)}{\partial \beta_{2}^2} \Big(Y_i-\widetilde{Z}_i^\top{\eta}^*-\alpha_0^*{X}_i^\top{\beta}^*-\sum_{m=1}^{M_n^*}\alpha_m^*q_n({X}_i^\top{\beta}^*,\tau_m^*)\Big)\bigg\}\bigg|\nonumber\\
    \le &\bigg|2\mathbb{E}\bigg\{\sum_{l=1}^{M_n^*}\alpha_0^*\alpha_l^*X_{i,2}\bigg(\frac{\partial  q_n({X}_i^\top{\beta}^*,\tau_l^*)}{\partial \beta_{2}}-X_{i,2}I(X_i^\top\beta^*>\tau_l^*) \bigg)\bigg\}\bigg|\nonumber\\
   &+\bigg|\mathbb{E}\bigg\{\sum_{m=1}^{M_n^*}\sum_{l=1}^{M_n^*}\alpha_l^*\alpha_m^*\Big(\frac{\partial q_n({X}_i^\top{\beta}^*,\tau_m^*)}{\partial \beta_{2}}
    +X_{i,2}I(X_i^\top\beta^*>\tau_m^*) \Big)\Big(\frac{\partial  q_n({X}_i^\top{\beta}^*,\tau_l^*)}{\partial \beta_{2}}-X_{i,2}I(X_i^\top\beta^*>\tau_l^*) \Big)\bigg\}\bigg|\nonumber\\
    &+\bigg|\mathbb{E}\bigg\{\sum_{l=1}^{M_n^*}\alpha_l^*\frac{\partial^2 q_n({X}_i^\top{\beta}^*,\tau_l^*)}{\partial \beta_{2}^2} \Big(\sum_{m=1}^{M_n^*}\alpha_m^*q_n({X}_i^\top{\beta}^*,\tau_m^*)-\sum_{m=1}^{M_n^*}\alpha_m^*f({X}_i^\top{\beta}^*,\tau_m^*)-\epsilon_i\Big)\bigg\}\bigg|.
\end{align}
By \eqref{Them1:proof:eq1:1} and \eqref{Them1:proof:eq1:2},  the first term on the right hand side of \eqref{Lemma2:proof:eq1:2} satisfies
 \begin{align*}
    &\bigg|\mathbb{E}\bigg\{2\alpha_0^*X_{i,2}\sum_{l=1}^{M_n^*}\alpha_l^*\bigg(\frac{\partial q_n({X}_i^\top{\beta}^*,\tau_l^*)}{\partial \beta_{2}} -X_{i,2}I(X_i^\top\beta^*>\tau_l^*)\bigg)\bigg\}\bigg|\\
    \leq&2|\alpha_0^*|\sum_{l=1}^{M_n^*}|\alpha_l^*|\mathbb{E}\big\{X_{i,2}^2I(\tau_l^*-\delta_n\leq X_{i}^\top\beta^*\leq\tau_l^*+\delta_n)/2\big\}\\
    \leq&|\alpha_0^*|\sum_{l=1}^{M_n^*}|\alpha_l^*|\int_{\tau_l^*-\delta_n}^{\tau_l^*+\delta_n}\int_{-\infty}^{\infty}x_{2}^2dF(x_{2}|w)dF(w)\\
    \leq&2|\alpha_0^*|\sum_{l=1}^{M_n^*}|\alpha_l^*|\mathbb{E}(X_{i,2}^2|{X}_i^\top{\beta}^*=w_l)F'(w_l)\delta_n=O(M_n\delta_n),
\end{align*}
where $w_l$ is between $\tau_l^*-\delta_n$ and $\tau_l^*+\delta_n$.
Similarly, we can show that the second and the third terms of \eqref{Lemma2:proof:eq1:2} are $O(M_n^2\delta_n)$.
This implies
\[
\bigg|\mathbb{E}\bigg\{\frac{\partial g_{ni,j}({\theta}_{o}^*)}{\partial \theta_{ok}}-V_{j,k}({\theta}_{o}^*)\bigg\}\bigg|=O(M_n^2\delta_n).
\]
In addition, by some the arguments similar to the proof of \eqref{Lemma2:proof:eq1:2} , we can show
\begin{align}\label{Lemma2:proof:eq1:1}
    \mathbb{E}\bigg\{\frac{\partial g_{ni,j}({\theta}_{o}^*)}{\partial \theta_{ok}}-V_{j,k}({\theta}_{o}^*)\bigg\}^2\leq O(M_n/\delta_n).
\end{align}
These facts imply $I_{16}=O(s_n^3M_n/(n\delta_n))$ and $I_{18}=O(s_n^3M_n^4\delta_n^2)$.

For $I_{17}$, we show that $|V_{j,k}({\theta}_{o}^*)|$ ($j=2M_n^*+2$, $k=2M_n^*+d_1+2$) is bounded away from infinite.
Note that
\begin{align}\label{Lemma2:proof:eq2:1}
    &\alpha_0^*+\sum_{m=1}^{M_n^*}\alpha_m^*I({X}_i^\top{\beta}^*>\tau_m^*)
    =\sum_{m=0}^{M_n^*}\mu_m^*I(\tau_m^*< {X}_i^\top{\beta}^*\leq \tau_{m+1}^*),\nonumber\\
    \text{and}~~&\sum_{m=0}^{M_n^*}I(\tau_m^*< {X}_i^\top{\beta}^*\leq \tau_{m+1}^*)=1,
\end{align}
where $\mu_m^*=\sum_{k=0}^m\alpha_k^*$ and $\mu_m^*~(0\le m\le M_n^*)$ are bounded under condition 2.
By \eqref{Lemma2:proof:eq2:1}, condition 1 and the Cauchy–Schwartz inequality, we can show that for $j=2M_n^*+2$ and $k=2M_n^*+d_1+2$,
\begin{align*}
    \big|V_{j,k}({\theta}_{o}^*)\big|=&\bigg|\mathbb{E}\Big[Z_{i,1}X_{i,2}\Big\{\alpha_{0}^*+\sum_{m=1}^{M_n^*}\alpha_{m}^*I({X}_i^\top{\beta}^*>\tau_{m}^*)\Big\}\Big]\bigg|\\
    \leq&\bigg|\mathbb{E}\Big\{Z_{i,1}X_{i,2}\sum_{m=0}^{M_n^*}\mu_m^*I(\tau_m^*< {X}_i^\top{\beta}^*\leq \tau_{m+1}^*)\Big\}\bigg|\\
    \leq&\max_{0\leq m \leq M_n^*}|\mu_m^*|\mathbb{E}\Big\{|Z_{i,1}X_{i,2}|\sum_{m=0}^{M_n^*}I(\tau_m^*< {X}_i^\top{\beta}^*\leq \tau_{m+1}^*)\Big\}\\
    \leq&\max_{0\leq m \leq M_n^*}|\mu_m^*|\big(\mathbb{E}Z_{i,1}^2\big)^{1/2}\big(\mathbb{E}X_{i,2}^2\big)^{1/2},
\end{align*}
which is bounded above by conditions 2 and 3.
Similarly, we can show that $|V_{j,k}({\theta}_{o}^*)|$ ($j\neq 2M_n^*+2$, $k\neq 2M_n^*+d_1+2$) is also bounded away from infinite.
These facts imply  $I_{17}=O(s_n^3/n)$. If $s_n^4/(n\delta_n)\to0$ and $s_n^4\delta_n\to0$ as $n\to\infty$,  then
\begin{align*}
     \mathbb{P}\bigg(\bigg\|\frac{1}{n}\frac{\partial^2 \bar h_n({\theta}_{o}^*)}{\partial {\theta}_o\partial{\theta}_o^\top}-{V}({\theta}_{o}^*)\bigg\|_\mathrm{F}\geq \frac{\varepsilon}{\sqrt{s_n}}\bigg)\leq O\Big(\frac{s_n^3M_n}{n\delta_n}\Big)+O\Big(\frac{s_n^3}{n}\Big)+O(s_n^3M_n^4\delta_n^2)\to0.
\end{align*}
Hence, we obtain that \eqref{Lemma2:1} holds. Similarly, we can prove \eqref{Lemma2:2}.
This completes the proof of Lemma 1.
\end{proof}

\subsection{Proof of (A.8)}
\begin{proof}
The proof of \eqref{Them1:proof:eq1:7} is based on \eqref{Them1:proof:eq1:1} and \eqref{Them1:proof:eq1:2}.
For $M_n^*+1\leq j \leq 2M_n^*$, we have
\begin{align}\label{eq8:term1}
    &\mathbb{E}\big\{\big|g_{ni,j}({\theta}_{o}^*)-g_{i,j}({\theta}_{o}^*)\big|\big\}\nonumber\\
    \leq&|\alpha_{j-M_n^*}^*|\mathbb{E}\bigg\{\bigg|\Big(Y_i-\widetilde{Z}_i^\top{\eta}^*-\alpha_0^*{X}_i^\top{\beta}^*\Big)\bigg(I({X}_i^\top{\beta}^*>\tau_{j-M_n^*}^*)+\frac{\partial q_n({X}_i^\top{\beta}^*,\tau_{j-M_n^*}^*)}{\partial \tau_{j-M_n^*}}\bigg)\bigg|\bigg\}\nonumber\\
    &+|\alpha_{j-M_n^*}^*|\max_{1\le m\le M_n^*}|\alpha_m^*|\sum_{m=1}^{M_n^*}\mathbb{E}\bigg\{\Big|I({X}_i^\top{\beta}^*>\tau_{j-M_n^*})f({X}_i^\top{\beta}^*,\tau_{m})\nonumber\\
    &\hspace{2.5in}+\frac{\partial q_n({X}_i^\top{\beta}^*,\tau_{j-M_n^*})}{\partial \tau_{j-M_n^*}}q_n({X}_i^\top{\beta}^*,\tau_{m})\Big|\bigg\}.
\end{align}
By \eqref{Them1:proof:eq1:1}, \eqref{Them1:proof:eq1:2} and conditions 1-5, the first term on the right hand of \eqref{eq8:term1} satisfies
\begin{align*}
    &|\alpha_{j-M_n^*}^*|\mathbb{E}\bigg\{\bigg|\Big(Y_i-\widetilde{Z}_i^\top{\eta}^*-\alpha_0^*{X}_i^\top{\beta}^*\Big)\bigg(I({X}_i^\top{\beta}^*>\tau_{j-M_n^*}^*)+\frac{\partial q_n({X}_i^\top{\beta}^*,\tau_{j-M_n^*}^*)}{\partial \tau_{j-M_n^*}}\bigg)\bigg|\bigg\}\\
    \leq&\frac{1}{2}|\alpha_{j-M_n^*}^*|\max_{1\le m\le M_n^*} |\alpha_m^*| \sum_{m=1}^{j-M_n^*-1}|\tau_{j-M_n^*}^*-\tau_m^*|\int_{\tau_{j-M_n^*}^*-\delta_n}^{\tau_{j-M_n^*}^*+\delta_n}dF(w)\\
    &+\frac{1}{2}|\alpha_{j-M_n^*}^*||\alpha_j^*|\int_{\tau_{j-M_n^*}^*-\delta_n}^{\tau_{j-M_n^*}^*+\delta_n}|w-\tau_{j-M_n^*}^*|dF(w)\\
    &+\frac{1}{2}|\alpha_{j-M_n^*}^*|\times\int_{\tau_{j-M_n^*}^*-\delta_n}^{\tau_{j-M_n^*}^*+\delta_n}\mathbb{E}(|\epsilon_i|\big|W_i=w)dF(w)\\
    =&O(M_n\delta_n)+O(\delta_n).
\end{align*}
Similarly, the second term on the right hand side of \eqref{eq8:term1} satisfies
\begin{align*}
    &|\alpha_{j-M_n^*}^*|\max_{1\le m\le M_n^*} |\alpha_m^*|\sum_{m=1}^{M_n^*}\mathbb{E}\bigg\{\Big|I({X}_i^\top{\beta}^*>\tau_{j-M_n^*})f({X}_i^\top{\beta}^*,\tau_{m})+\frac{\partial q_n({X}_i^\top{\beta}^*,\tau_{j-M_n^*})}{\partial \tau_{j-M_n^*}}q_n({X}_i^\top{\beta}^*,\tau_{m})\Big|\bigg\}\\
    \leq&\frac{1}{2}|\alpha_{j-M_n^*}^*|\max_{1\le m\le M_n^*} |\alpha_m^*|\sum_{m=1}^{j-M_n^*-1}|\tau_{j-M_n^*}^*-\tau_m^*|\int_{\tau_{j-M_n^*}^*-\delta_n}^{\tau_{j-M_n^*}^*+\delta_n}dF(w)\\
    &+\frac{C_1\delta_n}{2}|\alpha_{j-M_n^*}^*|\max_{1\le m\le M_n^*} |\alpha_m^*|\int_{\tau_{j-M_n^*}^*-\delta_n}^{\tau_{j-M_n^*}^*+\delta_n}dF(w)\\
    &+\frac{C_1\delta_n}{2}|\alpha_{j-M_n^*}^*|\max_{1\le m\le M_n^*} |\alpha_m^*| \sum_{m=j-M_n^*+1}^{M_n^*}\int_{\tau_m^*-\delta_n}^{\tau_m^*+\delta_n}dF(w)\\
    =&O(M_n\delta_n)+O(\delta_n^2)+O(M_n\delta_n^2).
\end{align*}
For $j=2M_n^*+1$,  by \eqref{Them1:proof:eq1:1}, \eqref{Them1:proof:eq1:2} and conditions 2 and 3, we have
\begin{align*}
    \mathbb{E}\big\{\big|g_{ni,j}({\theta}_{o}^*)-g_{i,j}({\theta}_{o}^*)\big|\big\}
    &=\mathbb{E}\bigg\{\Big|{X}_i^\top{\beta}^*\sum_{m=1}^{M_n^*}\alpha_m^*\Big(q_n({X}_i^\top{\beta}^*,\tau_m^*)-f({X}_i^\top{\beta}^*,\tau_m^*)\Big)\Big|\bigg\}\\
    &\leq C_1\delta_n\max_{1\le m\le M_n^*} |\alpha_m^*|\sum_{m=1}^{M_n^*}\int_{\tau_m^*-\delta_n}^{\tau_m^*+\delta_n}|w|dF(w)\\
    &=O(M_n\delta_n^{2}).
\end{align*}
For $2M_n^*+2\leq j \leq 2M_n^*+d_1$, a straightforward calculation yields
\begin{align*}
    &\mathbb{E}\big\{\big|g_{ni,j}({\theta}_{o}^*)-g_{i,j}({\theta}_{o}^*)\big|\big\}\\
    \leq & \underbrace{\sum_{m=1}^{M_n^*}\mathbb{E}\Big\{\big|\alpha_0^*X_{i,j-2M_n^*}\big(q_n({X}_i^\top{\beta}^*,\tau_m^*)-f({X}_i^\top{\beta}^*,\tau_m^*)\big)\big|\Big\}}_{I_{19}}\\
&+\underbrace{\sum_{m=1}^{M_n^*}\sum_{k=1}^{M_n^*}|\alpha_m^*\alpha_k^*|\mathbb{E}\bigg\{\Big|q_n({X}_i^\top{\beta}^*,\tau_k^*)\frac{\partial q_n({X}_i^\top{\beta}^*,\tau_m^*)}{\partial \beta_{j-2M_n^*}}-f({X}_i^\top{\beta}^*,\tau_k^*)X_{i,j-2M_n^*}I(X_i^\top\beta^*>\tau_m^*)\Big|\bigg\}}_{I_{20}}\\
&+\underbrace{\sum_{m=1}^{M_n^*}|\alpha_m^*|\mathbb{E}\bigg\{\bigg|\Big(Y_i-\widetilde{Z}_i^\top{\eta}^*-\alpha_0^*{X}_i^\top{\beta}^*\Big)\bigg(\frac{\partial q_n({X}_i^\top{\beta}^*,\tau_m^*)}{\partial \beta_{j-2M_n^*}}-X_{i,j-2M_n^*}I(X_i^\top\beta^*>\tau_m^*)\bigg)\bigg|\bigg\}}_{I_{21}}.
\end{align*}
Next, we show that $I_{19}=O(M_n\delta_n^2)$,
$I_{20}=O(M_n^2\delta_n)$, and $I_{21}=O(M_n^2\delta_n)$.
For $I_{19}$, by \eqref{Them1:proof:eq1:1}, \eqref{Them1:proof:eq1:2} and conditions 1-3, we have
\begin{align*}
I_{19}\leq C_1\delta_n|\alpha_0^*|\sum_{m=1}^{M_n^*}\int_{\tau_m^*-\delta_n}^{\tau_m^*+\delta_n}\mathbb{E}(|X_{i,j-2M_n^*}|\big|W_i=w)dF(w)=O(M_n\delta_n^2).
\end{align*}
Similarly, we can show
\begin{align*}
I_{20}\leq&\frac{1}{2}\sum_{m=1}^{M_n^*}\sum_{k=1}^{m-1}|\alpha_m^*\alpha_k^*|\times|\tau_m^*-\tau_k^*|\int_{\tau_m^*-\delta_n}^{\tau_m^*+\delta_n}\mathbb{E}(|X_{i,j-2M_n^*}|\big|W_i=w)dF(w)\\
    &+\frac{C_1\delta_n}{2}\sum_{m=1}^{M_n^*}|\alpha_m^*|^2\int_{\tau_m^*-\delta_n}^{\tau_m^*+\delta_n}\mathbb{E}(|X_{i,j-2M_n^*}|\big|W_i=w)dF(w)\\
    &+C_1\delta_n\sum_{m=1}^{M_n^*}\sum_{k=m+1}^{M_n^*}|\alpha_m^*\alpha_k^*|\int_{\tau_k^*-\delta_n}^{\tau_k^*+\delta_n}\mathbb{E}(|X_{i,j-2M_n^*}|\big|W_i=w)dF(w)\\
    =&O(M_n^2\delta_n)+O(M_n\delta_n^2)+O(M_n^2\delta_n),
\end{align*}
and
\begin{align*}
   I_{21}\leq&\frac{1}{2}\sum_{m=1}^{M_n^*}\sum_{k=1}^{m-1}|\alpha_m^*\alpha_k^*|\times|\tau_m^*-\tau_k^*|\int_{\tau_m^*-\delta_n}^{\tau_m^*+\delta_n}\mathbb{E}(|X_{i,j-2M_n^*}|\big|W_i=w)dF(w)\\
    &+\frac{1}{2}\sum_{m=1}^{M_n^*}|\alpha_m^*|^2\int_{\tau_m^*-\delta_n}^{\tau_m^*+\delta_n}\mathbb{E}(|X_{i,j-2M_n^*}|\big|W_i=w)|w-\tau_m^*|dF(w)\\
    &+\frac{1}{2}\sum_{m=1}^{M_n^*}|\alpha_m^*|\int_{\tau_m^*-\delta_n}^{\tau_m^*+\delta_n}\int_{-\infty}^{\infty}|x_{j-2M_n^*}|\mathbb{E}(|\epsilon_i|\big|X_{i,j-2M_n^*}=x_{j-2M_n^*},W_i=w)dF(x_{j-2M_n^*},w)\\
    =&O(M_n^2\delta_n)+O(M_n\delta_n)+O(\delta_n).
\end{align*}
Combining the results of $I_{19}, I_{20}$ and $I_{21}$, we have that for $2M_n^*+2\leq j \leq 2M_n^*+d_1$,
\[
\mathbb{E}\big\{\big|g_{ni,j}({\theta}_{o}^*)-g_{i,j}({\theta}_{o}^*)\big|\big\}\le O(M_n^2\delta_n).
\]
For $2M_n^*+d_1+1\leq j\leq 2M_n^*+d$, by \eqref{Them1:proof:eq1:1}, \eqref{Them1:proof:eq1:2} and conditions 2 and 3, we obtain
\begin{align*}
    &\mathbb{E}\big\{\big|g_{ni,j}({\theta}_{o}^*)-g_{i,j}({\theta}_{o}^*)\big|\big\}\\
    =&\mathbb{E}\bigg\{\Big|\widetilde Z_{i,j-2M_n^*-d_1-1}\sum_{m=1}^{M_n^*}\alpha_m^*\big(q_n({X}_i^\top{\beta}^*,\tau_m^*)-f({X}_i^\top{\beta}^*,\tau_m^*)\big)\Big|\bigg\}\\
    \leq & \max_{1\le m\le M_n^*} |\alpha_m^*|\big(\mathbb{E}\widetilde Z_{i,j-2M_n^*-d_1-1}^2\big)^{1/2}\sum_{m=1}^{M_n^*} \bigg\{\int_{\tau_m^*-\delta_n}^{\tau_m^*+\delta_n}\Big(q_n(w,\tau_m^*)-f(w,\tau_m^*)\Big)^2dF(w)\bigg\}^{1/2}\\
    = & O(M_n\delta_n^{3/2}).
\end{align*}
These facts imply that \eqref{Them1:proof:eq1:7} also holds for $2M_n^*+1 \leq j \leq 2M_n^*+d$.
\end{proof}

\section{An iterative algorithm for solving problem (4)}\label{secC}
In this section, we develop an iterative procedure to obtain ${\widehat\theta}.$ For simplicity, let ${\alpha}={\alpha}(M_n)$,
${\tau}={\tau}(M_n)$ and ${\theta}={\theta}(M_n).$
Define
\begin{align*}
    h_n({\eta},{\beta},{\alpha},{\tau})=\frac{1}{2}\sum_{i=1}^n\Big\{Y_i-\widetilde{Z}_i^\top{\eta}-\alpha_0{X}_i^\top{\beta}-
\sum_{m=1}^{M_n}\alpha_mq_n({X}_i^\top{\beta},\tau_m)\Big\}^2.
\end{align*}
Let ${\theta}^{[k]}=\big(({\alpha}_{(-0)}^{[k]})^\top,({\tau}^{[k]})^\top,\alpha_0^{[k]},({\beta}_{(-1)}^{[k]})^\top,({\eta}^{[k]})^\top\big)^\top$ be the estimate of ${\theta}$ at the $k$th iteration.
Then, the proposed iterative procedure is summarized in Algorithm \ref{ALG1}.

\begin{algorithm}
\caption{: An iterative procedure to minimize (4)} \label{ALG1}
\begin{algorithmic}
\STATE{Input: ${\beta}^{[0]}$, ${\tau}^{[0]}$, $k=0$, an integer $K$ and a tolerance parameter $\varepsilon_1.$}

\STATE{\textit{Step 1}. Update ${\eta}^{[k+1]}$ and ${\alpha}^{[k+1]}$ by
\begin{align}\label{upalpha}
({\alpha}^{[k+1]},{\eta}^{[k+1]})=\text{arg}\min_{({\alpha}^\top,{\eta}^\top)^\top}~h_n({\eta},{\beta}^{[k]},{\alpha},{\tau}^{[k]})
+n\sum_{m=1}^{M_n} p_{\lambda_n,t}(|\alpha_m|).
\end{align}
}
\STATE{\textit{Step 2}. Update $\tau^{[k+1]}$ by
\begin{align*}
    \tau^{[k+1]}_{m}&=\widetilde\tau^{[k+1]}_{m}I(\alpha^{[k+1]}_{m}\neq0)+\tau_\infty I(\alpha^{[k+1]}_m=0)~~(m=1,\dots,M_n),
\end{align*}
where
\begin{align}\label{uptau}
{\widetilde\tau}^{[k+1]}&=\text{arg}\min_{{\tau}}~h_n({\eta}^{[k+1]},{\beta}^{[k]},{\alpha}^{[k+1]},{\tau}).
\end{align}
\textit{Step 3}. Update ${\beta}^{[k+1]}$ by
\begin{align}\label{upbeta}
{\beta}^{[k+1]}=\text{arg}\min_{{\beta}}~h_n({\eta}^{[k+1]},{\beta},{\alpha}^{[k+1]},{\tau}^{[k+1]}).
\end{align}
}
\STATE{\textit{Step 4}. Let ${\theta}^{[k+1]}=\big(({\alpha}_{(-0)}^{[k+1]})^\top,({\tau}^{[k+1]})^\top,\alpha_0^{[k+1]},({\beta}_{(-1)}^{[k+1]})^\top,({\eta}^{[k+1]})^\top\big)^\top$ and $k=k+1.$
Repeat \textit{Steps 1-3} until $\|{\theta}^{[k]}-{\theta}^{[k-1]}\|<\varepsilon_1$ or $k>K$.}

\STATE{Output: ${\widehat\theta}={\theta}^{[k]}$.}
\end{algorithmic}
\end{algorithm}

The problems  \eqref{uptau} and \eqref{upbeta} in Algorithm \ref{ALG1} can be efficiently solved using the Matlab function {\it fmincon}.
Next, we consider to minimize \eqref{upalpha}. 
Note that minimizing problem \eqref{upalpha} is equivalent to minimize the following the constraint optimization problem:
\begin{align*}
\nonumber
&h_n({\eta},{\beta}^{[k]},{\alpha},{\tau}^{[k]})
+n\sum_{m=1}^{M_n} p_{\lambda_n,t}(|\zeta_m|)\\
\text{subject to}~~~& \alpha_m-\zeta_m=0,\quad m=1,\dots, M_n,
\end{align*}
where ${\zeta}=(\zeta_1,\ldots,\zeta_{M_n})^\top$.
By the augmented Lagrangian method, we can estimate ${\eta}$, ${\alpha}$ and ${\zeta}$ by minimizing
\begin{align}\label{L}
\mathcal{L}({\eta},{\alpha},{\zeta},{v})=&h_n({\eta},{\beta}^{[k]},{\alpha},{\tau}^{[k]})+n\sum_{m=1}^{M_n}p_{\lambda_n,t}(|\zeta_m|)\nonumber\\
&+n\sum_{m=1}^{M_n}v_m(\alpha_m-\zeta_m)+\frac{n\vartheta}{2}\sum_{m=1}^{M_n}(\alpha_m-\zeta_m)^2,
\end{align}
where ${v}=(v_1,\dots,v_{M_n})^\top$ with $v_m~(m=1,\dots,M_n)$ being the Lagrange multipliers and $\vartheta$ is a penalty parameter.

Let $\mathbb{Y}=(Y_1,\ldots,Y_n)^\top,$ $\mathbb{Z}=(\widetilde{Z}_1,\dots,\widetilde{Z}_n)^\top$
and $\mathbb{X}=({X}_1,\ldots,{X}_n)^\top.$
Define $\mathbb{Q}({\beta},{\tau})=\big(\mathbb{X}{\beta},q_n(\mathbb{X}{\beta},\tau_1),\ldots,q_n(\mathbb{X}{\beta},\tau_{M_n})\big).$
To solve problem \eqref{L}, we utilize the alternating direction method of multipliers (ADMM), which iteratively updates the estimates of ${\alpha}$, ${\eta}$, ${\zeta}$, and ${v}$.  Define ${\alpha}^{[k+1,l]}, {\eta}^{[k+1,l]}, {\zeta}^{[k+1,l]}$ and ${v}^{[k+1,l]}$ as the estimates of ${\alpha}, {\eta}, {\zeta}$ and ${v}$
at the $l$th iteration.
The algorithm proceeds as follows.

First, for each given ${\eta}={\eta}^{[k+1,l]}, {\zeta}={\zeta}^{[k+1,l]}$ and ${v}={v}^{[k+1,l]},$ we can write
\begin{align*}
&\mathcal{L}({\eta}^{[k+1,l]},{\alpha},{\zeta}^{[k+1,l]},{v}^{[k+1,l]})\\
=&\frac{1}{2}\Big\|\mathbb{Y}-\mathbb{Z}{\eta}^{[k+1,l]}-\mathbb{Q}^{[k]}{\alpha}\Big\|^2+\frac{n\vartheta}{2}\Big\|{\alpha}_{(-0)}-{\zeta}^{[k+1,l]}+\vartheta^{-1}{v}^{[k+1,l]}\Big\|^2+C,
\end{align*}
where $\mathbb{Q}^{[k]}=\mathbb{Q}({\beta}^{[k]},{\tau}^{[k]})$ and $C$ is a constant independent of ${\alpha}.$
Thus, by solving the equation
\[
\partial\mathcal{L}({\eta}^{[k+1,l]},{\alpha},{\zeta}^{[k+1,l]},{v}^{[k+1,l]})/\partial {\alpha}=0,
\]
we obtain an updating form of ${\alpha}$ as follows:
\begin{align}\label{Step2:upalpha}
    {\alpha}^{[k+1,l+1]}=&\Big\{(\mathbb{Q}^{[k]})^\top \mathbb{Q}^{[k]}+n\vartheta\mathbb{I}_{M_n+1}\Big\}^{-1}\times\nonumber\\
    &\quad\Big\{(\mathbb{Q}^{[k]})^\top(\mathbb{Y}-\mathbb{Z}{\eta}^{[k+1,l]})+n(\vartheta{\bar\zeta}^{[k+1,l]}-{\bar v}^{[k+1,l]})\Big\},
\end{align}
where $\mathbb{I}_{M_n+1}=\text{diag}\{0,1,\dots,1\}$ is an $(M_n+1)\times (M_n+1)$ diagonal matrix,
${\bar\zeta}^{[k+1,l]}=(0,({\zeta}^{[k+1,l]})^\top)^\top$ and ${\bar v}^{[k+1,l]}=(0,({v}^{[k+1,l]})^\top)^\top$.

In the second step, by some arguments similar to \eqref{Step2:upalpha},
for each given ${\alpha}={\alpha}^{[k+1,l+1]},$ ${\zeta}={\zeta}^{[k+1,l]}$ and ${v}={v}^{[k+1,l]}$ in $\mathcal{L}({\eta},{\alpha},{\zeta},{v}),$
we can update ${\eta}$ by
\begin{align}\label{upgamma}
{\eta}^{[k+1,l+1]}=(\mathbb{Z}^\top \mathbb{Z})^{-1}\mathbb{Z}^\top(\mathbb{Y}-\mathbb{Q}^{[k]}{\alpha}^{[k+1,l+1]}).
\end{align}
Third, we update ${\zeta}.$
By \eqref{L}, we observe that for given ${\eta}={\eta}^{[k+1,l+1]}, {\alpha}={\alpha}^{[k+1,l+1]}$ and ${v}={v}^{[k+1,l]}$,
it suffices to minimize
\begin{align}\label{eq:upeta}
&n\sum_{m=1}^{M_n} v_m^{[k+1,l]}(\alpha_m^{[k+1,l+1]}-\zeta_m)+\frac{n\vartheta}{2}\sum_{m=1}^{M_n}(\alpha_m^{[k+1,l+1]}-\zeta_m)^2+n\sum_{m=1}^{M_n}p_{\lambda_n,t}(|\zeta_m|)\nonumber\\
=& \sum_{m=1}^{M_n}\frac{n\vartheta}{2}(u_m^{[k+1,l+1]}-\zeta_m)^2+n\sum_{m=1}^{M_n}p_{\lambda_n,t}(|\zeta_m|),
\end{align}
where $u_m^{[k+1,l+1]}=\alpha_m^{[k+1,l+1]}+\vartheta^{-1}v_m^{[k+1,l]}$.
For the MCP penalty with  $t>1/\vartheta$, by minimizing \eqref{eq:upeta}, we have
\begin{align}\label{upeta:MCP}
\zeta_m^{[k+1,l+1]}=
\begin{cases}
\frac{\text{ST}(u_m^{[k+1,l+1]},\lambda_n/\vartheta)}{1-1/(t\vartheta)} & \text{if}~~|u_m^{[k+1,l+1]}|\leq t\lambda_n, \\
u_m^{[k+1,l+1]} & \text{if}~~|u_m^{[k+1,l+1]}|>t\lambda_n,
\end{cases}
\end{align}
where $\text{ST}(x,\lambda_n)=\text{sign}(x)(|x|-\lambda_n)_+$ is the soft thresholding rule,
and $(x)_+ = \max\{x,0\}.$
For the SCAD penalty with $t>1/\vartheta+1$, it is
\begin{align}\label{upeta:SCAD}
\zeta_m^{[k+1,l+1]}=
\begin{cases}
\text{ST}(u_m^{[k+1,l+1]},\lambda_n/\vartheta) & \text{if}~~|u_m^{[k+1,l+1]}|\leq \lambda_n(1+1/\vartheta),\\
\frac{\text{ST}(u_m^{[k+1,l+1]},t\lambda_n/((t-1)\vartheta))}{1-1/((t-1)\vartheta)} & \text{if}~~\lambda_n(1+1/\vartheta)<|u_m^{[k+1,l+1]}|\leq t\lambda_n,\\
u_m^{[k+1,l+1]} & \text{if}~~|u_m^{[k+1,l+1]}|>t\lambda_n.
\end{cases}
\end{align}
Finally, we update $v$ by
\begin{align}\label{upv}
    {v}^{[k+1,l+1]}= {v}^{[k+1,l]}+\vartheta\big({\alpha}_{(-0)}^{[k+1,l+1]}-{\zeta}^{[k+1,l+1]}\big).
\end{align}

The ADMM algorithm to solve \eqref{upalpha} are summarized in Algorithm \ref{ADMM}.

\begin{algorithm}
\caption{: Alternating Mirection Method of Dultipliers Algorithm (ADMM algorithm)}\label{ADMM}
\begin{algorithmic}
\STATE {{Input}: ${\alpha}^{[k+1,0]}={\alpha}^{[k]},$ ${\eta}^{[k+1,0]}={\eta}^{[k]},$  $l=0$, a positive integer $L$ and a tolerance parameter $\varepsilon_2$.}
\STATE{{Step 2.1}. Calculate ${\alpha}^{[k+1,l+1]}$ by \eqref{Step2:upalpha};}
\STATE{{Step 2.2}. Calculate ${\eta}^{[k+1,l+1]}$ by \eqref{upgamma};}
\STATE{{Step 2.3}. Calculate ${\zeta}^{[k+1,l+1]}$ by \eqref{upeta:MCP} and \eqref{upeta:SCAD} for MCP and SCAD, respectively;}
\STATE{{Step 2.4}. Calculate ${v}^{[k+1,l+1]}$ by \eqref{upv};}
\STATE{{Step 2.5}. Let $l\leftarrow l+1.$ Repeat Steps 2.1-2.4 until $\|(({\alpha}^{[k+1,l]})^\top,({\eta}^{[k+1,l]})^\top)^\top-(({\alpha}^{[k+1,l-1]})^\top,({\eta}^{[k+1,l-1]})^\top)^{\top}\|<\varepsilon_2$ or $l>L$.}
\STATE{{Output}: ${\alpha}^{[k+1]}={\alpha}^{[k+1,l]}$ and ${\eta}^{[k+1]}={\eta}^{[k+1,l]}$}.
\end{algorithmic}
\end{algorithm}



\section{Additional simulation results}\label{secD}


\begin{figure}
\centering
\includegraphics[width=1\textwidth]{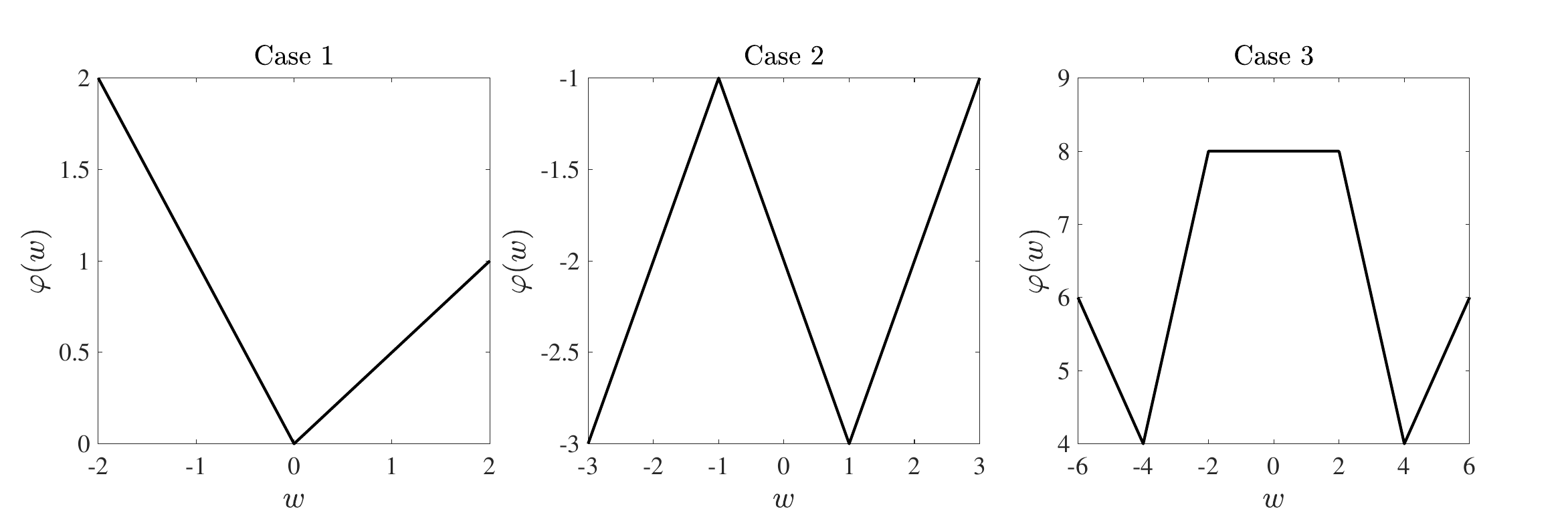}
\caption*{Figure S1: The curves of $\varphi(w)=\alpha_0w+\sum_{m=1}^{M_n^*}\alpha_m^*f(w,\tau_m^*)$ under different settings}\label{Simulation:Example:Figure}
\end{figure}


\begin{table}		
{\footnotesize
\caption*{Table S1: Simulation results (multiplied by 100) for Case 1 with $\alpha^*=(1,-1.5)^\top$, $C_n=\log\{\log (n)\}$
and $\epsilon_i\sim\text{Schi}^2(2)$.}												
\begin{tabular}{cccccccccccccccccc}		
\hline
& & &\multicolumn{4}{c}{Oracle}& &\multicolumn{4}{c}{SCAD}& &\multicolumn{4}{c}{MCP}\\				
\cline{4-7} \cline{9-12} \cline{14-17}
$n$ & $\nu$	&		&	Bias	&	SE	&	SD	&	CP	&		&	Bias	&	SE	&	SD	&	CP	&		&	Bias	&	SE	&	SD	&	CP	\\		
\hline
1000	&	0.6	&	$\alpha_0$	&	-0.027 	&	4.73 	&	4.68 	&	95.3 	&		&	-0.010 	&	4.73 	&	4.64 	&	95.6 	&		&	-0.027 	&	4.73 	&	4.62 	&	95.8 	\\
	&		&	$\alpha_1$	&	0.466 	&	7.82 	&	8.01 	&	93.8 	&		&	0.424 	&	7.81 	&	8.02 	&	93.8 	&		&	0.462 	&	7.82 	&	7.94 	&	94.0 	\\
	&		&	$\tau_1$	&	0.364 	&	5.57 	&	5.78 	&	95.2 	&		&	0.336 	&	5.57 	&	5.72 	&	95.4 	&		&	0.357 	&	5.57 	&	5.70 	&	95.3 	\\
	&		&	$\beta_2$	&	-0.244 	&	4.09 	&	4.10 	&	94.3 	&		&	-0.249 	&	4.09 	&	4.08 	&	94.1 	&		&	-0.237 	&	4.09 	&	4.08 	&	94.2 	\\
	&		&	$\gamma_1$	&	0.065 	&	3.34 	&	3.38 	&	94.5 	&		&	0.071 	&	3.34 	&	3.40 	&	94.3 	&		&	0.080 	&	3.34 	&	3.40 	&	94.4 	\\
	&	0.8	&	$\alpha_0$	&	-0.078 	&	4.73 	&	4.68 	&	95.4 	&		&	-0.032 	&	4.73 	&	4.64 	&	95.5 	&		&	-0.039 	&	4.73 	&	4.62 	&	95.8 	\\
	&		&	$\alpha_1$	&	0.489 	&	7.81 	&	7.96 	&	94.1 	&		&	0.407 	&	7.80 	&	8.01 	&	93.8 	&		&	0.462 	&	7.81 	&	7.95 	&	94.1 	\\
	&		&	$\tau_1$	&	0.292 	&	5.55 	&	5.76 	&	95.5 	&		&	0.275 	&	5.55 	&	5.71 	&	95.6 	&		&	0.329 	&	5.55 	&	5.72 	&	95.5 	\\
	&		&	$\beta_2$	&	-0.190 	&	4.08 	&	4.08 	&	94.7 	&		&	-0.229 	&	4.09 	&	4.07 	&	94.3 	&		&	-0.222 	&	4.08 	&	4.08 	&	94.3 	\\
	&		&	$\gamma_1$	&	0.075 	&	3.34 	&	3.38 	&	94.6 	&		&	0.076 	&	3.34 	&	3.39 	&	94.3 	&		&	0.083 	&	3.34 	&	3.40 	&	94.4 	\\
\hline																												
2000	&	0.6	&	$\alpha_0$	&	-0.034 	&	3.35 	&	3.39 	&	95.1 	&		&	-0.029 	&	3.35 	&	3.37 	&	95.1 	&		&	-0.026 	&	3.35 	&	3.38 	&	95.0 	\\
	&		&	$\alpha_1$	&	-0.033 	&	5.53 	&	5.54 	&	94.8 	&		&	-0.053 	&	5.53 	&	5.56 	&	94.5 	&		&	-0.048 	&	5.53 	&	5.54 	&	94.4 	\\
	&		&	$\tau_1$	&	-0.087 	&	3.94 	&	4.11 	&	94.0 	&		&	-0.098 	&	3.95 	&	4.13 	&	93.8 	&		&	-0.088 	&	3.95 	&	4.11 	&	94.0 	\\
	&		&	$\beta_2$	&	-0.078 	&	2.89 	&	2.94 	&	95.0 	&		&	-0.083 	&	2.89 	&	2.92 	&	94.8 	&		&	-0.084 	&	2.89 	&	2.93 	&	94.8 	\\
	&		&	$\gamma_1$	&	-0.080 	&	2.37 	&	2.45 	&	93.8 	&		&	-0.081 	&	2.37 	&	2.44 	&	94.0 	&		&	-0.079 	&	2.37 	&	2.45 	&	93.9 	\\
	&	0.8	&	$\alpha_0$	&	-0.060 	&	3.35 	&	3.38 	&	95.2 	&		&	-0.042 	&	3.35 	&	3.37 	&	95.1 	&		&	-0.045 	&	3.35 	&	3.38 	&	95.0 	\\
	&		&	$\alpha_1$	&	-0.023 	&	5.52 	&	5.52 	&	94.8 	&		&	-0.056 	&	5.52 	&	5.56 	&	94.6 	&		&	-0.028 	&	5.52 	&	5.54 	&	94.5 	\\
	&		&	$\tau_1$	&	-0.127 	&	3.94 	&	4.11 	&	93.9 	&		&	-0.132 	&	3.94 	&	4.14 	&	93.9 	&		&	-0.097 	&	3.94 	&	4.12 	&	93.9 	\\
	&		&	$\beta_2$	&	-0.050 	&	2.89 	&	2.93 	&	95.1 	&		&	-0.067 	&	2.89 	&	2.92 	&	94.9 	&		&	-0.068 	&	2.89 	&	2.93 	&	94.8 	\\
	&		&	$\gamma_1$	&	-0.075 	&	2.37 	&	2.45 	&	93.8 	&		&	-0.075 	&	2.37 	&	2.44 	&	94.0 	&		&	-0.079 	&	2.37 	&	2.45 	&	94.0 	\\			
 \hline
\end{tabular}}
\label{Tab:case1:chi2}										
\end{table}

\begin{table}
{\footnotesize
\caption*{Table S2: Simulation results (multiplied by 100) for Case 1 with $\alpha^*=(1,-1.5)^\top$, $C_n=\log\{\log (n)\}$
and $\epsilon_i\sim t(4)$.}										
\begin{tabular}{cccccccccccccccccc}		
 \hline
& & &\multicolumn{4}{c}{Oracle}& &\multicolumn{4}{c}{SCAD}& &\multicolumn{4}{c}{MCP}\\				
\cline{4-7} \cline{9-12} \cline{14-17} 	
$n$ & $\nu$	&		&	Bias	&	SE	&	SD	&	CP	&		&	Bias	&	SE	&	SD	&	CP	&		&	Bias	&	SE	&	SD	&	CP	\\	
\hline
1000	&	0.6	&	$\alpha_0$	&	0.373 	&	6.69 	&	6.99 	&	94.4 	&		&	0.478 	&	6.69 	&	6.96 	&	94.2 	&		&	0.415 	&	6.68 	&	7.00 	&	94.3 	\\
	&		&	$\alpha_1$	&	-0.165 	&	11.0 	&	11.2 	&	95.3 	&		&	-0.326 	&	11.0 	&	11.2 	&	94.9 	&		&	-0.214 	&	11.0 	&	11.2 	&	95.0 	\\
	&		&	$\tau_1$	&	0.275 	&	7.92 	&	8.29 	&	93.7 	&		&	0.265 	&	7.93 	&	8.30 	&	93.9 	&		&	0.295 	&	7.92 	&	8.29 	&	93.9 	\\
	&		&	$\beta_2$	&	-0.653 	&	5.86 	&	6.16 	&	94.5 	&		&	-0.767 	&	5.88 	&	6.13 	&	94.4 	&		&	-0.711 	&	5.87 	&	6.17 	&	94.1 	\\
	&		&	$\gamma_1$	&	-0.098 	&	4.71 	&	4.72 	&	95.2 	&		&	-0.096 	&	4.71 	&	4.74 	&	95.0 	&		&	-0.122 	&	4.71 	&	4.72 	&	95.2 	\\
	&	0.8	&	$\alpha_0$	&	0.323 	&	6.68 	&	6.94 	&	94.7 	&		&	0.451 	&	6.68 	&	6.92 	&	94.5 	&		&	0.412 	&	6.68 	&	6.98 	&	94.6 	\\
	&		&	$\alpha_1$	&	-0.144 	&	11.0 	&	11.1 	&	95.3 	&		&	-0.329 	&	11.0 	&	11.1 	&	95.1 	&		&	-0.204 	&	11.0 	&	11.1 	&	95.1 	\\
	&		&	$\tau_1$	&	0.197 	&	7.90 	&	8.23 	&	94.0 	&		&	0.196 	&	7.91 	&	8.35 	&	93.5 	&		&	0.299 	&	7.91 	&	8.28 	&	93.9 	\\
	&		&	$\beta_2$	&	-0.595 	&	5.85 	&	6.10 	&	94.7 	&		&	-0.732 	&	5.87 	&	6.10 	&	94.5 	&		&	-0.699 	&	5.86 	&	6.15 	&	94.2 	\\
	&		&	$\gamma_1$	&	-0.089 	&	4.71 	&	4.71 	&	95.2 	&		&	-0.089 	&	4.71 	&	4.74 	&	95.2 	&		&	-0.105 	&	4.71 	&	4.73 	&	95.1 	\\			 \hline																											
2000	&	0.6	&	$\alpha_0$	&	-0.091 	&	4.75 	&	4.93 	&	94.3 	&		&	-0.087 	&	4.75 	&	4.91 	&	94.3 	&		&	-0.089 	&	4.75 	&	4.94 	&	94.3 	\\
	&		&	$\alpha_1$	&	0.078 	&	7.83 	&	8.19 	&	94.5 	&		&	0.052 	&	7.83 	&	8.10 	&	94.8 	&		&	0.065 	&	7.83 	&	8.18 	&	94.7 	\\
	&		&	$\tau_1$	&	0.148 	&	5.59 	&	5.53 	&	95.2 	&		&	0.125 	&	5.59 	&	5.49 	&	95.4 	&		&	0.137 	&	5.59 	&	5.52 	&	95.1 	\\
	&		&	$\beta_2$	&	-0.124 	&	4.11 	&	4.31 	&	94.3 	&		&	-0.126 	&	4.11 	&	4.28 	&	94.4 	&		&	-0.122 	&	4.11 	&	4.31 	&	94.1 	\\
	&		&	$\gamma_1$	&	-0.024 	&	3.35 	&	3.33 	&	95.0 	&		&	-0.030 	&	3.35 	&	3.33 	&	94.8 	&		&	-0.030 	&	3.35 	&	3.33 	&	94.9 	\\
	&	0.8	&	$\alpha_0$	&	-0.117 	&	4.75 	&	4.92 	&	94.2 	&		&	-0.092 	&	4.75 	&	4.91 	&	94.6 	&		&	-0.102 	&	4.75 	&	4.92 	&	94.4 	\\
	&		&	$\alpha_1$	&	0.083 	&	7.82 	&	8.16 	&	94.0 	&		&	0.034 	&	7.82 	&	8.12 	&	94.5 	&		&	0.088 	&	7.82 	&	8.14 	&	94.4 	\\
	&		&	$\tau_1$	&	0.100 	&	5.58 	&	5.52 	&	95.1 	&		&	0.088 	&	5.58 	&	5.53 	&	94.8 	&		&	0.140 	&	5.58 	&	5.51 	&	94.8 	\\
	&		&	$\beta_2$	&	-0.096 	&	4.10 	&	4.30 	&	94.1 	&		&	-0.117 	&	4.10 	&	4.28 	&	94.1 	&		&	-0.108 	&	4.10 	&	4.29 	&	94.1 	\\
	&		&	$\gamma_1$	&	-0.019 	&	3.35 	&	3.33 	&	95.0 	&		&	-0.027 	&	3.35 	&	3.33 	&	95.1 	&		&	-0.026 	&	3.35 	&	3.33 	&	95.0 	\\			 \hline						
\end{tabular}}											
\end{table}

\begin{table}
{\footnotesize
\caption*{Table S3: Simulation results (multiplied by 100) for Case 2 with $\nu=0.8$,  $C_n=\log\{\log (n)\}$
and $\epsilon_i\sim \mathcal{N}(0,1)$.}											
\begin{tabular}{ccrccccrccccrcccc}		
\hline
& &\multicolumn{4}{c}{Oracle}& &\multicolumn{4}{c}{SCAD}& &\multicolumn{4}{c}{MCP}\\	
\cline{3-6} \cline{8-11} \cline{13-16} 	
$n$ & 	&	Bias	&	SE	&	SD	&	CP	&		&	Bias	&	SE	&	SD	&	CP	&		&	Bias	&	SE	&	SD	&	CP	\\	
\hline
1000 &	$\alpha_0$	&	0.237 	&	4.12 	&	4.11 	&	94.7 	&		&	0.145 	&	4.13 	&	4.16 	&	94.1 	&		&	0.115 	&	4.13 	&	4.18 	&	94.0 	\\
	&	$\alpha_1$	&	-0.786 	&	12.1 	&	12.2 	&	95.3 	&		&	-0.538 	&	12.1 	&	12.8 	&	94.8 	&		&	-0.424 	&	12.1 	&	12.9 	&	94.3 	\\
	&	$\alpha_2$	&	1.256 	&	13.6 	&	13.9 	&	94.4 	&		&	0.974 	&	13.6 	&	14.2 	&	94.0 	&		&	0.971 	&	13.6 	&	14.2 	&	93.9 	\\
	&	$\tau_1$	&	0.011 	&	6.59 	&	6.62 	&	94.9 	&		&	-0.180 	&	6.60 	&	6.94 	&	93.8 	&		&	-0.240 	&	6.60 	&	7.01 	&	93.5 	\\
	&	$\tau_2$	&	0.226 	&	8.51 	&	8.70 	&	94.7 	&		&	0.343 	&	8.53 	&	9.32 	&	93.4 	&		&	0.496 	&	8.54 	&	9.45 	&	92.5 	\\
	&	$\beta_2$	&	0.103 	&	3.25 	&	3.33 	&	93.8 	&		&	0.016 	&	3.26 	&	3.40 	&	93.3 	&		&	-0.019 	&	3.26 	&	3.42 	&	93.2 	\\
	&	$\gamma_1$	&	-0.130 	&	3.32 	&	3.24 	&	95.3 	&		&	-0.130 	&	3.32 	&	3.25 	&	95.2 	&		&	-0.129 	&	3.32 	&	3.25 	&	95.3 	\\
2000&	$\alpha_0$	&	-0.062 	&	2.90 	&	2.86 	&	95.9 	&		&	-0.082 	&	2.90 	&	2.85 	&	95.7 	&		&	-0.101 	&	2.90 	&	2.85 	&	95.8 	\\
	&	$\alpha_1$	&	-0.370 	&	8.53 	&	8.82 	&	93.8 	&		&	-0.345 	&	8.53 	&	8.99 	&	93.6 	&		&	-0.216 	&	8.52 	&	9.02 	&	93.7 	\\
	&	$\alpha_2$	&	0.645 	&	9.54 	&	9.90 	&	94.0 	&		&	0.562 	&	9.54 	&	9.95 	&	94.0 	&		&	0.504 	&	9.54 	&	9.97 	&	93.9 	\\
	&	$\tau_1$	&	0.043 	&	4.65 	&	4.77 	&	94.5 	&		&	0.012 	&	4.65 	&	4.81 	&	94.4 	&		&	-0.050 	&	4.66 	&	4.84 	&	94.3 	\\
	&	$\tau_2$	&	0.231 	&	6.00 	&	6.21 	&	93.0 	&		&	0.211 	&	6.01 	&	6.41 	&	92.9 	&		&	0.325 	&	6.01 	&	6.44 	&	92.8 	\\
	&	$\beta_2$	&	-0.071 	&	2.29 	&	2.40 	&	93.3 	&		&	-0.090 	&	2.30 	&	2.41 	&	93.3 	&		&	-0.107 	&	2.30 	&	2.41 	&	93.0 	\\
	&	$\gamma_1$	&	-0.082 	&	2.34 	&	2.31 	&	95.8 	&		&	-0.080 	&	2.34 	&	2.31 	&	95.7 	&		&	-0.080 	&	2.34 	&	2.31 	&	95.7 	\\			
\hline									
\end{tabular}}	
\label{Table:caseII:normal}											
\end{table}

\begin{table}							
{\footnotesize
\caption*{Table S4: Simulation results (multiplied by 100) for Case 2 with $\alpha^*=(1,-2,2)^\top$,
$C_n=\log\{\log (n)\}$ and $\epsilon_i\sim \text{Schi}^2(2)$.}							
\begin{tabular}{cccccccccccccccccc}		
\hline	
& & &\multicolumn{4}{c}{Oracle}& &\multicolumn{4}{c}{SCAD}& &\multicolumn{4}{c}{MCP}\\				
\cline{4-7} \cline{9-12} \cline{14-17} 	
$n$ & $\nu$	&		&	Bias	&	SE	&	SD	&	CP	&		&	Bias	&	SE	&	SD	&	CP	&		&	Bias	&	SE	&	SD	&	CP	\\		
\hline
1000	&	0.6	&	$\alpha_0$	&	0.034 	&	4.10 	&	4.32 	&	92.6 	&		&	-0.016 	&	4.10 	&	4.33 	&	92.6 	&		&	0.001 	&	4.10 	&	4.35 	&	92.4 	\\
	&		&	$\alpha_1$	&	-1.590 	&	12.2 	&	13.0 	&	93.0 	&		&	-1.566 	&	12.2 	&	13.2 	&	92.6 	&		&	-1.543 	&	12.2 	&	13.1 	&	93.2 	\\
	&		&	$\alpha_2$	&	2.118 	&	13.6 	&	13.9 	&	94.1 	&		&	1.997 	&	13.6 	&	14.0 	&	93.6 	&		&	1.991 	&	13.6 	&	13.9 	&	94.1 	\\
	&		&	$\tau_1$	&	0.349 	&	6.58 	&	6.99 	&	93.0 	&		&	0.290 	&	6.59 	&	7.07 	&	93.0 	&		&	0.298 	&	6.59 	&	6.99 	&	93.6 	\\
	&		&	$\tau_2$	&	-0.278 	&	8.51 	&	8.97 	&	93.2 	&		&	-0.362 	&	8.51 	&	9.15 	&	92.5 	&		&	-0.303 	&	8.51 	&	9.20 	&	92.4 	\\
	&		&	$\beta_2$	&	-0.109 	&	3.24 	&	3.40 	&	93.5 	&		&	-0.142 	&	3.25 	&	3.42 	&	93.2 	&		&	-0.140 	&	3.25 	&	3.43 	&	93.1 	\\
	&		&	$\gamma_1$	&	0.105 	&	3.31 	&	3.34 	&	95.2 	&		&	0.110 	&	3.31 	&	3.30 	&	95.4 	&		&	0.117 	&	3.31 	&	3.32 	&	95.3 	\\
	&	0.8	&	$\alpha_0$	&	0.047 	&	4.10 	&	4.30 	&	92.6 	&		&	-0.028 	&	4.09 	&	4.32 	&	92.8 	&		&	-0.016 	&	4.09 	&	4.34 	&	92.3 	\\
	&		&	$\alpha_1$	&	-1.417 	&	12.1 	&	12.6 	&	93.2 	&		&	-1.504 	&	12.1 	&	13.1 	&	93.0 	&		&	-1.423 	&	12.1 	&	13.0 	&	93.2 	\\
	&		&	$\alpha_2$	&	2.023 	&	13.6 	&	13.7 	&	94.0 	&		&	1.912 	&	13.6 	&	14.0 	&	93.8 	&		&	1.902 	&	13.6 	&	13.8 	&	94.0 	\\
	&		&	$\tau_1$	&	0.308 	&	6.56 	&	6.83 	&	93.5 	&		&	0.257 	&	6.56 	&	7.03 	&	93.2 	&		&	0.244 	&	6.56 	&	6.99 	&	93.2 	\\
	&		&	$\tau_2$	&	-0.141 	&	8.46 	&	8.61 	&	94.3 	&		&	-0.364 	&	8.46 	&	9.13 	&	92.6 	&		&	-0.227 	&	8.46 	&	9.22 	&	92.2 	\\
	&		&	$\beta_2$	&	-0.092 	&	3.23 	&	3.35 	&	94.0 	&		&	-0.153 	&	3.24 	&	3.41 	&	93.7 	&		&	-0.155 	&	3.24 	&	3.42 	&	93.5 	\\
	&		&	$\gamma_1$	&	0.101 	&	3.31 	&	3.34 	&	95.2 	&		&	0.112 	&	3.31 	&	3.30 	&	95.4 	&		&	0.111 	&	3.31 	&	3.31 	&	95.4 	\\			\hline																												
2000	&	0.6	&	$\alpha_0$	&	0.060 	&	2.91 	&	3.06 	&	93.6 	&		&	0.055 	&	2.91 	&	3.06 	&	93.5 	&		&	0.031 	&	2.91 	&	3.06 	&	93.7 	\\
	&		&	$\alpha_1$	&	-0.547 	&	8.58 	&	9.11 	&	93.2 	&		&	-0.595 	&	8.59 	&	9.25 	&	93.0 	&		&	-0.490 	&	8.58 	&	9.32 	&	92.6 	\\
	&		&	$\alpha_2$	&	0.662 	&	9.57 	&	9.79 	&	94.3 	&		&	0.641 	&	9.57 	&	9.88 	&	93.9 	&		&	0.587 	&	9.57 	&	9.90 	&	94.0 	\\
	&		&	$\tau_1$	&	0.062 	&	4.67 	&	4.80 	&	94.1 	&		&	0.068 	&	4.67 	&	4.86 	&	93.6 	&		&	0.015 	&	4.67 	&	4.89 	&	93.6 	\\
	&		&	$\tau_2$	&	-0.102 	&	6.03 	&	6.57 	&	92.8 	&		&	-0.179 	&	6.03 	&	6.68 	&	92.3 	&		&	-0.087 	&	6.04 	&	6.72 	&	92.3 	\\
	&		&	$\beta_2$	&	-0.045 	&	2.30 	&	2.45 	&	93.2 	&		&	-0.050 	&	2.30 	&	2.44 	&	93.4 	&		&	-0.069 	&	2.30 	&	2.44 	&	93.4 	\\
	&		&	$\gamma_1$	&	0.142 	&	2.35 	&	2.21 	&	96.3 	&		&	0.145 	&	2.35 	&	2.21 	&	96.4 	&		&	0.146 	&	2.35 	&	2.21 	&	96.4 	\\
	&	0.8	&	$\alpha_0$	&	0.070 	&	2.91 	&	3.06 	&	93.2 	&		&	0.048 	&	2.91 	&	3.06 	&	93.3 	&		&	0.032 	&	2.91 	&	3.06 	&	93.5 	\\
	&		&	$\alpha_1$	&	-0.467 	&	8.55 	&	8.92 	&	93.4 	&		&	-0.577 	&	8.56 	&	9.22 	&	93.0 	&		&	-0.430 	&	8.55 	&	9.20 	&	92.5 	\\
	&		&	$\alpha_2$	&	0.628 	&	9.55 	&	9.73 	&	94.4 	&		&	0.587 	&	9.56 	&	9.84 	&	94.3 	&		&	0.541 	&	9.55 	&	9.85 	&	93.8 	\\
	&		&	$\tau_1$	&	0.041 	&	4.65 	&	4.74 	&	94.3 	&		&	0.052 	&	4.65 	&	4.84 	&	93.9 	&		&	-0.009 	&	4.65 	&	4.85 	&	93.5 	\\
	&		&	$\tau_2$	&	-0.025 	&	6.01 	&	6.40 	&	93.7 	&		&	-0.209 	&	6.01 	&	6.69 	&	92.1 	&		&	-0.048 	&	6.01 	&	6.63 	&	92.7 	\\
	&		&	$\beta_2$	&	-0.034 	&	2.30 	&	2.45 	&	93.0 	&		&	-0.055 	&	2.30 	&	2.43 	&	93.6 	&		&	-0.070 	&	2.30 	&	2.43 	&	93.3 	\\
	&		&	$\gamma_1$	&	0.140 	&	2.34 	&	2.21 	&	96.3 	&		&	0.146 	&	2.34 	&	2.21 	&	96.4 	&		&	0.146 	&	2.35 	&	2.21 	&	96.4 	\\			\hline										
\end{tabular}}											
\end{table}

\begin{table}	
{\footnotesize
\caption*{Table S5: Simulation results (multiplied by 100) for Case 2 with $\alpha^*=(1,-2,2)^\top$, $C_n=\log\{\log (n)\}$
and $\epsilon_i\sim t(4)$.}								
\begin{tabular}{cccccccccccccccccc}	
\hline
& & &\multicolumn{4}{c}{Oracle}& &\multicolumn{4}{c}{SCAD}& &\multicolumn{4}{c}{MCP}\\				
\cline{4-7} \cline{9-12} \cline{14-17} 	
$n$ & $\nu$	&		&	Bias	&	SE	&	SD	&	CP	&		&	Bias	&	SE	&	SD	&	CP	&		&	Bias	&	SE	&	SD	&	CP	\\		
\hline
1000	&	0.6	&	$\alpha_0$	&	-0.097 	&	5.82 	&	6.15 	&	93.1 	&		&	-0.152 	&	5.82 	&	6.20 	&	92.7 	&		&	-0.153 	&	5.78 	&	6.24 	&	92.1 	\\
	&		&	$\alpha_1$	&	-2.133 	&	17.4 	&	19.2 	&	92.3 	&		&	-2.252 	&	17.4 	&	20.1 	&	91.7 	&		&	-2.286 	&	17.3 	&	20.2 	&	91.2 	\\
	&		&	$\alpha_2$	&	3.438 	&	19.4 	&	21.2 	&	93.1 	&		&	3.394 	&	19.5 	&	21.4 	&	92.7 	&		&	3.581 	&	19.3 	&	21.3 	&	92.5 	\\
	&		&	$\tau_1$	&	0.083 	&	9.38 	&	10.4 	&	93.3 	&		&	-0.029 	&	9.40 	&	10.7 	&	92.6 	&		&	0.003 	&	9.33 	&	10.7 	&	92.2 	\\
	&		&	$\tau_2$	&	0.185 	&	12.1 	&	13.1 	&	93.2 	&		&	-0.006 	&	12.1 	&	13.8 	&	91.9 	&		&	0.216 	&	12.1 	&	13.7 	&	91.3 	\\
	&		&	$\beta_2$	&	-0.153 	&	4.63 	&	4.89 	&	94.2 	&		&	-0.203 	&	4.63 	&	4.97 	&	93.9 	&		&	-0.218 	&	4.60 	&	4.97 	&	93.8 	\\
	&		&	$\gamma_1$	&	0.199 	&	4.69 	&	4.76 	&	94.3 	&		&	0.153 	&	4.69 	&	4.69 	&	94.3 	&		&	0.187 	&	4.67 	&	4.69 	&	94.2 	\\
	&	0.8	&	$\alpha_0$	&	-0.070 	&	5.81 	&	6.09 	&	92.9 	&		&	-0.173 	&	5.81 	&	6.20 	&	92.6 	&		&	-0.170 	&	5.77 	&	6.24 	&	92.4 	\\
	&		&	$\alpha_1$	&	-1.929 	&	17.3 	&	18.5 	&	92.5 	&		&	-2.015 	&	17.3 	&	20.0 	&	91.8 	&		&	-2.100 	&	17.2 	&	20.2 	&	90.9 	\\
	&		&	$\alpha_2$	&	3.239 	&	19.4 	&	20.9 	&	93.0 	&		&	3.121 	&	19.4 	&	21.3 	&	92.7 	&		&	3.400 	&	19.3 	&	21.3 	&	92.5 	\\
	&		&	$\tau_1$	&	0.051 	&	9.34 	&	10.1 	&	93.8 	&		&	-0.168 	&	9.37 	&	10.6 	&	92.9 	&		&	-0.118 	&	9.30 	&	10.6 	&	92.4 	\\
	&		&	$\tau_2$	&	0.225 	&	12.1 	&	12.6 	&	93.8 	&		&	0.021 	&	12.1 	&	13.8 	&	91.7 	&		&	0.284 	&	12.0 	&	13.9 	&	90.9 	\\
	&		&	$\beta_2$	&	-0.115 	&	4.61 	&	4.83 	&	94.3 	&		&	-0.200 	&	4.62 	&	4.96 	&	93.8 	&		&	-0.225 	&	4.59 	&	4.97 	&	93.8 	\\
	&		&	$\gamma_1$	&	0.197 	&	4.69 	&	4.75 	&	94.3 	&		&	0.164 	&	4.69 	&	4.69 	&	94.5 	&		&	0.189 	&	4.67 	&	4.66 	&	94.3 	\\				\hline																												
2000	&	0.6	&	$\alpha_0$	&	0.016 	&	4.12 	&	4.13 	&	95.1 	&		&	0.023 	&	4.12 	&	4.19 	&	94.5 	&		&	-0.002 	&	4.12 	&	4.18 	&	94.5 	\\
	&		&	$\alpha_1$	&	-0.851 	&	12.2 	&	12.4 	&	94.1 	&		&	-1.009 	&	12.2 	&	12.9 	&	92.9 	&		&	-0.825 	&	12.2 	&	12.9 	&	92.9 	\\
	&		&	$\alpha_2$	&	1.085 	&	13.6 	&	13.8 	&	94.5 	&		&	1.148 	&	13.6 	&	13.8 	&	94.3 	&		&	1.048 	&	13.6 	&	14.0 	&	94.2 	\\
	&		&	$\tau_1$	&	0.088 	&	6.60 	&	6.69 	&	94.9 	&		&	0.114 	&	6.60 	&	6.86 	&	94.1 	&		&	0.038 	&	6.61 	&	6.88 	&	94.2 	\\
	&		&	$\tau_2$	&	0.005 	&	8.55 	&	9.23 	&	93.2 	&		&	-0.122 	&	8.55 	&	9.76 	&	91.7 	&		&	0.045 	&	8.56 	&	9.72 	&	91.6 	\\
	&		&	$\beta_2$	&	-0.019 	&	3.26 	&	3.38 	&	93.5 	&		&	-0.021 	&	3.26 	&	3.44 	&	93.3 	&		&	-0.039 	&	3.26 	&	3.42 	&	92.6 	\\
	&		&	$\gamma_1$	&	0.073 	&	3.32 	&	3.41 	&	94.5 	&		&	0.078 	&	3.32 	&	3.40 	&	94.6 	&		&	0.077 	&	3.32 	&	3.41 	&	94.5 	\\
	&	0.8	&	$\alpha_0$	&	0.031 	&	4.12 	&	4.11 	&	95.3 	&		&	0.014 	&	4.12 	&	4.18 	&	95.0 	&		&	-0.019 	&	4.12 	&	4.18 	&	94.4 	\\
	&		&	$\alpha_1$	&	-0.807 	&	12.1 	&	12.1 	&	94.0 	&		&	-0.941 	&	12.1 	&	12.8 	&	93.0 	&		&	-0.707 	&	12.1 	&	12.8 	&	93.0 	\\
	&		&	$\alpha_2$	&	1.034 	&	13.6 	&	13.6 	&	94.5 	&		&	1.033 	&	13.6 	&	13.8 	&	94.1 	&		&	0.951 	&	13.6 	&	13.9 	&	93.6 	\\
	&		&	$\tau_1$	&	0.085 	&	6.58 	&	6.55 	&	95.1 	&		&	0.074 	&	6.58 	&	6.81 	&	94.0 	&		&	-0.011 	&	6.59 	&	6.82 	&	94.2 	\\
	&		&	$\tau_2$	&	-0.005 	&	8.51 	&	8.94 	&	93.5 	&		&	-0.143 	&	8.51 	&	9.61 	&	92.1 	&		&	0.113 	&	8.52 	&	9.68 	&	91.9 	\\
	&		&	$\beta_2$	&	-0.001 	&	3.25 	&	3.36 	&	93.3 	&		&	-0.027 	&	3.25 	&	3.43 	&	92.9 	&		&	-0.055 	&	3.25 	&	3.43 	&	93.2 	\\
	&		&	$\gamma_1$	&	0.072 	&	3.32 	&	3.40 	&	94.6 	&		&	0.081 	&	3.32 	&	3.41 	&	94.7 	&		&	0.079 	&	3.32 	&	3.41 	&	94.7 	\\			\hline									
\end{tabular}}											
\end{table}

\begin{table}
{\footnotesize
\caption*{Table S6: Simulation results (multiplied by 100) for Case 3 with $\nu=0.8$, $C_n=\log\{\log (n)\}$ 
and $\epsilon_i\sim \mathcal{N}(0,1)$.}										
\begin{tabular}{ccrccccrccccrcccc}																											
\hline	
& &\multicolumn{4}{c}{Oracle}& &\multicolumn{4}{c}{SCAD}& &\multicolumn{4}{c}{MCP}\\ 	
\cline{3-6} \cline{8-11} \cline{13-16}
$n$ & 	&	Bias	&	SE	&	SD	&	CP	&		&	Bias	&	SE	&	SD	&	CP	&		&	Bias	&	SE	&	SD	&	CP	\\		
\hline
1000  &	$\alpha_0$	&	0.072 	&	3.04 	&	3.18 	&	94.0 	&		&	0.035 	&	3.04 	&	3.19 	&	94.0 	&		&	0.028 	&	3.04 	&	3.19 	&	93.9 	\\
	 &	$\alpha_1$	&	0.873 	&	16.2 	&	17.3 	&	93.3 	&		&	1.808 	&	16.4 	&	17.4 	&	93.4 	&		&	1.894 	&	16.4 	&	17.3 	&	93.4 	\\
	 &	$\alpha_2$	&	-1.156 	&	15.7 	&	16.7 	&	93.8 	&		&	-1.632 	&	15.8 	&	16.9 	&	93.6 	&		&	-1.708 	&	15.8 	&	16.8 	&	93.6 	\\
	 &	$\alpha_3$	&	-1.192 	&	15.8 	&	16.6 	&	93.6 	&		&	-0.118 	&	15.6 	&	17.0 	&	93.3 	&		&	-0.152 	&	15.6 	&	17.0 	&	93.3 	\\
	 &	$\alpha_4$	&	1.600 	&	18.3 	&	19.0 	&	93.4 	&		&	0.324 	&	18.2 	&	19.5 	&	93.1 	&		&	0.374 	&	18.2 	&	19.4 	&	93.1 	\\
	 &	$\tau_1$	&	-0.618 	&	12.5 	&	13.0 	&	93.1 	&		&	-0.279 	&	12.5 	&	13.1 	&	93.0 	&		&	-0.224 	&	12.5 	&	13.0 	&	93.2 	\\
	 &	$\tau_2$	&	-0.297 	&	11.2 	&	11.9 	&	93.1 	&		&	-1.122 	&	11.2 	&	11.8 	&	92.9 	&		&	-1.156 	&	11.2 	&	11.8 	&	92.8 	\\
	 &	$\tau_3$	&	0.956 	&	11.3 	&	11.8 	&	93.3 	&		&	-0.412 	&	11.3 	&	12.6 	&	92.1 	&		&	-0.427 	&	11.3 	&	12.6 	&	92.2 	\\
	 &	$\tau_4$	&	0.581 	&	13.5 	&	14.2 	&	93.2 	&		&	1.155 	&	13.6 	&	14.4 	&	93.2 	&		&	1.119 	&	13.6 	&	14.4 	&	93.1 	\\
	 &	$\beta_2$	&	-0.251 	&	5.07 	&	5.32 	&	93.9 	&		&	-0.229 	&	5.07 	&	5.34 	&	94.1 	&		&	-0.213 	&	5.07 	&	5.34 	&	94.0 	\\
	 &	$\gamma_1$	&	-0.216 	&	3.46 	&	3.56 	&	95.0 	&		&	-0.215 	&	3.46 	&	3.54 	&	95.1 	&		&	-0.216 	&	3.46 	&	3.54 	&	95.0 	\\	
2000  &	$\alpha_0$	&	0.030 	&	2.15 	&	2.17 	&	94.2 	&		&	-0.009 	&	2.15 	&	2.16 	&	94.3 	&		&	-0.014 	&	2.15 	&	2.16 	&	94.2 	\\
	 &	$\alpha_1$	&	-0.252 	&	11.4 	&	11.6 	&	93.6 	&		&	0.403 	&	11.5 	&	11.7 	&	94.1 	&		&	0.487 	&	11.5 	&	11.6 	&	94.1 	\\
	 &	$\alpha_2$	&	0.106 	&	11.0 	&	11.2 	&	94.2 	&		&	-0.241 	&	11.1 	&	11.2 	&	94.1 	&		&	-0.317 	&	11.1 	&	11.2 	&	94.1 	\\
	 &	$\alpha_3$	&	-0.604 	&	11.1 	&	11.4 	&	94.0 	&		&	0.093 	&	11.0 	&	11.6 	&	93.5 	&		&	0.029 	&	11.0 	&	11.6 	&	93.5 	\\
	 &	$\alpha_4$	&	0.860 	&	12.9 	&	13.4 	&	94.5 	&		&	0.035 	&	12.8 	&	13.7 	&	94.1 	&		&	0.103 	&	12.8 	&	13.7 	&	94.0 	\\
	 &	$\tau_1$	&	-0.538 	&	8.84 	&	8.88 	&	93.6 	&		&	-0.256 	&	8.83 	&	8.89 	&	93.8 	&		&	-0.214 	&	8.83 	&	8.87 	&	93.8 	\\
	 &	$\tau_2$	&	0.285 	&	7.93 	&	8.15 	&	93.9 	&		&	-0.238 	&	7.94 	&	8.11 	&	94.0 	&		&	-0.279 	&	7.94 	&	8.10 	&	94.2 	\\
	 &	$\tau_3$	&	0.296 	&	7.93 	&	8.25 	&	94.2 	&		&	-0.602 	&	7.93 	&	8.61 	&	92.6 	&		&	-0.589 	&	7.93 	&	8.60 	&	92.5 	\\
	 &	$\tau_4$	&	0.201 	&	9.56 	&	9.94 	&	93.3 	&		&	0.494 	&	9.57 	&	9.97 	&	93.3 	&		&	0.457 	&	9.57 	&	9.97 	&	93.2 	\\
	 &	$\beta_2$	&	-0.168 	&	3.58 	&	3.61 	&	94.7 	&		&	-0.123 	&	3.58 	&	3.59 	&	95.0 	&		&	-0.112 	&	3.58 	&	3.59 	&	95.1 	\\
	 &	$\gamma_1$	&	-0.032 	&	2.45 	&	2.47 	&	93.7 	&		&	-0.033 	&	2.45 	&	2.46 	&	93.6 	&		&	-0.032 	&	2.45 	&	2.46 	&	93.6 	\\		
\hline										
\end{tabular}}	
\label{Table:caseIII1:normal}											
\end{table}

\begin{table}
{\footnotesize
\caption*{Table S7: Simulation results (multiplied by 100) for Case 3 with $\nu=0.6$, $\alpha^*=(-1,3,-2,-2,3)^\top$, $C_n=\log\{\log (n)\}$
and $\epsilon_i\sim \text{Schi}^2(2)$.}								
\begin{tabular}{ccrccccrccccrcccc}		
\hline
& &\multicolumn{4}{c}{Oracle}& &\multicolumn{4}{c}{SCAD}& &\multicolumn{4}{c}{MCP}\\				
\cline{3-6} \cline{8-11} \cline{13-16}
$n$& &	Bias	&	SE	&	SD	&	CP	&		&	Bias	&	SE	&	SD	&	CP	&		&	Bias	&	SE	&	SD	&	CP	\\		
\hline
1000&	$\alpha_0$	&	0.286 	&	3.03 	&	3.16 	&	93.9 	&		&	0.285 	&	3.03 	&	3.21 	&	93.1 	&		&	0.278 	&	3.03 	&	3.20 	&	93.1 	\\
	&	$\alpha_1$	&	0.697 	&	16.3 	&	17.3 	&	93.3 	&		&	1.053 	&	16.3 	&	17.4 	&	93.2 	&		&	1.020 	&	16.3 	&	17.4 	&	93.2 	\\
	&	$\alpha_2$	&	-1.292 	&	15.7 	&	16.5 	&	94.0 	&		&	-1.475 	&	15.8 	&	16.6 	&	93.7 	&		&	-1.440 	&	15.8 	&	16.6 	&	93.7 	\\
	&	$\alpha_3$	&	-1.020 	&	15.9 	&	17.5 	&	92.2 	&		&	-0.340 	&	15.8 	&	17.8 	&	91.9 	&		&	-0.324 	&	15.8 	&	17.8 	&	91.8 	\\
	&	$\alpha_4$	&	1.273 	&	18.4 	&	19.5 	&	94.0 	&		&	0.646 	&	18.3 	&	19.9 	&	93.6 	&		&	0.618 	&	18.3 	&	19.9 	&	93.8 	\\
	&	$\tau_1$	&	-0.940 	&	12.5 	&	13.0 	&	94.3 	&		&	-0.862 	&	12.5 	&	13.3 	&	93.7 	&		&	-0.853 	&	12.5 	&	13.3 	&	93.7 	\\
	&	$\tau_2$	&	-0.193 	&	11.3 	&	12.9 	&	91.5 	&		&	-0.585 	&	11.3 	&	12.9 	&	91.3 	&		&	-0.542 	&	11.3 	&	12.9 	&	91.1 	\\
	&	$\tau_3$	&	1.019 	&	11.3 	&	11.9 	&	92.3 	&		&	0.435 	&	11.3 	&	12.2 	&	91.7 	&		&	0.428 	&	11.3 	&	12.1 	&	91.8 	\\
	&	$\tau_4$	&	1.058 	&	13.6 	&	14.9 	&	92.0 	&		&	1.563 	&	13.6 	&	15.3 	&	91.3 	&		&	1.540 	&	13.6 	&	15.3 	&	91.6 	\\
	&	$\beta_2$	&	-0.481 	&	5.07 	&	5.34 	&	93.9 	&		&	-0.508 	&	5.08 	&	5.44 	&	93.2 	&		&	-0.498 	&	5.08 	&	5.43 	&	93.3 	\\
	&	$\gamma_1$	&	0.064 	&	3.45 	&	3.49 	&	95.3 	&		&	0.061 	&	3.45 	&	3.47 	&	95.4 	&		&	0.062 	&	3.45 	&	3.47 	&	95.4 	\\
2000&	$\alpha_0$	&	0.074 	&	2.16 	&	2.21 	&	94.6 	&		&	0.048 	&	2.16 	&	2.22 	&	94.6 	&		&	0.047 	&	2.16 	&	2.22 	&	94.5 	\\
	&	$\alpha_1$	&	0.522 	&	11.5 	&	11.7 	&	94.2 	&		&	0.884 	&	11.5 	&	11.9 	&	93.9 	&		&	0.902 	&	11.5 	&	11.9 	&	93.8 	\\
	&	$\alpha_2$	&	-0.779 	&	11.1 	&	11.4 	&	94.3 	&		&	-1.008 	&	11.1 	&	11.5 	&	94.2 	&		&	-1.018 	&	11.1 	&	11.5 	&	94.1 	\\
	&	$\alpha_3$	&	-0.403 	&	11.1 	&	12.2 	&	92.2 	&		&	-0.129 	&	11.1 	&	12.3 	&	92.0 	&		&	-0.134 	&	11.1 	&	12.3 	&	91.9 	\\
	&	$\alpha_4$	&	0.408 	&	12.9 	&	13.8 	&	92.5 	&		&	0.121 	&	12.9 	&	13.9 	&	92.4 	&		&	0.116 	&	12.9 	&	13.9 	&	92.5 	\\
	&	$\tau_1$	&	-0.427 	&	8.85 	&	8.93 	&	94.9 	&		&	-0.253 	&	8.84 	&	9.05 	&	94.8 	&		&	-0.248 	&	8.84 	&	9.05 	&	94.7 	\\
	&	$\tau_2$	&	-0.324 	&	7.96 	&	8.84 	&	92.4 	&		&	-0.537 	&	7.96 	&	8.87 	&	92.5 	&		&	-0.560 	&	7.96 	&	8.85 	&	92.5 	\\
	&	$\tau_3$	&	0.547 	&	7.97 	&	8.52 	&	92.9 	&		&	0.180 	&	7.97 	&	8.63 	&	92.8 	&		&	0.176 	&	7.97 	&	8.62 	&	92.8 	\\
	&	$\tau_4$	&	0.340 	&	9.59 	&	10.1 	&	93.9 	&		&	0.444 	&	9.60 	&	10.2 	&	93.7 	&		&	0.447 	&	9.60 	&	10.2 	&	93.7 	\\
	&	$\beta_2$	&	-0.217 	&	3.59 	&	3.74 	&	94.3 	&		&	-0.184 	&	3.59 	&	3.76 	&	94.6 	&		&	-0.185 	&	3.59 	&	3.76 	&	94.5 	\\
	&	$\gamma_1$	&	-0.105 	&	2.45 	&	2.47 	&	94.9 	&		&	-0.108 	&	2.45 	&	2.47 	&	94.9 	&		&	-0.108 	&	2.45 	&	2.47 	&	94.9 	\\
\hline										
\end{tabular}}											
\end{table}

\begin{table}
{\footnotesize
\caption*{Table S8: Simulation results (multiplied by 100) for Case 3 with $\nu=0.8$, $\alpha^*=(-1,3,-2,-2,3)^\top$, $C_n=\log\{\log (n)\}$
and $\epsilon_i\sim \text{Schi}^2(2)$.}								
\begin{tabular}{ccrccccrccccrcccc}		
\hline
& &\multicolumn{4}{c}{Oracle}& &\multicolumn{4}{c}{SCAD}& &\multicolumn{4}{c}{MCP}\\				
\cline{3-6} \cline{8-11} \cline{13-16}
$n$&	&	Bias	&	SE	&	SD	&	CP	&		&	Bias	&	SE	&	SD	&	CP	&		&	Bias	&	SE	&	SD	&	CP	\\		
\hline
1000	&	$\alpha_0$	&	0.277 	&	3.03 	&	3.14 	&	93.9 	&		&	0.297 	&	3.03 	&	3.20 	&	93.1 	&		&	0.289 	&	3.03 	&	3.20 	&	93.4 	\\
	&	$\alpha_1$	&	0.466 	&	16.2 	&	16.7 	&	93.6 	&		&	0.834 	&	16.2 	&	17.2 	&	93.7 	&		&	0.885 	&	16.2 	&	17.2 	&	93.5 	\\
	&	$\alpha_2$	&	-1.068 	&	15.6 	&	16.2 	&	94.0 	&		&	-1.220 	&	15.7 	&	16.5 	&	93.6 	&		&	-1.254 	&	15.7 	&	16.5 	&	93.7 	\\
	&	$\alpha_3$	&	-0.842 	&	15.8 	&	17.0 	&	93.0 	&		&	0.206 	&	15.6 	&	17.7 	&	91.8 	&		&	0.214 	&	15.6 	&	17.7 	&	91.6 	\\
	&	$\alpha_4$	&	1.156 	&	18.3 	&	19.2 	&	94.2 	&		&	0.151 	&	18.2 	&	19.9 	&	93.5 	&		&	0.138 	&	18.2 	&	19.9 	&	93.4 	\\
	&	$\tau_1$	&	-0.957 	&	12.5 	&	12.9 	&	94.3 	&		&	-0.966 	&	12.5 	&	13.3 	&	93.7 	&		&	-0.926 	&	12.5 	&	13.2 	&	94.1 	\\
	&	$\tau_2$	&	-0.060 	&	11.2 	&	12.6 	&	91.9 	&		&	-0.558 	&	11.2 	&	12.8 	&	91.1 	&		&	-0.571 	&	11.2 	&	12.8 	&	91.2 	\\
	&	$\tau_3$	&	0.975 	&	11.2 	&	11.6 	&	93.2 	&		&	0.082 	&	11.2 	&	12.1 	&	92.0 	&		&	0.065 	&	11.2 	&	12.1 	&	92.2 	\\
	&	$\tau_4$	&	1.057 	&	13.5 	&	14.6 	&	92.6 	&		&	1.844 	&	13.6 	&	15.3 	&	91.2 	&		&	1.841 	&	13.6 	&	15.3 	&	91.3 	\\
	&	$\beta_2$	&	-0.452 	&	5.06 	&	5.30 	&	93.9 	&		&	-0.525 	&	5.07 	&	5.44 	&	93.1 	&		&	-0.515 	&	5.07 	&	5.43 	&	93.3 	\\
	&	$\gamma_1$	&	0.063 	&	3.45 	&	3.49 	&	95.3 	&		&	0.056 	&	3.45 	&	3.49 	&	95.2 	&		&	0.056 	&	3.45 	&	3.49 	&	95.2 	\\	
2000	&	$\alpha_0$	&	0.079 	&	2.15 	&	2.19 	&	94.6 	&		&	0.055 	&	2.15 	&	2.22 	&	94.5 	&		&	0.051 	&	2.15 	&	2.22 	&	94.6 	\\
	&	$\alpha_1$	&	0.371 	&	11.4 	&	11.5 	&	95.0 	&		&	0.830 	&	11.5 	&	11.8 	&	94.2 	&		&	0.861 	&	11.5 	&	11.8 	&	94.4 	\\
	&	$\alpha_2$	&	-0.645 	&	11.1 	&	11.2 	&	94.4 	&		&	-0.920 	&	11.1 	&	11.4 	&	94.3 	&		&	-0.937 	&	11.1 	&	11.4 	&	94.5 	\\
	&	$\alpha_3$	&	-0.342 	&	11.1 	&	12.0 	&	92.4 	&		&	0.171 	&	11.0 	&	12.3 	&	91.8 	&		&	0.147 	&	11.0 	&	12.2 	&	91.8 	\\
	&	$\alpha_4$	&	0.363 	&	12.9 	&	13.7 	&	92.7 	&		&	-0.179 	&	12.8 	&	13.9 	&	92.4 	&		&	-0.172 	&	12.8 	&	13.9 	&	92.4 	\\
	&	$\tau_1$	&	-0.479 	&	8.83 	&	8.82 	&	95.1 	&		&	-0.297 	&	8.83 	&	9.05 	&	94.4 	&		&	-0.276 	&	8.83 	&	9.04 	&	94.6 	\\
	&	$\tau_2$	&	-0.261 	&	7.94 	&	8.76 	&	92.5 	&		&	-0.584 	&	7.94 	&	8.80 	&	92.2 	&		&	-0.604 	&	7.94 	&	8.81 	&	92.4 	\\
	&	$\tau_3$	&	0.556 	&	7.95 	&	8.38 	&	92.9 	&		&	-0.023 	&	7.95 	&	8.61 	&	92.6 	&		&	-0.033 	&	7.95 	&	8.61 	&	92.6 	\\
	&	$\tau_4$	&	0.359 	&	9.57 	&	10.0 	&	94.4 	&		&	0.612 	&	9.58 	&	10.2 	&	93.9 	&		&	0.589 	&	9.58 	&	10.2 	&	93.6 	\\
	&	$\beta_2$	&	-0.220 	&	3.58 	&	3.71 	&	94.5 	&		&	-0.200 	&	3.58 	&	3.77 	&	94.1 	&		&	-0.192 	&	3.58 	&	3.77 	&	94.1 	\\
	&	$\gamma_1$	&	-0.106 	&	2.45 	&	2.47 	&	94.9 	&		&	-0.107 	&	2.45 	&	2.47 	&	94.9 	&		&	-0.107 	&	2.45 	&	2.47 	&	94.9 	\\	
 \hline										
\end{tabular}}											
\end{table}

\begin{table}
{\footnotesize
\caption*{Table S9: Simulation results (multiplied by 100) for Case 3 with $\nu=0.6$, $\alpha^*=(-1,3,-2,-2,3)^\top$, $C_n=\log\{\log (n)\}$
and $\epsilon_i\sim t(4)$.}							
\begin{tabular}{ccrccccrccccrcccc}
\hline
& &\multicolumn{4}{c}{Oracle}& &\multicolumn{4}{c}{SCAD}& &\multicolumn{4}{c}{MCP}\\				
\cline{3-6} \cline{8-11} \cline{13-16} 	
$n$& $\nu$ 	& Bias	&	SE	&	SD	&	CP	&		&	Bias	&	SE	&	SD	&	CP	&		&	Bias	&	SE	&	SD	&	CP	\\	
\hline
1000&	$\alpha_0$	&	0.095 	&	4.29 	&	4.38 	&	95.4 	&		&	0.104 	&	4.29 	&	4.43 	&	95.1 	&		&	0.085 	&	4.29 	&	4.42 	&	94.9 	\\
	&	$\alpha_1$	&	2.647 	&	23.3 	&	25.3 	&	91.8 	&		&	3.535 	&	23.5 	&	25.6 	&	91.8 	&		&	3.707 	&	23.5 	&	25.5 	&	91.8 	\\
	&	$\alpha_2$	&	-2.741 	&	22.6 	&	25.0 	&	91.5 	&		&	-3.040 	&	22.8 	&	25.2 	&	92.0 	&		&	-3.191 	&	22.8 	&	25.1 	&	92.0 	\\
	&	$\alpha_3$	&	-3.172 	&	22.7 	&	24.4 	&	92.7 	&		&	-1.635 	&	22.4 	&	24.4 	&	92.4 	&		&	-1.774 	&	22.4 	&	24.4 	&	92.4 	\\
	&	$\alpha_4$	&	4.188 	&	26.3 	&	28.1 	&	93.3 	&		&	2.227 	&	26.0 	&	28.1 	&	92.8 	&		&	2.401 	&	26.0 	&	28.1 	&	92.7 	\\
	&	$\tau_1$	&	-0.648 	&	17.7 	&	18.6 	&	93.1 	&		&	-0.535 	&	17.7 	&	19.0 	&	93.0 	&		&	-0.411 	&	17.7 	&	18.8 	&	93.1 	\\
	&	$\tau_2$	&	-0.894 	&	16.0 	&	17.2 	&	92.5 	&		&	-2.033 	&	16.0 	&	17.7 	&	92.3 	&		&	-2.087 	&	16.0 	&	17.6 	&	92.2 	\\
	&	$\tau_3$	&	1.071 	&	16.0 	&	17.0 	&	91.7 	&		&	-0.905 	&	16.0 	&	18.2 	&	90.0 	&		&	-0.868 	&	16.0 	&	18.2 	&	90.0	\\
	&	$\tau_4$	&	0.965 	&	19.2 	&	20.0 	&	94.4 	&		&	1.845 	&	19.3 	&	20.4 	&	94.0 	&		&	1.755 	&	19.3 	&	20.3 	&	94.0 	\\
	&	$\beta_2$	&	-0.553 	&	7.17 	&	7.32 	&	95.0 	&		&	-0.598 	&	7.19 	&	7.46 	&	94.7 	&		&	-0.564 	&	7.18 	&	7.43 	&	94.7 	\\
	&	$\gamma_1$	&	-0.023 	&	4.88 	&	4.99 	&	94.6 	&		&	-0.017 	&	4.88 	&	4.99 	&	94.6 	&		&	-0.017 	&	4.88 	&	4.99 	&	94.6 	\\
2000&	$\alpha_0$	&	-0.044 	&	3.04 	&	3.17 	&	93.4 	&		&	-0.047 	&	3.04 	&	3.18 	&	93.5 	&		&	-0.060 	&	3.04 	&	3.17 	&	93.8 	\\
	&	$\alpha_1$	&	1.270 	&	16.3 	&	16.6 	&	94.1 	&		&	1.809 	&	16.4 	&	16.7 	&	94.0 	&		&	1.985 	&	16.4 	&	16.7 	&	93.9 	\\
	&	$\alpha_2$	&	-1.136 	&	15.8 	&	16.3 	&	93.7 	&		&	-1.403 	&	15.9 	&	16.4 	&	93.9 	&		&	-1.548 	&	15.9 	&	16.4 	&	93.9 	\\
	&	$\alpha_3$	&	-1.065 	&	15.7 	&	16.8 	&	92.6 	&		&	-0.496 	&	15.6 	&	17.0 	&	92.5 	&		&	-0.583 	&	15.6 	&	16.9 	&	92.4 	\\
	&	$\alpha_4$	&	1.499 	&	18.2 	&	18.7 	&	93.3 	&		&	0.788 	&	18.2 	&	18.9 	&	93.6 	&		&	0.850 	&	18.2 	&	18.8 	&	93.8 	\\
	&	$\tau_1$	&	-0.061 	&	12.5 	&	12.6 	&	94.0 	&		&	0.065 	&	12.5 	&	12.8 	&	94.2 	&		&	0.152 	&	12.5 	&	12.8 	&	94.3 	\\
	&	$\tau_2$	&	-0.249 	&	11.3 	&	12.7 	&	91.9 	&		&	-0.841 	&	11.3 	&	12.4 	&	92.6 	&		&	-0.925 	&	11.3 	&	12.4 	&	92.5 	\\
	&	$\tau_3$	&	-0.208 	&	11.3 	&	11.8 	&	93.7 	&		&	-0.972 	&	11.2 	&	12.2 	&	92.3 	&		&	-0.967 	&	11.2 	&	12.2 	&	92.3 	\\
	&	$\tau_4$	&	0.344 	&	13.5 	&	14.2 	&	93.2 	&		&	0.694 	&	13.6 	&	14.4 	&	92.8 	&		&	0.626 	&	13.5 	&	14.4 	&	92.8 	\\
	&	$\beta_2$	&	-0.116 	&	5.06 	&	5.29 	&	94.1 	&		&	-0.122 	&	5.06 	&	5.33 	&	94.0 	&		&	-0.102 	&	5.06 	&	5.31 	&	94.0 	\\
	&	$\gamma_1$	&	-0.065 	&	3.46 	&	3.50 	&	94.7 	&		&	-0.065 	&	3.46 	&	3.50 	&	94.8 	&		&	-0.064 	&	3.46 	&	3.49 	&	94.8 	\\
\hline
\end{tabular}}											
\end{table}

\begin{table}
{\footnotesize
\caption*{Table S10: Simulation results (multiplied by 100) for Case 3 with $\nu=0.8$, $\alpha^*=(-1,3,-2,-2,3)^\top$, $C_n=\log\{\log (n)\}$
and $\epsilon_i\sim t(4)$.}							
\begin{tabular}{ccrccccrccccrcccc}
\hline
& &\multicolumn{4}{c}{Oracle}& &\multicolumn{4}{c}{SCAD}& &\multicolumn{4}{c}{MCP}\\				
\cline{3-6} \cline{8-11} \cline{13-16} 	
$n$& $\nu$ 	& Bias	&	SE	&	SD	&	CP	&		&	Bias	&	SE	&	SD	&	CP	&		&	Bias	&	SE	&	SD	&	CP	\\	
\hline
1000&	$\alpha_0$	&	0.086 	&	4.28 	&	4.35 	&	95.3 	&		&	0.117 	&	4.29 	&	4.40 	&	95.1 	&		&	0.085 	&	4.29 	&	4.39 	&	95.2 	\\
	&	$\alpha_1$	&	2.123 	&	23.1 	&	24.5 	&	92.5 	&		&	3.283 	&	23.4 	&	25.3 	&	92.2 	&		&	3.614 	&	23.4 	&	25.1 	&	92.1 	\\
	&	$\alpha_2$	&	-2.257 	&	22.4 	&	24.4 	&	92.2 	&		&	-2.761 	&	22.7 	&	24.8 	&	92.4 	&		&	-3.020 	&	22.7 	&	24.8 	&	92.4 	\\
	&	$\alpha_3$	&	-2.765 	&	22.6 	&	23.1 	&	93.0 	&		&	-1.171 	&	22.2 	&	24.1 	&	92.4 	&		&	-1.325 	&	22.2 	&	24.0 	&	92.4 	\\
	&	$\alpha_4$	&	3.856 	&	26.1 	&	27.2 	&	93.7 	&		&	1.807 	&	25.9 	&	27.9 	&	92.7 	&		&	1.964 	&	25.9 	&	27.8 	&	92.7 	\\
	&	$\tau_1$	&	-0.706 	&	17.6 	&	18.3 	&	93.5 	&		&	-0.632 	&	17.7 	&	18.8 	&	93.2 	&		&	-0.425 	&	17.6 	&	18.7 	&	93.3 	\\
	&	$\tau_2$	&	-0.579 	&	15.9 	&	16.7 	&	93.4 	&		&	-2.040 	&	16.0 	&	17.5 	&	92.2 	&		&	-2.166 	&	16.0 	&	17.4 	&	92.3 	\\
	&	$\tau_3$	&	1.006 	&	15.9 	&	16.4 	&	92.8 	&		&	-1.136 	&	15.9 	&	18.1 	&	90.1 	&		&	-1.193 	&	15.9 	&	18.1 	&	90.1 	\\
	&	$\tau_4$	&	0.950 	&	19.1 	&	19.6 	&	94.8 	&		&	2.099 	&	19.2 	&	20.3 	&	93.8 	&		&	1.964 	&	19.2 	&	20.3 	&	93.8 	\\
	&	$\beta_2$	&	-0.515 	&	7.16 	&	7.24 	&	94.9 	&		&	-0.614 	&	7.18 	&	7.44 	&	94.3 	&		&	-0.559 	&	7.17 	&	7.40 	&	94.4 	\\
	&	$\gamma_1$	&	-0.024 	&	4.88 	&	4.99 	&	94.7 	&		&	-0.018 	&	4.88 	&	5.00 	&	94.7 	&		&	-0.020 	&	4.88 	&	5.00 	&	94.7 	\\	
2000	&	$\alpha_0$	&	-0.051 	&	3.04 	&	3.12 	&	93.7 	&		&	-0.032 	&	3.04 	&	3.17 	&	93.8 	&		&	-0.047 	&	3.04 	&	3.16 	&	93.7 	\\
	&	$\alpha_1$	&	0.986 	&	16.2 	&	16.1 	&	94.6 	&		&	1.741 	&	16.4 	&	16.5 	&	94.1 	&		&	1.901 	&	16.4 	&	16.4 	&	94.2 	\\
	&	$\alpha_2$	&	-0.884 	&	15.7 	&	16.0 	&	94.3 	&		&	-1.274 	&	15.8 	&	16.2 	&	93.9 	&		&	-1.406 	&	15.9 	&	16.2 	&	94.1 	\\
	&	$\alpha_3$	&	-0.973 	&	15.7 	&	16.5 	&	93.3 	&		&	-0.062 	&	15.5 	&	16.9 	&	92.8 	&		&	-0.141 	&	15.6 	&	16.8 	&	92.9 	\\
	&	$\alpha_4$	&	1.440 	&	18.2 	&	18.5 	&	93.7 	&		&	0.333 	&	18.1 	&	18.8 	&	93.4 	&		&	0.399 	&	18.1 	&	18.7 	&	93.8 	\\
	&	$\tau_1$	&	-0.087 	&	12.5 	&	12.4 	&	94.5 	&		&	0.001 	&	12.5 	&	12.7 	&	94.6 	&		&	0.095 	&	12.5 	&	12.6 	&	94.4 	\\
	&	$\tau_2$	&	-0.039 	&	11.2 	&	12.4 	&	92.7 	&		&	-0.954 	&	11.3 	&	12.3 	&	92.2 	&		&	-1.015 	&	11.2 	&	12.3 	&	92.0 	\\
	&	$\tau_3$	&	-0.209 	&	11.2 	&	11.5 	&	94.3 	&		&	-1.268 	&	11.2 	&	12.2 	&	91.9 	&		&	-1.273 	&	11.2 	&	12.2 	&	91.9 	\\
	&	$\tau_4$	&	0.285 	&	13.5 	&	13.9 	&	93.7 	&		&	0.952 	&	13.5 	&	14.3 	&	93.1 	&		&	0.880 	&	13.5 	&	14.3 	&	93.1 	\\
	&	$\beta_2$	&	-0.091 	&	5.05 	&	5.22 	&	94.6 	&		&	-0.153 	&	5.06 	&	5.30 	&	94.2 	&		&	-0.129 	&	5.06 	&	5.28 	&	94.2 	\\
	&	$\gamma_1$	&	-0.065 	&	3.46 	&	3.50 	&	94.6 	&		&	-0.066 	&	3.46 	&	3.51 	&	94.7 	&		&	-0.066 	&	3.46 	&	3.51 	&	94.7 	\\
\hline
\end{tabular}}											
\end{table}

\end{document}